\documentclass[a4paper,10pt]{article}
\usepackage[utf8]{inputenc}
 \usepackage{amsmath,amsfonts,amssymb}
 \usepackage{amsthm}
 \usepackage{mathtools}
 \usepackage{amsopn}
 \usepackage{tabularx,lipsum,environ}
 \usepackage{paralist}
 \usepackage{microtype}
 \usepackage{authblk}
 \usepackage{hyperref}
 \usepackage{fancyhdr}
 \usepackage{rotating}
 \usepackage{overpic}
 \usepackage{ucs}
 \usepackage{enumerate}
 \usepackage{graphicx}
 \usepackage{booktabs}
 \usepackage{varwidth}
 \usepackage{subfig}

 \bibliographystyle{plainurl}
 
\title{Structurally Parameterized $d$-Scattered Set}
\author{Ioannis Katsikarelis}
\author{Michael Lampis}
\author{Vangelis Th. Paschos}
\affil{Université Paris-Dauphine, PSL Research University, CNRS, UMR 7243 \\ LAMSADE, 75016, Paris, France, \texttt{\{ioannis.katsikarelis|michail.lampis|paschos\}@lamsade.dauphine.fr}}

\theoremstyle{plain}
\newtheorem{theorem}{Theorem}
\newtheorem{lemma}[theorem]{Lemma}
\newtheorem{corollary}[theorem]{Corollary}

\newlength{\defbaselineskip}
\setlength{\defbaselineskip}{\baselineskip}
\newcommand{\setlinespacing}[2]%
           {\setlength{\baselineskip}{#1 \defbaselineskip}}

\newlength{\btw}
\setlength{\btw}{\textwidth} \addtolength{\btw}{0.2cm}
\newlength{\stw}
\setlength{\stw}{\textwidth} \addtolength{\stw}{-0.4cm}

\setlength{\textheight}{22cm}
\setlength{\textwidth}{15cm}
\setlength{\oddsidemargin}{0.6cm}
\setlength{\evensidemargin}{2cm}
\setlength{\topmargin}{0cm}

\newsavebox\tmpbox

\newcommand{\tw}{\ensuremath\textrm{tw}}
\newcommand{\pw}{\ensuremath\textrm{pw}}
\newcommand{\td}{\ensuremath\textrm{td}}
\newcommand{\vc}{\ensuremath\textrm{vc}}
\newcommand{\fvs}{\ensuremath\textrm{fvs}}
\newcommand{\ms}{\ensuremath\textrm{ms}}
\newcommand{\KC}{$(k,r)$-\textsc{Center}}

\newcommand{\dS}{$d$-\textsc{Scattered Set}}

\date{}

\begin{document}

\maketitle

\begin{abstract} In \dS\ we are given an (edge-weighted) graph and are
asked to select at least $k$ vertices, so that the distance between any pair is at least $d$, thus generalizing \textsc{Independent Set}. We provide upper and lower bounds on the complexity of this problem with respect to
various standard graph parameters. In particular, we show the following:
\begin{itemize}
\item For any $d\ge 2$, an $O^*(d^{\tw})$-time algorithm, where $\tw$ is the treewidth of the input graph and a tight SETH-based lower bound matching this algorithm's performance. These generalize known results for \textsc{Independent Set}.
\item \dS\ is W[1]-hard parameterized by vertex cover (for edge-weighted graphs), or feedback vertex set (for unweighted graphs), even if $k$ is an additional parameter.
\item A single-exponential algorithm parameterized by vertex cover for unweighted graphs, complementing the above-mentioned hardness.
\item A $2^{O(\td^2)}$-time algorithm parameterized by tree-depth ($\td$), as well as a matching ETH-based lower bound, both for unweighted graphs.
\end{itemize}
We complement these mostly negative results by providing an \emph{FPT
approximation scheme} parameterized by treewidth. In particular, we give an
algorithm which, for any error parameter $\epsilon>0$, runs in time
$O^*((\tw/\epsilon)^{O(\tw)})$ and returns a $d/(1+\epsilon)$-scattered set of size $k$, if a $d$-scattered set of the same size exists.
\end{abstract}

\section{Introduction}\label{sec_intro} 

In this paper we study the \dS\ problem: given graph $G=(V,E)$ and a metric weight
function $w:E\mapsto\mathbb{N}^+$ that gives the length of each
edge, we are asked if there exists a set $K$ of at least $k$ \emph{selections}
from $V$, such that the distance between any pair $v,u\in K$ is at least
$d(v,u)\ge d$, where $d(v,u)$ denotes the shortest-path distance from $v$ to
$u$ under weight function $w$. If $w$ assigns weight 1 to all edges, the
variant is called \emph{unweighted}.

The problem can already be seen to be hard, as it generalizes \textsc{Independent Set} (for $d=2$),
even to approximate (under standard complexity assumptions), i.e.\ the optimal $k$ cannot be approximated to $n^{1-\epsilon}$ in polynomial time \cite{Hastad96}, while an alternative name is
\textsc{Distance-$d$ Independent Set} \cite{EtoILM16,MontealegreT16,EtoGM14}.
This hardness prompts the analysis of the problem when the input graph is of restricted structure, our aim being to provide a comprehensive account of the complexity of \dS\ through various upper and lower bound results.
Our viewpoint is parameterized: we consider the well-known
structural parameters treewidth {\bf$\tw$}, tree-depth {\bf$\td$}, vertex cover number
{\bf$\vc$} and feedback vertex set number {\bf$\fvs$}, that comprehensively express the intended restrictions on the input graph's structure (as they range in size and applicability), while we examine both the
edge-weighted and unweighted variants of the problem.

\paragraph{Our contribution:} First, in Section
\ref{sec_tw_lb} we present a lower bound of $(d-\epsilon)^{\tw}\cdot n^{O(1)}$ on the complexity of any algorithm solving \dS\ parameterized by $\tw$, based on the
Strong Exponential Time Hypothesis (SETH \cite{ImpagliazzoP01,ImpagliazzoPZ01}). This result can be seen as a non-trivial
extension of the bound of $(2-\epsilon)^{\tw}\cdot n^{O(1)}$ for \textsc{Independent Set} (\cite{LokshtanovMS11a}) for larger values of $d$, for which the construction is required to be much more compact in terms of encoded information per unit of treewidth. Next, in Section \ref{sec_tw_dp} we provide a dynamic
programming algorithm of running time $O^*(d^\tw)$, matching this lower bound, over a given tree
decomposition of width $\tw$. The algorithm actually solves the counting version of \dS, making use of standard techniques (dynamic programming on tree decompositions), with an application of the fast subset convolution technique of \cite{BjorklundHKK07} (or \emph{state changes} \cite{BodlaenderLRV10,RooijBR09}) to bring the running time down to match the size of the dynamic programming tables. 

Having thus identified the complexity of the problem with respect to $\tw$, we next focus on the more restrictive parameters $\vc$ and $\fvs$ and we show in Section \ref{sec_vc} that the edge-weighted \dS\ problem parameterized by $\vc+k$ is
W[1]-hard. If, on the other hand, all edge-weights are set to 1, then \dS\ (the unweighted variant) parameterized by $\fvs+k$ is W[1]-hard. Our reductions also imply lower bounds based on the Exponential Time Hypothesis (ETH
\cite{ImpagliazzoP01,ImpagliazzoPZ01}), yet we do not believe these to be tight, due to the quadratic increase in parameter size (as the construction's focus lies on the edges). One observation we can make is that there are few cases where we can expect to obtain an FPT algorithm without bounding the value of $d$.

We complement these results with a single-exponential algorithm for the
unweighted variant, of running time $O^*(3^{\vc})$ for the case of even $d$,
while for odd $d$ the running time is $O^*(4^{\vc})$. The algorithm is based on
defining a sub-problem based on a variant of \textsc{Set Packing} that we solve
via dynamic programming. The difference in running times, depending on the
parity of $d$, is due to the number of possible situations for a vertex with
respect to potential candidates for selection.

Further, for the unweighted variant we also show in
Section \ref{sec_td} the existence of an algorithm parameterized by $\td$ of running time $O^*(2^{O(\td^2)})$, as well as a matching ETH-based lower bound. The upper bound follows from known connections between the tree-depth
of a graph and its diameter, while the lower bound comes from a reduction from
\textsc{3-SAT}.

Finally, we turn again to $\tw$ in Section \ref{sec_tw_approx} and we present a
fixed-parameter-tractable approximation scheme (FPT-AS) on $d$ of running time
$O^*((\tw/\epsilon)^{O(\tw)})$, that finds a $d/(1+\epsilon)$-scattered set of size $k$, if a $d$-scattered set of the same size exists.  The algorithm is based on a
rounding technique introduced in \cite{Lampis14} and can be much faster than any exact algorithm for the problem (for large $d$, i.e.\ $d\ge
O(\log n)$), even for the unweighted case and more restricted parameters.
Figure \ref{fig:param_relations} illustrates the relationships between
considered parameters and summarizes our results.

\paragraph{Related work:}  Our work can be considered as a
continuation of the investigations in \cite{KatsikarelisLP17}, where the \KC\
problem is similarly studied with respect to several well-known structural
parameters and a number of fine-grained upper/lower bounds is presented, while
some of the techniques employed for our SETH lower bound are also present in
\cite{BorradaileL16}.

The SETH-based lower bound of $(2-\epsilon)^{\tw}\cdot n^{O(1)}$ on the running time of any algorithm for \textsc{Independent Set} parameterized by $\tw$ comes from \cite{LokshtanovMS11a}.
For \dS, Halld\'{o}rsson et al.\ \cite{HalldorssonKT00} showed a tight inapproximability ratio of $n^{1-\epsilon}$ for even $d$ and $n^{1/2-\epsilon}$ for odd $d$, while Eto et al.\
\cite{EtoILM16} showed that on $r$-regular graphs the problem is APX-hard for $r,d\ge3$, while also
providing polynomial-time $O(r^{d-1})$-approximations and a polynomial-time
approximation scheme (PTAS) for planar graphs.
For a class of graphs with at
most a polynomial (in $n$) number of minimal separators, \dS\ can be solved in
polynomial time for even $d$, 
while it remains NP-hard on chordal graphs (contained in the class) and
any odd $d\ge3$ \cite{MontealegreT16}.
It remains NP-hard even for planar bipartite graphs
of maximum degree 3, while a 1.875-approximation is available on cubic graphs
\cite{EtoILM17}. 
Several hardness results for planar and chordal (bipartite) graphs can be found in
\cite{EtoGM14}, while \cite{FominLRS11} shows the problem admits an
EPTAS on (apex)-minor-free graphs, based on the theory of bidimensionality.
Finally, on a related result, \cite{MarxP15} shows an
$n^{O(\sqrt{k})}$-time algorithm for planar graphs, making use of Voronoi diagrams and based on ideas previously used to obtain geometric QPTASs.

\begin{figure}[htbp]
\centerline{\includegraphics[width=110mm]{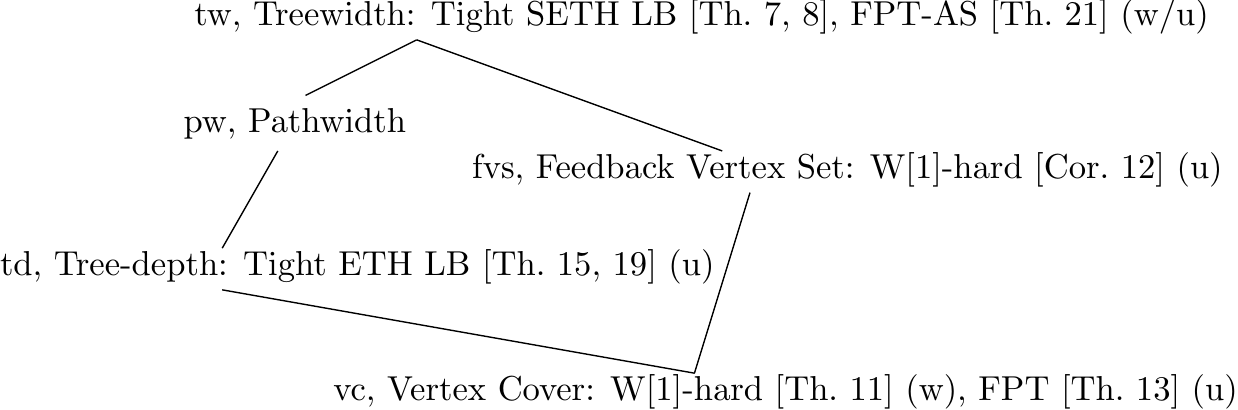}}
\caption{Relationships between parameters and an overview of our results (with theorem numbers, for weighted/unweighted variants). In the downwards direction (from $\tw$ to $\vc$) parameter size increases and algorithmic results are inherited, while hardness results are inherited upwards.}
\label{fig:param_relations}
\end{figure}

\section{Definitions and Preliminaries}\label{sec_prelim}
We use standard graph-theoretic notation. For a graph $G=(V,E)$, $n=|V|$
denotes the number of vertices, $m=|E|$ the number of edges, an edge $e\in E$
between $u,v\in V$ is denoted by $(u,v)$, and for a subset $X\subseteq V$,
$G[X]$ denotes the graph induced by $X$.
The functions $\lfloor x\rfloor$ and $\lceil x\rceil$, for $x\in\mathbb{R}$, denote the maximum integer that is not larger and the minimum integer that is not smaller than $x$, respectively.
Further, we assume the reader has some familiarity with standard definitions
from parameterized complexity theory (see
\cite{CyganFKLMPPS15,FlumG06,DowneyF13}).

For a parameterized problem with
parameter $k$, an FPT-AS is an algorithm which for any $\epsilon>0$ runs in
time $O^*(f(k,\frac{1}{\epsilon}))$ (i.e.\ FPT time when parameterized by
$k+\frac{1}{\epsilon}$) and produces a $(1+\epsilon)$-approximation (see \cite{Marx08}). We use $O^*(\cdot)$ to
imply omission of factors polynomial in $n$.
In this paper we present
approximation schemes with running times of the form $(\log
n/\epsilon)^{O(k)}$. These can be seen to imply an FPT running time by a
well-known win-win argument: 
\begin{lemma}\label{lem:fpt-logn}
If a parameterized problem with parameter $k$ admits, for some $\epsilon>0$, an
algorithm running in time $O^*((\log n /\epsilon)^{O(k)})$, then it also admits an
algorithm running in time $O^*((k/\epsilon)^{O(k)})$.
\end{lemma}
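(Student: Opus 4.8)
\medskip
\noindent The plan is not to construct a new algorithm, but rather to \emph{re-analyze} the one we are given: I claim that any algorithm with running time $O^*((\log n/\epsilon)^{ck})$, for a constant $c$, automatically runs in time $O^*((k/\epsilon)^{O(k)})$, so running it unchanged already proves the lemma. The heart of the matter is that the factor $(\log n)^{ck}$ can always be traded for a polynomial in $n$ times a $k^{O(k)}$ factor. Writing the running time as $(\log n)^{ck}\cdot(1/\epsilon)^{ck}\cdot n^{O(1)}$, it therefore suffices to establish an inequality of the shape $(\log n)^{ck}\le(2k)^{ck}\cdot n^{c}$, since plugging this in turns the bound into $(2k/\epsilon)^{ck}\cdot n^{O(1)}=O^*((k/\epsilon)^{O(k)})$.

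For this inequality --- which is the usual ``small $n$ versus large $n$'' win--win in disguise --- the cleanest route I would take is the elementary bound $\ln x\le x^{\delta}/\delta$, valid for every $x>0$ and every $\delta>0$ (a one-variable exercise, from the sign of the derivative of $\ln x-x^{\delta}/\delta$). Applying it with $\delta=1/k$ gives $\log n=\ln n/\ln 2\le k\,n^{1/k}/\ln 2\le 2k\,n^{1/k}$, and raising both sides to the power $ck$ yields exactly $(\log n)^{ck}\le(2k)^{ck}\,n^{c}$. No case analysis is then needed.

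I do not anticipate a real obstacle; the one point requiring care --- and essentially the whole content of the lemma --- is that the exponent $\delta$ must be chosen to depend on $k$ (here $\delta=1/k$), so that the constant multiplying $n^{c}$ comes out as $k^{O(k)}$ rather than, say, $2^{\Theta(k^{2})}$. If one prefers to avoid the analytic inequality, the same bound follows from a two-case split on whether $k\le\log n/\log\log n$: if so, $(\log n)^{ck}=2^{ck\log\log n}\le2^{c\log n}=n^{c}$ directly; otherwise a few lines show that $n$ must be bounded by a function of $k$ with $n\le k^{O(k)}$, so that again all factors collapse to $(k/\epsilon)^{O(k)}$. I would present the first route, since it reduces the proof to a single substitution.
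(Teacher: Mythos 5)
Your proof is correct, and it takes a genuinely different route from the paper's. The paper uses the standard win--win case split: if $k\le\sqrt{\log n}$ then $(\log n)^{O(k)}=(\log n)^{O(\sqrt{\log n})}=n^{o(1)}$ is absorbed into the polynomial factor, while if $k>\sqrt{\log n}$ then $\log n\le k^2$ and the whole expression becomes $(k/\epsilon)^{O(k)}$ directly. You instead avoid any case analysis by proving the uniform inequality $(\log n)^{ck}\le(2k)^{ck}\,n^{c}$ via $\ln x\le x^{\delta}/\delta$ with the $k$-dependent choice $\delta=1/k$ --- and your observation that the whole content of the lemma lives in letting $\delta$ depend on $k$ is exactly right. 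Both arguments are short and elementary and reach the same conclusion (your alternative two-case split at $k\le\log n/\log\log n$ is essentially the paper's argument with a different threshold). What your main route buys is a single clean substitution with no branching and an explicit constant in front of $n^{c}$; what the paper's route buys is familiarity --- it is the form of the argument readers of the parameterized complexity literature will expect --- at the cost of a (harmless) case distinction. Either is acceptable as a proof of the lemma.
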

\begin{proof}
   We consider two cases: if $k\le \sqrt{\log n}$ then $(\log n/\epsilon)^{O(k)} =
  (1/\epsilon)^{O(k)} (\log n)^{O(\sqrt{\log n})} = O^*((1/\epsilon)^{O(k)})$. If
  on the other hand, $k> \sqrt{\log n}$ we have $\log n \le k^2$, so $O^*((\log n
  /\epsilon)^{O(k)}) = O^*((k/\epsilon)^{O(k)})$.
\end{proof}

\emph{Treewidth} and \emph{pathwidth} are standard notions in parameterized
complexity that measure how close a graph is to being a tree or path (\cite{Bodlaender00,Bodlaender06,Kloks94}).
A \emph{tree decomposition} of a graph $G=(V,E)$ is a pair $(\mathcal{X},T)$ with $T=(I,F)$ a tree and $\mathcal{X}=\{X_i|i\in I\}$ a family of subsets of $V$ (called \emph{bags}), one for each node of $T$, with the following properties:
 \begin{enumerate}[1)]
  \item $\bigcup_{i\in I}X_i=V$;
  \item for all edges $(v,w)\in E$, there exists an $i\in I$ with $v,w\in X_i$;
  \item for all $i,j,k\in I$, if $j$ is on the path from $i$ to $k$ in $T$, then $X_i\cap X_k\subseteq X_j$.
 \end{enumerate}
 The \emph{width} of a tree decomposition $((I,F),\{X_i|i\in I\})$ is $\max_{i\in I}|X_i|-1$. The \emph{treewidth} of a graph $G$ is the minimum width over all tree decompositions of $G$, denoted by $\textrm{tw}(G)$.

Moreover, for rooted $T$, let $G_i=(V_i,E_i)$ denote the \emph{terminal subgraph} defined by node $i\in I$, i.e.\ the induced subgraph of $G$ on all vertices in bag $i$ and its descendants in $T$. Also let $N_{i}(v)$ denote the neighborhood of vertex $v$ in $G_i$ and $d_i(u,v)$ denote the distance between vertices $u$ and $v$ in $G_i$, while $d(u,v)$ (absence of subscript) is the distance in $G$.

In addition, a tree decomposition can be converted to a \emph{nice} tree decomposition of the same width (in $O(\textrm{tw}^2\cdot n)$ time and with $O(\textrm{tw}\cdot n)$ nodes): the tree here is rooted and binary, while nodes can be of four types: 
\begin{enumerate}[a)]
 \item Leaf nodes $i$ are leaves of $T$ and have $|X_i|=1$;
 \item Introduce nodes $i$ have one child $j$ with $X_i=X_j\cup\{v\}$ for some vertex $v\in V$ and are said to \emph{introduce} $v$;
 \item Forget nodes $i$ have one child $j$ with $X_i=X_j\setminus\{v\}$ for some vertex $v\in V$ and are said to \emph{forget} $v$;
 \item Join nodes $i$ have two children denoted by $i-1$ and $i-2$, with $X_i=X_{i-1}=X_{i-2}$.
\end{enumerate}
Nice tree decompositions were introduced by Kloks in \cite{Kloks94} and using them does not in general give any additional algorithmic possibilities, yet algorithm design becomes considerably easier.

Pathwidth is similarly defined and the only difference in the above definitions is that trees are restricted to being paths. Additionally, we will require the equivalent definition of pathwidth via the \emph{mixed search number} $\ms(G)$. In a \emph{mixed search game}, a graph $G$ is considered as a system of tunnels. Initially, all edges are contaminated by a gas and an edge is \emph{cleared} by placing searchers at both its endpoints simultaneously or by sliding a searcher along the edge. A cleared edge is re-contaminated if there is a path from a contaminated edge to the cleared edge without any searchers on its vertices or edges. A search is a sequence of operations that can be of the following types: (a) placement of a new searcher on a vertex; (b) removal of a searcher from a vertex; (c) sliding a searcher on a vertex along an incident edge and placing the searcher on the other end. A search strategy is winning if after its termination all edges are cleared. The mixed search number of $G$, denoted by $\ms(G)$, is the minimum number of searchers required for a winning strategy of mixed searching on $G$.

\begin{lemma}\label{lem:mixed_search} \cite{TAKAHASHI1995253} For a graph $G$, it is $\pw(G)\le\ms(G)\le\pw(G)+1$. \end{lemma}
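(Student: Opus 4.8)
The plan is to go through the \emph{node search number} $\textrm{ns}(G)$, the minimum number of searchers needed to win the node-searching game --- the variant of the game above in which edges are cleared only by simultaneously occupying both endpoints and sliding is not allowed --- together with the classical identity $\textrm{ns}(G)=\pw(G)+1$. (The latter follows from Kirousis and Papadimitriou's analysis of node searching, the monotonicity of the node-searching game, and Kinnersley's characterization of pathwidth as the vertex separation number; an alternative route, the one taken by Takahashi, Ueno and Kajitani, goes through the \emph{proper pathwidth} $\textrm{ppw}(G)$, using $\ms(G)=\textrm{ppw}(G)$ and $\pw(G)\le\textrm{ppw}(G)\le\pw(G)+1$, but the node-search route is more self-contained.) Given $\textrm{ns}(G)=\pw(G)+1$, it suffices to sandwich $\ms(G)$ between $\textrm{ns}(G)-1$ and $\textrm{ns}(G)$.

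First I would prove $\ms(G)\le\textrm{ns}(G)$, which is immediate: the mixed-searching game allows every move of the node-searching game (placement and removal of searchers) and in addition allows sliding, so a winning node-search strategy is, unchanged, a winning mixed-search strategy with the same number of searchers. This already yields $\ms(G)\le\textrm{ns}(G)=\pw(G)+1$.

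The substantive direction is $\textrm{ns}(G)\le\ms(G)+1$, which gives $\pw(G)=\textrm{ns}(G)-1\le\ms(G)$. I would start from an optimal \emph{monotone} mixed-search strategy on $k=\ms(G)$ searchers (monotonicity of mixed searching is known, so recontamination never needs to occur) and simulate it by a node-search strategy that keeps one spare searcher in reserve. Placements and removals are copied verbatim; a slide of a searcher from $u$ along edge $e=(u,v)$ is replaced by the two moves ``place the spare searcher on $v$'', then ``remove the searcher from $u$''. After this pair of moves $v$ is occupied, so $e$ is cleared under the node rule exactly as it was under the mixed rule, and at no moment are more than $k+1$ searchers on the graph. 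One then verifies, by induction on the move sequence, the invariant that the set of cleared edges in the simulation contains the set of cleared edges in the original run; since the original run ends with every edge cleared, so does the simulation, and $\textrm{ns}(G)\le k+1$. Chaining with $\textrm{ns}(G)=\pw(G)+1$ and the first inequality gives $\pw(G)\le\ms(G)\le\pw(G)+1$.

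The main obstacle is the inductive step for a slide move: one must check that occupying the far endpoint $v$ \emph{before} vacating the near endpoint $u$ keeps the set of occupied vertices a separator between the still-contaminated edges and $e$, so that no recontamination is introduced in the simulated run. Having passed to a monotone optimal strategy up front makes this a purely local check --- in the original run the occupied set already separates the contaminated part from $e$ just before the slide --- whereas without monotonicity one would have to reason more globally about how the contaminated region evolves over the whole run.
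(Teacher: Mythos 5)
The paper does not actually prove this lemma: it is stated as a known result and attributed to Takahashi, Ueno and Kajitani, whose own argument (as you correctly note) runs through the proper pathwidth, via $\ms(G)=\textrm{ppw}(G)$ and $\pw(G)\le\textrm{ppw}(G)\le\pw(G)+1$. Your route through the node search number is therefore genuinely different from the cited source, and it is sound: $\ms(G)\le\textrm{ns}(G)$ is immediate since every node-search move is a legal mixed-search move with the same clearing and recontamination rules, and the simulation argument for $\textrm{ns}(G)\le\ms(G)+1$ is correct --- replacing a slide from $u$ along $(u,v)$ by ``place spare on $v$, then remove from $u$'' clears $(u,v)$ under the node rule while momentarily using at most $k+1$ searchers, and starting from a monotone optimal mixed strategy (monotonicity of mixed searching being a known theorem) guarantees that the slide in the original run only occurs when $u$'s other incident edges are already clear, so vacating $u$ causes no recontamination in the simulation either; the invariant that the simulated cleared set contains the original cleared set then goes through by induction. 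The trade-off between the two routes is what you import as a black box: yours leans on $\textrm{ns}(G)=\pw(G)+1$ (Kirousis--Papadimitriou/Kinnersley) plus monotonicity of mixed search, whereas the Takahashi--Ueno--Kajitani route leans on the combinatorial characterization of proper pathwidth; neither is more elementary in an absolute sense, but yours has the advantage of reducing everything to a purely local simulation of individual moves. One small presentational caveat: you should state explicitly which formulation of the slide move you use (whether a slide from $u$ with other contaminated edges at $u$ is forbidden or merely useless), since the local check in your last paragraph implicitly assumes the latter and discharges it only via monotonicity.
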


We will also use the parameters
\emph{vertex cover number} and \emph{feedback vertex set number}  of a graph
$G$, which are the sizes of the minimum vertex set whose deletion leaves the
graph edgeless, or acyclic, respectively. Finally, we will consider the related
notion of \emph{tree-depth} \cite{NesetrilM06}, which is defined as the minimum
height of a rooted forest whose completion (the graph obtained by connecting
each node to all its ancestors) contains the input graph as a subgraph. We will
denote these parameters for a graph $G$ as
$\tw(G),\pw(G),\vc(G),\fvs(G)$, and $\td(G)$, and will omit $G$ if it is
clear from the context. We recall the following well-known relations between these parameters which justify the hierarchy given in Figure \ref{fig:param_relations}:

\begin{lemma}\label{lem:relations} \cite{BodlaenderGHK95,CourcelleO00} For any
graph $G$ we have $\tw(G) \le \pw(G) \le \td(G) \le \vc(G)$, $\tw(G)\le \fvs(G)
\le vc(G)$.  \end{lemma}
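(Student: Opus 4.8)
The plan is to establish each inequality by directly exhibiting the decomposition or elimination forest it asserts to exist; all of these relations are folklore (hence the citations), but it is worth recording the constructions. Two of them are immediate from the definitions: every path decomposition is in particular a tree decomposition, so $\tw(G)\le\pw(G)$; and every vertex cover is a feedback vertex set, since deleting it leaves an edgeless (hence acyclic) graph, so $\fvs(G)\le\vc(G)$. For $\tw(G)\le\fvs(G)$ I would take a minimum feedback vertex set $S$, note that $G-S$ is a forest and therefore has a tree decomposition of width at most $1$, and then insert all of $S$ into every bag: properties (1)--(3) of a tree decomposition are clearly preserved and the width increases by $|S|$. For $\td(G)\le\vc(G)$ I would take a minimum vertex cover $S=\{s_1,\dots,s_c\}$ and use the rooted forest obtained from the path $s_1 - s_2 - \cdots - s_c$ by attaching every vertex of $V\setminus S$ as a further child of $s_c$; its closure contains every edge of $G$ (an edge inside $S$ joins two comparable vertices, an edge from some $s_i$ to a vertex $v\notin S$ joins $s_i$ to a descendant of it, and no edge avoids $S$), and a root-to-leaf path uses $c+1$ vertices.

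The one inequality that needs an actual argument is $\pw(G)\le\td(G)$, which I would prove by induction on the height $h$ of an optimal elimination forest $F$ of $G$. In the base case $G$ is edgeless and $\pw(G)=0$. For the inductive step, delete the root of every tree of $F$: what remains is, on each resulting piece, an elimination forest of height $h-1$ for the corresponding induced subgraph, so by induction each piece admits a path decomposition of width at most $h-2$. I would then concatenate all these path decompositions, ordering the pieces so that those belonging to the same tree $T$ of $F$ form a consecutive block, and finally insert the root $r_T$ of each tree $T$ into every bag that originated from $T$. The result is a valid path decomposition of $G$: the bags containing a given $r_T$ form a contiguous interval because $T$'s pieces were kept consecutive; every edge of $G$ incident to a root $r_T$ must, since $r_T$ has no ancestors and $G$ is a subgraph of the closure of $F$, join $r_T$ to one of its descendants, which lies in some piece of $T$ to whose bags $r_T$ was added; and every remaining edge lies inside a single piece and is covered by the inductive hypothesis. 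Each bag grew by exactly one vertex, so the width is at most $h-1\le\td(G)$.

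I do not expect a genuine obstacle: the construction for $\pw\le\td$ is the only step with moving parts, and the subtle point there is purely organisational --- one must order the concatenation of the recursively obtained path decompositions so that each deleted root is reinserted into a \emph{contiguous} block of bags, after which the bag-size bookkeeping is routine. The only other thing to keep an eye on is the mismatch between the conventions for these parameters (tree-depth counts the vertices, or in some sources the edges, on a longest root-to-leaf path, whereas tree- and path-width count the largest bag minus one), which affects the inequalities only by a harmless additive constant and never the orders of magnitude relevant to the algorithmic consequences drawn from Figure~\ref{fig:param_relations}.
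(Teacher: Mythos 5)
The paper does not actually prove this lemma --- it is stated as known and delegated to the citations \cite{BodlaenderGHK95,CourcelleO00} --- so there is no in-paper argument to compare against; what you have written is the standard folklore chain of constructions, and it is essentially sound. The steps $\tw\le\pw$, $\fvs\le\vc$, $\td\le\vc$, and the inductive proof of $\pw\le\td$ (recurse on the subtrees below the roots, keep each tree's pieces consecutive in the concatenated path decomposition, reinsert the root into that contiguous block of bags) are all correct; your closing caveat about vertex- versus edge-counting for tree-depth is also the right one to make, since the paper itself adopts the convention $\td(K_1)=0$ in the proof of Lemma \ref{lem:td-diam}, under which your vertex-cover forest with $c+1$ vertices on a root-to-leaf path gives exactly $\td\le\vc$.

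The one place where your construction does not deliver the inequality as literally stated is $\tw\le\fvs$. Inserting a minimum feedback vertex set $S$ into every bag of a width-$1$ tree decomposition of the forest $G-S$ yields bags of size $|S|+2$, i.e.\ $\tw(G)\le\fvs(G)+1$, not $\tw(G)\le\fvs(G)$ --- and the stronger bound is in fact false as written (any cycle has $\tw=2$ and $\fvs=1$). This is an off-by-one in the lemma rather than a flaw in your argument, and it is harmless for everything the paper derives from Figure \ref{fig:param_relations}, but you present the bag-augmentation as though it established the stated bound even though you yourself compute that ``the width increases by $|S|$'' on top of width $1$. You should either record the inequality as $\tw\le\fvs+1$ or say explicitly that this additive constant is being suppressed, as you already do for the tree-depth convention.
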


The \textsc{Set Packing} problem is defined as follows: given an integer $k$, a \emph{universe} $\mathcal{U}=\{u_1,\dots,u_n\}$ of \emph{elements} and a family $\mathcal{S}=\{S_1,\dots,S_m\}$ of \emph{subsets} of $\mathcal{U}$, is there a subfamily $S\subseteq \mathcal{S}$ of subsets (a \emph{packing}), such that all sets in $S$ are pairwise disjoint, and the size of the packing is $|S|\ge k$?

Finally, \textsc{$k$-Multicolored Independent Set} is a well-known W[1]-complete problem (see \cite{CyganFKLMPPS15}) and is defined as follows: we are given a graph $G=(V,E)$, with $V$ partitioned into $k$ independent sets $V=V_1\uplus \dots\uplus V_k$, $|V_i|=n, \forall i\in[1,k]$, where $E$ only contains edges between vertices of sets $V_i,V_j$ with $i\not=j$ and we are asked to find a subset $S\subseteq V$, such that $G[S]$ forms an independent set and $|S\cap V_i|=1,\forall i\in[1,k]$.

We also recall here the two main complexity assumptions used in this paper
\cite{ImpagliazzoP01,ImpagliazzoPZ01}. The Exponential Time Hypothesis (ETH)
states that 3-\textsc{SAT} cannot be solved in time $2^{o(n+m)}$ on instances
with $n$ variables and $m$ clauses. The Strong Exponential Time Hypothesis
(SETH) states that for all $\epsilon>0$, there exists an integer $q$ such that
$q$-\textsc{SAT} (where $q$ is the maximum size of any clause) cannot be solved in time $O((2-\epsilon)^n)$.

\section{Treewidth: SETH Lower Bound}\label{sec_tw_lb}

In this section we show that for any fixed $d>2$, the existence of
any algorithm for the \dS\ problem of running time $O^*((d-\epsilon)^{\tw})$,
for some $\epsilon>0$, would imply the existence of some algorithm for
\textsc{$q$-SAT} on instances with $n$ variables, of running time $O^*((2-\delta)^n)$, for
some $\delta>0$ and any $q\ge3$.
First, let us briefly summarize the reduction for the SETH
lower bound of $(2-\epsilon)^{\tw}$ for \textsc{Independent Set} from
\cite{LokshtanovMS11a}. The reduction is based on the construction of $n$ paths
(one for each variable) on $2m$ vertices each, conceptually divided into $m$
pairs of vertices (one for each clause), with each vertex signifying assignment
of value 0 or 1 to the corresponding variable. A gadget is introduced for each
clause, connected to the vertex of some path that signifies the assignment to
the corresponding variable that would satisfy the clause. The pathwidth of the graph (and thus also its treewidth) is (roughly) equal to the
number of paths and so a correspondence between a satisfying
assignment and an independent set can be established, meaning an
$O^*((2-\epsilon)^{\tw})$-time algorithm for \textsc{Independent Set} would
imply an $O^*((2-\epsilon)^n)$-time algorithm for \textsc{SAT}, for any
$\epsilon>0$.

Intuitively, the reduction for \textsc{Independent Set} needs to ``embed'' the $2^n$ possible variable assignments into the $2^{\tw}$ states of some optimal dynamic program for the problem, while in our lower bound construction for \dS\ we need to be able to encode these $2^n$ assignments by $d^{\tw}$ states and thus there can be no one-to-one correspondence between a variable and only one vertex in some bag of the tree decomposition (that the optimal dynamic program might assign states to); instead, every vertex included in some bag must carry information about the assignment for a \emph{group} of variables. Furthermore, as now $d>2$, in order to make the converse direction of our reduction to work, we need to make our paths sufficiently long to ensure that any solution will eventually settle into a pattern that encodes a consistent assignment, as the optimal $d$-scattered set may ``cheat'' by not selecting the same vertex from each part of some long path (periodically), a situation that would imply a different assignment for the appearances of the same variable for two different clauses (see also \cite{BorradaileL16} and the SETH-based lower bound for \textsc{Dominating Set} from \cite{LokshtanovMS11a}).

 \paragraph{Clause gadget $\hat{C}$:} We first describe the construction of our clause gadget $\hat{C}$: this gadget has $N$ \emph{input} vertices and its purpose is to only allow for selection of one of these in any $d$-scattered set, along with another, standard selection. Given vertices $v_1,\dots,v_N$, we first make $N$ paths $A_i=(a^1_i,\dots,a^{\lfloor d/2\rfloor-1}_i),\forall i\in[1,N]$ on $\lfloor d/2\rfloor-1$ vertices. We connect vertices $a^1_i$ to inputs $v_i$, while only for even $d$, we also make all vertices $a^{\lfloor d/2\rfloor-1}_i$ into a clique (all other endpoints of each path). We then make a path $B=(b^1,\dots,b^{\lceil d/2\rceil+1})$ and we connect its endpoint $b^{\lceil d/2\rceil+1}$ to all $a^{\lfloor d/2\rfloor-1}_i$.  Observe that any $d$-scattered set can only include one of the input vertices (as the distance between them is $d-1$) and the vertex $b_1$, being the only option at distance $d$ from all inputs.
 
\begin{figure}[htbp]
\centerline{\includegraphics[width=50mm]{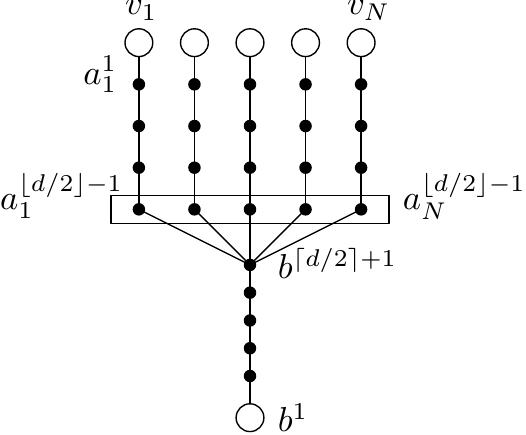}}
\caption{A general picture of the clause gadget $\hat{C}$ for even and odd $d$. Note the box
indicating vertices forming a clique for the case of even $d$.}
\label{fig:clause_gadget2} \end{figure}

 \paragraph{Construction:} We will describe the construction of a graph $G$, given some
 $\epsilon<1,q\ge3,d>2$ and an instance $\phi$ of \textsc{$q$-SAT} with $n$ variables, $m$
 clauses and at most $q$ variables per clause. We first choose an integer $p=\lceil\frac{1}{(1-\lambda)\log_2(d)}\rceil$, for $\lambda=\log_{d}(d-\epsilon)<1$
 (i.e.\ $p$ depends only on $d$ and $\epsilon$) and then group
 the variables of $\phi$ into $t=\lceil\frac{n}{\gamma}\rceil$ groups
 $F_1,\dots,F_t$, for $\gamma=\lfloor\log_2(d)^p\rfloor$, being also the maximum
 size of any such group.
 
 For each group
 $F_{\tau}$ of variables of $\phi$, with $\tau\in[1,t]$, we make a simple gadget
 $\hat{G}_{\tau}^1$ that consists of $p$ paths $P_{\tau}^l=(p_1^l,\dots,p_d^l)$
 on $d$ vertices each, for $l\in[1,p]$. We then make $m(tp(d-1)+1)$ copies of
 this ``column'' of $t$ gadgets $\hat{G}_{1}^1,\dots,\hat{G}_t^1$ (i.e.\ $t$ vertically arranged gadgets), that we
 connect horizontally (so that we have $tp$ ``long paths''): we connect each
 last vertex $p_d^l$ from a gadget $\hat{G}_{\tau}^j$ to vertex $p_1^l$ from the
 following gadget $\hat{G}_{\tau}^{j+1}$, for all $l\in[1,p]$, $\tau\in[1,t]$
 and $j\in[1,m(tp(d-1))]$ (see Figure \ref{fig:global_construction}~(b) for an example).
 
 Next, for every clause $C_{\mu}$, with $\mu\in[1,m]$, we make $tp(d-1)+1$ copies of the clause gadget $\hat{C}^j$, for $j\in[1,m(tp(d-1)+1)]$, where for each $\mu\in[1,m]$, the number of inputs in the $tp(d-1)+1$ copies is $N=q_{\mu}d^p/2$, where $q_{\mu}$ is the number of literals in clause $C_{\mu}$. One clause is assigned to each column of gadgets, so that the first $m$ columns correspond to one clause each, with $tp(d-1)+1$ repetitions of this pattern giving the complete association. Then, for every $\tau\in[1,t]$ we associate a set $S_{\tau}\subset\bigcup_{l\in[1,p]} P_{\tau}^l$, that contains exactly one vertex from each of the $p$ paths in $\hat{G}_{\tau}^j$, with an assignment to the variables in group $F_{\tau}$. As there are at most $2^{\gamma}=2^{\lfloor\log_2(d)^p\rfloor}$ assignments to the variables in $F_{\tau}$ and $d^p\ge2^{\gamma}$ such sets $S_{\tau}$, the association can be unique for each $\tau$ (i.e.\ for each \emph{row} of gadgets). Now, for every literal appearing in clause $C_{\mu}$, exactly half of the partial assignments to the group $F_{\tau}$ in which the literal's variable appears will satisfy it and thus, each of the $q_{\mu}d^p/2$ input vertices of the clause gadget will correspond to one literal and one assignment to the variables of the group that satisfy it.
 
 Let $v$ be an input vertex of a clause gadget $\hat{C}^j$, corresponding to a literal of clause $C_{\mu}$ that is satisfied by a partial assignment to the variables of group $F_{\tau}$ that is associated with set $S_{\tau}\subset \bigcup_{l\in[1,p]} P_{\tau}^l$, containing exactly one vertex from each path $P_{\tau}^l,l\in[1,p]$, from gadget $\hat{G}_{\tau}^j$. For even $d$, we then make a path $w_1,\dots,w_{d/2-1}$ on $d/2-1$ vertices, connecting vertex $w_1$ to $v$ and for each vertex $p_i^l\notin S$ of each path $P_{\tau}^l\in\hat{G}_{\tau}^j$ we also make a path $y_1,\dots,y_{d/2-1}$ on $d/2-1$ vertices, attaching endpoint $y_1$ to its corresponding path vertex $p_i^l$, while the other endpoints $y_{d/2-1}$ are all attached to vertex $w_{d/2-1}$ and to each other (into a clique). For odd $d$, we make a similar construction for each such $v$, only the number of vertices in constructed paths is now $\lfloor d/2\rfloor$ instead of $d/2-1$ and vertices $y_{\lfloor d/2\rfloor}$ are not made into a clique. Thus every input vertex $v$ of some clause gadget is at distance exactly $d-1$ from every path vertex that does not belong to the set associated with its corresponding partial assignment (and thus exactly $d$ from the only vertex per path that is), while the distances between any pair of other (i.e.\ intermediate) vertices via these paths are $\le d-1$. This concludes our construction, while Figure \ref{fig:global_construction} provides illustrations of the above.

In this way, a satisfying assignment for $\phi$ would correspond to a $d$-scattered set that selects the vertices in each gadget $\hat{G}_{\tau}$ that match the partial assignment $S_{\tau}$ for that group's variables $F_{\tau}$ in all $m(tp(d-1)+1)$ columns, along with the corresponding input vertex from each clause gadget (implying the existence of a satisfied literal within the clause). On the other hand, for any $d$-scattered set of size $(tp+2)m(tp(d-1)+1)$ in $G$, the maximum number of times it can ``cheat'' by not periodically selecting the ``same'' vertices in each column is $tp(d-1)$. The number of columns being $m(tp(d-1)+1)$, by the pigeonhole principle, there will always exist $m$ consecutive columns for which the selection pattern does not change, from which a consistent assignment for all clauses can be extracted.

\begin{figure}
\centering 
\subfloat[]
{\includegraphics[width=40mm]{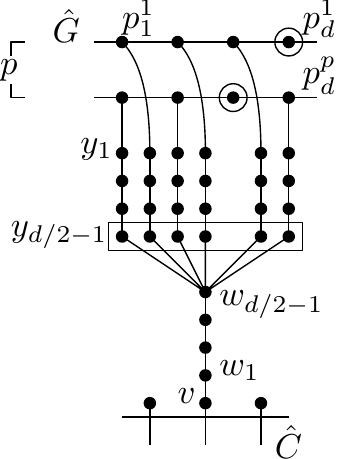}}\qquad\qquad
\subfloat[]
{\includegraphics[width=60mm]{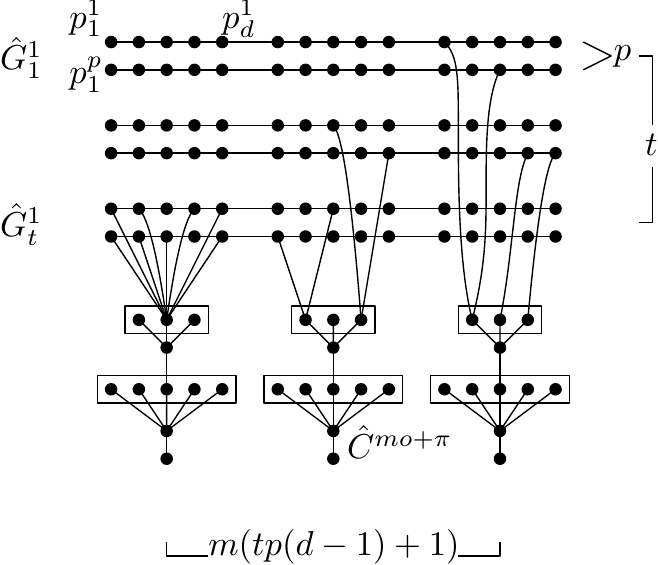}}
\caption{(a): The connection of an input vertex $v$ of a clause gadget $\hat{C}$ to its corresponding path vertices in some $\hat{G}$, where vertices of set $S_{\tau}$ are circled and boxed vertices form a clique (for even $d$). (b): A simplified picture of the global construction, with some exemplative connecting paths between clause gadgets and path vertices shown as edges.}
\label{fig:global_construction}
\end{figure}

\begin{lemma}\label{tw_LB_FWD}
If $\phi$ has a satisfying assignment, then $G$ has a $d$-scattered set of size $(tp+2)m(tp(d-1)+1)$.
\end{lemma}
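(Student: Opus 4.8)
The plan is to read the desired $d$-scattered set $K$ directly off a satisfying assignment $\sigma$ of $\phi$, then check its size and the distance condition. For each group $F_{\tau}$ let $\sigma_{\tau}$ be the restriction of $\sigma$ to $F_{\tau}$, and let $S_{\tau}\subset\bigcup_{l\in[1,p]}P_{\tau}^l$ be the set of $p$ path vertices (one per path $P_{\tau}^l$) that the construction associates with $\sigma_{\tau}$. Into $K$ I put: (i) for every column $j\in[1,m(tp(d-1)+1)]$ and every group $\tau$, the $p$ vertices of $S_{\tau}$ lying in $\hat{G}_{\tau}^j$ — equivalently, on each of the $tp$ long paths we select, in every column, the vertex whose index inside that column's gadget is the one prescribed by $S_{\tau}$; and (ii) for every column $j$, whose clause gadget $\hat{C}^j$ is a copy of the gadget of some clause $C_{\mu}$, the vertex $b^1$ of $\hat{C}^j$ together with one input vertex $v$ of $\hat{C}^j$ corresponding to a literal $\ell$ of $C_{\mu}$ satisfied by $\sigma$ and to the partial assignment $\sigma_{\tau}$, where $F_{\tau}$ is the group of $\ell$'s variable. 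Such a $v$ exists: since $\sigma$ satisfies $C_{\mu}$ it satisfies some literal $\ell$, hence $\sigma_{\tau}$ lies among the partial assignments of $F_{\tau}$ that satisfy $\ell$, and each (literal, satisfying partial assignment) pair is realised by an input of $\hat{C}^j$; the set associated with this $v$ is then exactly $S_{\tau}$. Part (i) contributes one vertex per long path per column, i.e.\ $tp\cdot m(tp(d-1)+1)$ vertices, and part (ii) two per column, i.e.\ $2m(tp(d-1)+1)$, so $|K|=(tp+2)m(tp(d-1)+1)$.

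It remains to check that every two vertices of $K$ are at distance at least $d$ in $G$, which I would do by a case analysis on the positions of the two vertices, invoking the distance facts already established for the gadgets. Exactly three cases are tight. (a) Two selected vertices on the same long path, in consecutive columns: since the chosen index is the same in every column, the shortest route runs along the long path through the $d-s$ vertices remaining in one column's gadget, one connecting edge, and the first $s-1$ vertices of the next, of length exactly $d$; any detour through a clause gadget is strictly longer. (b) A selected input $v$ of $\hat{C}^j$ and the vertex $b^1$ of the same clause gadget: exactly $d$ by the construction of $\hat{C}$, which touches the rest of $G$ only through its inputs. (c) A selected input $v$ of $\hat{C}^j$ and a selected path vertex in its own group $F_{\tau}$ in column $j$: this vertex lies in $S_{\tau}=S_v$, and by construction $v$ is at distance exactly $d$ from each vertex of $S_v$ (and at distance $d-1$ only from the path vertices outside $S_v$, none of which is in $K$). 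Every remaining pair is strictly farther than $d$: selected vertices in different columns are separated by at least one whole long-path segment of length $d$; an input and a $b^1$, two vertices $b^1$, two inputs, or an input and a path vertex in distinct clause gadgets/columns all require crossing at least two gadget-traversing legs; and a selected path vertex together with a selected input of a different group, or two selected path vertices of different groups, in the same column must cross at least two legs of length $d-1$ through the clause gadget, hence are at distance at least $2(d-1)>d$ for $d>2$.

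The step I expect to be the main obstacle is the distance between two selected path vertices lying on distinct paths of the \emph{same} group gadget $\hat{G}_{\tau}^j$: these get linked, through the connectors attached to the input vertices of $\hat{C}^j$, by routes that concatenate two ``half-legs'' of about $d/2$ vertices. One must use that the clause-gadget connectors are designed so that any two distinct path vertices of $\hat{G}_{\tau}^j$ remain at distance at least $d$ through every such connector — in particular, the connector of an input $v'$, which reaches precisely the path vertices outside $S_{v'}$, never shortcuts two of them to distance $d-1$ — so that the selected vertices of $S_{\tau}$ stay pairwise at distance at least $d$ inside each column. Combined with case (a), this yields that $K$ is a $d$-scattered set of size $(tp+2)m(tp(d-1)+1)$, as required.
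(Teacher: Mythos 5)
Your construction of $K$ and the size count coincide with the paper's proof, and your distance verification covers the same tight cases (consecutive selections on a long path, the input $v$ versus $b^1$, and $v$ versus the vertices of $S_{\tau}$) in the same way. The one step you defer -- that two selected vertices of $S_{\tau}$ on distinct paths of the same gadget remain at distance at least $d$ through the connectors of the other (unselected) input vertices -- is precisely the step the paper also dispatches with a single asserted bound ($\ge 2d-2$), so your proposal matches the paper's argument in both structure and level of detail.
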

\begin{proof}
 Given a satisfying assignment for $\phi$, we will show the existence of a $d$-scattered set $K$ of $G$ of size $|K|=(tp+2)m(tp(d-1)+1)$. Set $K$ will include one vertex from each of the $p$ paths in each gadget $\hat{G}_{\tau}^{j}$, with $\tau\in[1,t],j\in[1,m(tp(d-1)+1)]$ and two vertices from each clause gadget $\hat{C}^j$. In particular, for each group $F_{\tau}$ of variables we consider the restriction of the assignment for $\phi$ to these variables and identify the set $S_{\tau}$ associated with this partial assignment, adding all vertices of $S_{\tau}$ from each $\hat{G}_{\tau}^j$ into set $K$. Then, for each clause $C_{\mu}$, we identify one satisfied literal (which must exist as the assignment for $\phi$ is satisfying) and the vertex $v$ that corresponds to that literal and the partial assignment associated with set $S_{\tau}$ selected from the group of paths $P_{\tau}^l,l\in[1,p]$ within gadget $\hat{G}_{\tau}^j$, for group $F_{\tau}$, in which that literal's variable appears in. We add to $K$ every such vertex $v$ and also vertex $b_1$ from each clause gadget $\hat{C}^j$, for all $j\in[1,m(tp(d-1)+1)]$, thus completing the selection and what remains is to show that $K$ is indeed a $d$-scattered set of $G$.
 
 To that end, observe that the pattern of our $p\cdot m(tp(d-1)+1)$ selections of all vertices from every set $S_{\tau}$ for each $\tau\in[1,t]$ is repeating: we have selected every $d$-th vertex from each ``long path'', since the association between sets $S_{\tau}$ and partial assignments for $F_{\tau}$ is the same of each $\tau$. Thus on each of the $tp$ long paths, every selected vertex is at distance exactly $d$ both from its predecessor and its follower. Furthermore, for each clause gadget $\hat{C}^j$, with $j\in[1,m(tp(d-1)+1)]$, selected vertices $v$ and $b_1$ are at distance exactly $d$ via the gadget, while vertex $v$ is at distance exactly $d$ from each selected $p_i^l\in S_{\tau}$ from each path $P_{\tau}^l$, $l\in[1,p]$, as there are only paths of length $d-1$ from $v$ to the neighbors $p_{i-1}^l,p_{i+1}^l$ of the selected vertex from each path. Finally, observe that the distance between vertices on different paths $P_{\tau}^l$ (and thus possible selections) via the paths attached to some input vertex is always $\ge2d-2$. 
\end{proof}

\begin{lemma}\label{tw_LB_BWD}
 If $G$ has a $d$-scattered set of size $(tp+2)m(tp(d-1)+1)$, then $\phi$ has a satisfying assignment.
\end{lemma}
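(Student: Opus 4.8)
The plan is to prove the contrapositive constructively: starting from a $d$-scattered set $K$ of $G$ of size exactly $(tp+2)M$, where $M:=m(tp(d-1)+1)$ is the number of columns, I extract a satisfying assignment of $\phi$. The first step is a tightness argument pinning down the shape of every such $K$. Partition $V(G)$ into the $tp$ ``long paths'' (each a path on $dM$ vertices, hence carrying at most $M$ selections of a $d$-scattered set), the $M$ clause gadgets $\hat C^j$ (each carrying at most two, since — by the distance computations recorded after the definition of $\hat C$ — any two vertices of $\hat C^j$ at mutual distance $\ge d$ consist of exactly one input vertex and one vertex of the path $B$), and the intermediate connecting ($w$- and $y$-)paths. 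Since $tp\cdot M + 2\cdot M$ already equals the budget, one has to show that any selection falling on a connecting path (or in the interior of an $A$- or $B$-path) can be relocated onto an adjacent long path without destroying $d$-scatteredness and without decreasing $|K|$; hence $|K|=(tp+2)M$ forces $K$ to contain exactly $M$ vertices of each long path, exactly two vertices of each $\hat C^j$ (one of them an input), and nothing else.

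The second step is the ``no drift'' pigeonhole argument. On a fixed long path $x_1,\dots,x_{dM}$ carrying exactly $M$ selections $x_{i_1}<\dots<x_{i_M}$ with all gaps $i_{s+1}-i_s\ge d$, a direct calculation gives $i_s=(s-1)d+1+o_s$ with offsets $o_s\in\{0,\dots,d-1\}$, and the sequence $(o_s)_s$ is non-decreasing with $o_M-o_1\le d-1$; in particular the $s$-th selection of the path lies inside the block contributed by column $s$, and the offset takes at most $d-1$ distinct values, i.e.\ changes between at most $d-1$ consecutive pairs of columns. Summing over the $tp$ long paths, the \emph{selection pattern} (the tuple of the $tp$ offsets) is constant on all but at most $tp(d-1)$ column-boundaries, so the $M=m(tp(d-1)+1)$ columns decompose into at most $tp(d-1)+1$ runs of constant pattern; by pigeonhole some run has length $\ge m$. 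Fix $m$ consecutive columns $j_0,\dots,j_0+m-1$ inside such a run: over them, for each $\tau\in[1,t]$ the vertices of $K$ in $\hat G_\tau^j$ form one and the same set $S_\tau$ with exactly one vertex per path $P_\tau^l$.

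The third step reads off and verifies the assignment. In each of these $m$ stable columns $j$ the two selections in $\hat C^j$ include an input vertex $v$, which by construction corresponds to a literal $\ell$ of the clause $C_\mu$ associated with column $j$ and to a partial assignment $\sigma$ of the group $F_\tau$ containing $\ell$'s variable; moreover $v$ lies at distance exactly $d-1$ from every vertex of each $P_\tau^l$ except the vertex of the valid set $S_\tau(\sigma)$, which is at distance exactly $d$. Hence the unique vertex of $K$ on each $P_\tau^l$ in column $j$ must be that vertex, so the set selected in $\hat G_\tau^j$ is exactly $S_\tau(\sigma)$ and encodes $\sigma$. By the second step this selected set is the single set $S_\tau$ throughout the run, so (using that the association ``assignment $\mapsto$ valid set'' is injective) all partial assignments obtained from the run for a given $\tau$ coincide in one $\sigma_\tau$ with $S_\tau=S_\tau(\sigma_\tau)$; for the groups never touched by a selected literal, fix $\sigma_\tau$ arbitrarily. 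Since the groups $F_1,\dots,F_t$ partition the variables, the $\sigma_\tau$ combine into a global assignment $\sigma$. Finally the run of $m$ consecutive columns realizes all $m$ clauses (the column–clause association is cyclic of period $m$), and for each $C_\mu$ the selected input in its column yields a literal of $C_\mu$ satisfied by $\sigma_\tau=\sigma|_{F_\tau}$; thus $\sigma$ satisfies every clause, i.e.\ $\phi$.

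I expect the main obstacle to be the rigidity claim in the first step: precisely ruling out (or normalizing away) selections on the intermediate connecting paths and in the interiors of the clause-gadget paths, and dealing with the ends of the long paths, since this is exactly where the budget of the construction is tight and where the rigidity that drives Steps 2 and 3 has to be established. The offset bookkeeping of Step 2 is routine once the per-long-path count is known, and Step 3 is a matter of carefully tracking the distance-$(d-1)$/distance-$d$ properties already built into the gadgets.
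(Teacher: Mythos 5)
Your proposal follows essentially the same route as the paper's proof: a budget-tightness argument forcing exactly $M$ selections per long path and exactly $\{b_1,\text{one input}\}$ per clause gadget, then the pigeonhole argument on the (non-decreasing) offsets bounding the number of pattern shifts by $tp(d-1)$ to obtain $m$ consecutive stable columns, and finally reading off the assignment via the distance-$d$ versus distance-$(d-1)$ dichotomy between input vertices and path vertices. The rigidity step you flag as the main obstacle is handled just as tersely in the paper (``it is not hard to see that the only option is\ldots''), and your offset bookkeeping and your explicit handling of groups not touched by any selected literal are, if anything, slightly more careful than the original.
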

\begin{proof}
 Given a $d$-scattered set $K$ of $G$ of size $|K|=(tp+2)m(tp(d-1)+1)$, we will show the existence of a satisfying assignment for $\phi$. First, observe that from each gadget $\hat{G}_{\tau}^j$, for $\tau\in[1,t],j\in[1,m(tp(d-1)+1)]$, at most $p$ vertices can be selected, one from each path $P_{\tau}^l,l\in[1,p]$ within each gadget. This leaves 2 vertices to be selected from each column $j\in[1,m(tp(d-1)+1)]$ of gadgets. As the distances between some input vertex $v$ and some path vertex $p_i^l$ is equal to $d$ only if the path vertex belongs to the set $S_{\tau}$ associated with the partial assignment to the variables of $F_{\tau}$ that would satisfy the literal (whose variable belongs to $F_{\tau}$) corresponding to the input vertex $v$ and $d-1$ otherwise, while the distances between any pair of input vertices are $d-1$ via the gadget with only vertex $b_1$ at distance exactly $d$ from each input vertex, it is not hard to see that the only option is to select each vertex $b_1$ and one input vertex from each gadget $\hat{C}^j$, for $j\in[1,m(tp(d-1)+1)]$: no path vertex $p_i^l$ could be selected with any vertex on the paths attached to some input vertex $v$, while no other vertex but $b_1$ could be selected with some input vertex of each clause gadget. Furthermore, the selection of an input vertex $v$ must also be in agreement with each selection from the $p$ paths to which $v$ is connected to (via the paths of length $d-1$), i.e.\ the selected vertices from each path must be exactly the set $S_{\tau}$ that is associated with the partial assignment that satisfies the literal corresponding with $v$.
 
 Next, we require that there exists at least one $o\in[0,tp(d-1)]$ for every $\tau\in[1,t]$ for which $K\cap\{\bigcup_{l\in[1,p]}P_{\tau}^l\}$ is the same in all gadgets $\hat{G}_{\tau}^{mo+\pi}$ with $\pi\in[1,m]$, i.e.\ that there exist $m$ successive copies of the gadget for which the pattern of selection of vertices from paths $P_{\tau}^l$ does not change. As noted above, set $K$ must contain one vertex from each such path in each gadget, while the distance between any two successive selections (on the same ``long path'') must be at least $d$. Now, depending on the starting selection, observe that the pattern can ``shift towards the right'' at most $d-1$ times, without affecting whether the total number of selections is exactly $m(tp(d-1)+1)$ from each ``long path'': the first vertex of a path can be selected within a gadget, the second vertex can be selected from its follower, the third from the one following it and so on. For each $l\in[1,p]$, this can happen at most $d-1$ times, thus at most $p(d-1)$ for each $\tau\in[1,t]$ and $tp(d-1)$ over all $\tau$. By the pigeonhole principle, there must thus exist an $o\in[0,tp(d-1)]$ such that no such shift happens among the gadgets $\hat{G}_{\tau}^{mo+\pi}$, for all $\tau\in[1,t]$ and $\pi\in[1,m]$.
 
 Our assignment for $\phi$ is then given by the selections for $K$ in each gadget $\hat{G}_{\tau}^{mo+1}$ for this $o$: for every group $F_{\tau}$ we consider the selection of vertices from $P_{\tau}^l$ for $l\in[1,p]$, forming subset $S_{\tau}$ and its associated partial assignment to the variables of $F_{\tau}$. In this way we get an assignment to all the variables of $\phi$. To see why this also satisfies every clause $C_{\pi}$ with $\pi\in[1,m]$, consider clause gadget $\hat{C}^{mo+\pi}$: there must be an input vertex $v$ selected from this gadget, corresponding to a satisfying partial assignment for some literal of $C_{\pi}$, that must be at distance exactly $d$ from each path selection that together give subset $S_{\tau}$, the subset associated with this satisfying partial assignment. 
 \end{proof}
 
 \begin{lemma}\label{tw_LB_bound}
 Graph $G$ has treewidth $\tw(G)\le tp+qd^p/2+d$.
\end{lemma}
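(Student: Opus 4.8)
The plan is to bound $\pw(G)$ and use $\tw(G)\le\pw(G)$ (Lemma~\ref{lem:relations}); by Lemma~\ref{lem:mixed_search} it then suffices to describe a winning mixed search strategy on $G$ that never keeps more than $tp+qd^p/2+d$ searchers on the graph. The strategy will sweep the $m(tp(d-1)+1)$ columns of gadgets from left to right, maintaining the invariant that between two consecutive columns exactly $tp$ searchers sit on the $tp$ ``long paths'', one on each, so that the already-swept part cannot be recontaminated. These $tp$ searchers are the fixed overhead; the remaining budget of $qd^p/2+d$ is what is available to clean one column at a time.

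To clean a column $j$ I would first handle, one group $\tau\in[1,t]$ at a time, the path gadget $\hat G_\tau^j$ together with all connector paths (the $w$- and $y$-paths and the sets $Q_v$) incident to it: place searchers on all $pd$ vertices of $\hat G_\tau^j$, and then, for each input vertex $v$ of $\hat C^j$ whose literal lies in $F_\tau$, bring in the whole set $Q_v$ (of size at most $p(d-1)+1$), slide a searcher along $v$'s $w$-path, slide a searcher along each $y$-path from the currently guarded vertices of $\hat G_\tau^j$ into $Q_v$, and then release $Q_v$. Once all groups are processed, every connector and every path-gadget vertex of the column is clean except the exit vertices of the long paths, and each input vertex $v_i$ still carries one searcher. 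I would then clean $\hat C^j$ itself: slide a searcher along $B$ up to $b^{\lceil d/2\rceil+1}$, and then, one index $i$ at a time, slide a searcher from $v_i$ along $A_i$ to $a_i^{\lfloor d/2\rfloor-1}$, which joins the growing clause-gadget clique of size $N=q_\mu d^p/2\le qd^p/2$; when the last $A_i$ is done, that clique and $B$ are clean and only the $tp$ long-path searchers remain, now standing on the exit vertices of column $j$, ready for column $j+1$. For odd $d$ each $Q_v$ and the set $\{a_i^{\lfloor d/2\rfloor-1}\}$ are actually stars, so a single central searcher suffices for each and the estimates below only improve.

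The two things to check are that this never uses more than $tp+qd^p/2+d$ searchers and that no cleared edge is ever recontaminated. For the count, during the connector phase the searchers in play are the $tp$ long-path guards (of which $pd$ sit on $\hat G_\tau^j$), the at most $p(d-1)+1$ vertices of the current $Q_v$, a constant number of sliding searchers, and the partially built clause-gadget clique; during the clause-gadget phase they are the $tp$ guards, the at most $N$ clique vertices, $b^{\lceil d/2\rceil+1}$, and a constant number of sliding searchers. A routine estimate, using $N\le qd^p/2$ together with $q\ge 3$ and $d\ge 3$, shows the total stays within $tp+qd^p/2+d$: the clause-gadget clique is the dominant term, and the $O(pd)$ contributed by $\hat G_\tau^j$ and $Q_v$, plus the constant number of sliding searchers, is absorbed into it and the additive $d$. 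The genuinely delicate point — and the main obstacle — is the connector phase: because a single path vertex $p_i^l$ carries a $y$-path to $Q_v$ for essentially every input vertex $v$ of group $F_\tau$, its incident connectors cannot all be cleaned in one visit, which is exactly why every vertex of $\hat G_\tau^j$ must stay occupied throughout that group's phase, and hence why the $t$ groups must be treated one at a time rather than all of column $j$ at once (which would cost $\Theta(tpd)$ searchers and destroy the bound). Ordering the placements and slides so that recontamination never occurs is tedious but routine, and I would relegate it to a straightforward case check.
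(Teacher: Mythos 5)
Your overall approach is the paper's: bound $\pw(G)$ via a mixed-search strategy that sweeps the $m(tp(d-1)+1)$ columns left to right while keeping $tp$ guards on the long paths, and invoke Lemmas~\ref{lem:relations} and~\ref{lem:mixed_search}. The one step that does not hold as written is the ``routine estimate'' that the peak searcher count ``stays within $tp+qd^p/2+d$'' because the $O(pd)$ overhead is absorbed into the clique term and the additive $d$. When you process the last group $F_\tau$ of a column of a clause with $q_\mu=q$, you simultaneously hold the $tp$ long-path guards, the $pd$ vertices of $\hat G_\tau^j$, the $p(d-1)+1$ vertices of the current $Q_v$, and one searcher on each of the up to $N=qd^p/2$ already-processed input vertices (which, as you correctly observe, cannot be released before the clause-gadget phase, since the edges into the $A_i$ paths are still contaminated). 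That is $tp+qd^p/2+\Theta(pd)$ searchers, and since $p(2d-1)+1>d$ there is no slack in $qd^p/2+d$ to absorb it; so what your strategy actually establishes is $\tw(G)\le tp+f(d,\epsilon,q)$ for a somewhat larger constant $f$ than the one in the lemma statement. This discrepancy is harmless for the intended application: Theorem~\ref{tw_LB_THM} only needs the additive term to be a constant depending on $d,\epsilon,q$, and the paper's own accounting (which pre-places searchers on the $a_i^{\lfloor d/2\rfloor-1}$ vertices and on the $y$-endpoints and slides them, rather than occupying the gadgets) is itself loose about how many $y$-endpoint searchers must be held simultaneously. Apart from the constant, your group-by-group occupation of $\hat G_\tau^j$ is if anything a more careful treatment of the recontamination issue at the path vertices $p_i^l$ than the paper gives; just either restate the lemma with your constant or tighten the bookkeeping rather than asserting the $O(pd)$ term disappears.
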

\begin{proof}
 We will in fact show a pathwidth bound of $\pw(G)\le tp+qd^p/2+d$ by providing a mixed strategy to clean $G$ using $tp+qd^p/2+d$ searchers. The claimed bound on the treewidth then follows from lemmas \ref{lem:relations} and \ref{lem:mixed_search}.
 
 We initially place one searcher on every first vertex $p_1^l$ of every path $P_{\tau}^l$ in each gadget $\hat{G}_{\tau}^1$ for all $l\in[1,p]$ and $\tau\in[1,t]$. We also place a searcher on vertex $b_1$ of clause gadget $\hat{C}^1$, also one on each of its $q_{\mu}d^p/2$ vertices $a_i^{\lfloor d/2\rfloor-1}$ (between the inputs and $b_1$) and finally one searcher on each of the $d-1$ vertices $y_{d/2-1}$ (or $y_{\lfloor d/2\rfloor}$ for odd $d$) that are connected through a $w_1,\dots,w_{d/2-1}$ path to each input vertex (or $w_1,\dots,w_{\lfloor d/2\rfloor}$). 
 
 We then slide the searcher on $b_1$ over the path $b_1,\dots,b^{\lceil d/2\rceil}$ until all the path's edges as well as the edges between $b^{\lceil d/2\rceil}$ and every $a_i^{\lfloor d/2\rfloor-1}$ are cleaned (the clique edges between the $a_i^{\lfloor d/2\rfloor-1}$ for even $d$ are also clean). We then slide the searchers from the $a_i^{\lfloor d/2\rfloor-1}$ along each path to each input vertex and from there on along the paths $w_1,\dots,w_{d/2-1}$ (or $w_{\lfloor d/2\rfloor}$ for odd $d$). In this way all these paths and the edges between the $w_{d/2-1}$ and $y_{d/2-1}$ (or $w_{\lfloor d/2\rfloor}$ and $y_{\lfloor d/2\rfloor}$ for odd $d$) are cleaned and we can slide the searchers from each $y_{d/2-1}$ down to each $y_1$ (being adjacent to one path vertex each). We then slide all $tp$ searchers from the first vertices $p_1^l$ along their paths $P_{\tau}^l$ for $l\in[1,p]$ in each gadget $\hat{G}_{\tau}^1$. After all edges of the first column have been cleaned in this way, we slide the $tp$ searchers on the first vertices of each path of the following column, we remove the searchers from the vertices of the clause gadget (and adjacent paths) and place them on their corresponding starting positions on the following column. We then repeat the above process until all columns have been cleaned. We thus use at most $tp+qd^p/2+d$ searchers simultaneously, where $qd^p/2+d=O(1)$. 
\end{proof}

\begin{theorem}\label{tw_LB_THM}
 For any fixed $d>2$, if \dS\ can be solved in $O^*((d-\epsilon)^{\tw(G)})$ time for some $\epsilon>0$, then there exists some $\delta>0$, such that \textsc{$q$-SAT} can be solved in $O^*((2-\delta)^n)$ time, for any $q\ge3$.
\end{theorem}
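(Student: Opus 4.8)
The plan is to obtain the theorem as a corollary of Lemmas~\ref{tw_LB_FWD},~\ref{tw_LB_BWD} and~\ref{tw_LB_bound}, the only remaining work being to re-express the treewidth bound in terms of the number of variables $n$. Suppose that for some $\epsilon>0$ the \dS\ problem can be solved in time $O^*((d-\epsilon)^{\tw(G)})$, and fix an arbitrary $q\ge 3$. Given an instance $\phi$ of \textsc{$q$-SAT} with $n$ variables and $m$ clauses, I would build the graph $G$ of the construction above; since $p$ depends only on $d$ and $\epsilon$ (so that $d^p=O(1)$), this takes polynomial time, $G$ has $\mathrm{poly}(n,m)$ vertices, and moreover $m=O(n^q)=\mathrm{poly}(n)$. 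By Lemmas~\ref{tw_LB_FWD} and~\ref{tw_LB_BWD}, $\phi$ is satisfiable if and only if $G$ has a $d$-scattered set of size $(tp+2)m(tp(d-1)+1)$, so a single call to the assumed algorithm decides $\phi$, and by Lemma~\ref{tw_LB_bound} this call runs in time $O^*\!\big((d-\epsilon)^{tp+qd^p/2+d}\big)=O^*\!\big((d-\epsilon)^{tp}\big)$, the factor $(d-\epsilon)^{qd^p/2+d}$ being a constant for fixed $q,d,\epsilon$.

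The main remaining step is the estimate $(d-\epsilon)^{tp}=O^*((2-\delta)^n)$ for a suitable $\delta>0$. Since $t=\lceil n/\gamma\rceil\le n/\gamma+1$, I would bound $(d-\epsilon)^{tp}\le (d-\epsilon)^{p}\cdot c^{n}$, where $c:=(d-\epsilon)^{p/\gamma}$ and $(d-\epsilon)^{p}$ is again a constant; hence it suffices to verify $c<2$. Writing $\lambda:=\log_d(d-\epsilon)<1$ (so that $d-\epsilon=d^{\lambda}$ and $2=d^{1/\log_2 d}$), the inequality $c<2$ amounts to $\lambda p\log_2 d<\gamma$. Since $\gamma=\lfloor p\log_2 d\rfloor>p\log_2 d-1$ in every case, it is enough that $\lambda p\log_2 d\le p\log_2 d-1$, i.e.\ $p(1-\lambda)\log_2 d\ge 1$ — and this is exactly what the choice $p=\lceil 1/((1-\lambda)\log_2 d)\rceil$ guarantees. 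Thus $c<2$; taking $\delta:=2-c>0$, which depends on $d$ and $\epsilon$ only and not on $q$, yields an algorithm solving \textsc{$q$-SAT} in time $\mathrm{poly}(n)\cdot c^{n}=O^*((2-\delta)^n)$ for every $q\ge 3$, as claimed.

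I do not anticipate any real obstacle at this point, since all the substance — the clause gadget $\hat C$ admitting exactly one input selection together with $b_1$, the ``long paths'' being made long enough (length $m(tp(d-1)+1)$) so that a consistent assignment can be recovered by the pigeonhole argument, and the $\ms$/pathwidth bound — is already contained in the three preceding lemmas. The single point that needs genuine care is the interplay of the floors and ceilings in the definitions of $p$, $\gamma$ and $t$: one must confirm that using $\gamma=\lfloor p\log_2 d\rfloor$ in place of $p\log_2 d$ does not destroy the strict inequality $\lambda p\log_2 d<\gamma$, which is precisely why $p$ is defined via a ceiling and why the slack $p(1-\lambda)\log_2 d\ge 1$ already suffices.
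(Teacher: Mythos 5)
Your proposal is correct and follows essentially the same route as the paper: invoke Lemmas~\ref{tw_LB_FWD},~\ref{tw_LB_BWD} and~\ref{tw_LB_bound}, absorb the constant additive terms $qd^p/2+d$, and reduce everything to the inequality $\lambda p\log_2 d<\lfloor p\log_2 d\rfloor$, which the ceiling in the definition of $p$ guarantees. The only (cosmetic) difference is that you phrase the final estimate via the base $c=(d-\epsilon)^{p/\gamma}<2$ while the paper works with exponents $\delta',\delta''<1$; the underlying calculation is identical.
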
 
\begin{proof}
 Assuming the existence of some algorithm of running time $O^*((d-\epsilon)^{\tw(G)})=O^*(d^{\lambda\tw(G)})$ for \dS, where $\lambda=\log_{d}(d-\epsilon)$, we construct an instance of \dS\, given a formula $\phi$ of \textsc{$q$-SAT}, using the above construction and then solve the problem using the $O^*((d-\epsilon)^{\tw(G)})$-time algorithm. Correctness is given by Lemma \ref{tw_LB_FWD} and Lemma \ref{tw_LB_BWD}, while Lemma \ref{tw_LB_bound} gives the upper bound on the running time:
 
 \begin{align}
  O^*(d^{\lambda\tw(G)})&\le O^*\left(d^{\lambda(tp+f(d,\epsilon,q))}\right)\\
  &\le O^*\left(d^{\lambda p\left\lceil\dfrac{n}{\lfloor\log_2(d)^p\rfloor}\right\rceil}\right)\label{tw_comp_2}\\
  &\le O^*\left(d^{\lambda p\dfrac{n}{\lfloor\log_2(d)^p\rfloor}+\lambda p}\right)\\
  &\le O^*\left(d^{\lambda\dfrac{np}{\lfloor p\log_{2}(d)\rfloor}}\right)\label{tw_comp_4}\\
  &\le O^*\left(d^{\delta'\dfrac{n}{\log_2(d)}}\right)\label{tw_comp_5}\\
  &\le O^*(2^{\delta''n})=O((2-\delta)^n)
 \end{align}
for some $\delta,\delta',\delta''<1$. Observe that in line (\ref{tw_comp_2}) the function $f(d,\epsilon,q)=qd^p/2+d$ is considered constant, as is $\lambda p$ in line (\ref{tw_comp_4}), while in line (\ref{tw_comp_5}) we used the fact that there always exists a $\delta'<1$ such that $\lambda\dfrac{p}{\lfloor p\log_2(d)\rfloor}=\dfrac{\delta'}{\log_2(d)}$, as we have:
\begin{equation*}
 \begin{split}
    p\log_2(d)-1&<\lfloor p\log_2(d)\rfloor\\
    \Leftrightarrow\dfrac{\lambda p\log_2(d)}{p\log_2(d)-1}&>\dfrac{\lambda p\log_2(d)}{\lfloor p\log_2(d)\rfloor},\\
    \text{from which, by substitution, we get } \dfrac{\lambda p\log_2(d)}{p\log_2(d)-1}&>\delta',\\
    \text{now requiring } \dfrac{\lambda p\log_2(d)}{p\log_2(d)-1}&\le1,\\
    \text{or } p\ge\dfrac{1}{(1-\lambda)\log_2(d)},
 \end{split}
\end{equation*}
that is precisely our definition of $p$. This concludes the proof. 
\end{proof}

\section{Treewidth: Dynamic Programming Algorithm}\label{sec_tw_dp}
We present an $O^*(d^{\tw})$-time algorithm for the counting version of the \dS\ problem.
 The input is a graph $G=(V,E)$, a nice tree decomposition
$(\mathcal{X},T)$ for $G$, where $T=(I,F)$ is a tree and
$\mathcal{X}=\{X_i|i\in I\}$ is the set of bags, while $\max_{i\in
i}|X_i|-1=\tw$, along with two numbers $k\in\mathbb{N}^+,d\ge 2$, while the
output is the number of $d$-scattered sets of size $k$ in $G$.

\paragraph{Table description:} There is a table $D_i$ associated with every node $i\in I$ of the tree decomposition with $X_i=\{v_0,\dots,v_t\},0\le t\le\tw$, while each table entry $D_i[\kappa,s_0,\dots,s_t]$ contains the number of (disjoint) $d$-scattered sets $K\subseteq V_i$ of size $|K|=\kappa$ (its \emph{partial solution}) and is indexed by a number $\kappa\in[1,k]$ and a $t+1$-sized tuple $(s_0,\dots,s_t)$ of \emph{state-configurations}, assigning a state $s_j\in[0,d-1]$ to each vertex $v_j,\forall j\in[0,t]$. There are $d$ possible states for each vertex, designating its distance to the closest selection at the ``current'' stage of the algorithm:
\begin{itemize}
 \item \emph{Zero} state $s_j=0$ signifies vertex $v_j$ is considered for selection in the $d$-scattered set and is at distance at least $d$ from any other such selection: $\forall u\in K: d(u,v_j)\ge d$.
 \item \emph{Low} states $s_j\in[1,\lfloor d/2\rfloor]$ signify vertex $v_j$ is at distance at least $s_j$ from its closest selection and at least $d-s_j$ from the second closest: $\forall u,w\in K|d(u,v_j)\le d(w,v_j): d(u,v_j)\ge s_j\wedge d(w,v_j)\ge d-s_j$.
 \item \emph{High} states $s_j\in[\lfloor d/2\rfloor+1,d-1]$ signify vertex $v_j$ is at distance at least $s_j$ from its closest selection: $\forall u\in K:d(u,v_j)\ge s_j$.
\end{itemize}

For a node $i\in I$, each table entry $D_i[\kappa,s_0,\dots,s_t]$ contains the number of $d$-scattered sets $K\subseteq V_i$ of the terminal subgraph $G_i$, such that the situation of each vertex in the corresponding bag is being described by the particular state configuration $(s_0,\dots,s_t)$ indexing this entry. The computation of each entry is based on the type of node the table is associated with (leaf, introduce, forget, or join), previously computed entries of the table associated with the preceding node(s) and the structure of the node's terminal subgraph. In particular, we have $\forall i\in I, D_i[\kappa,s_0,\dots,s_t]:[1,k]\times[0,d-1]^{t+1}\mapsto\mathbb{N}^0$, where $0\le t\le\tw$. The inductive computation of all table entries for each type of node follows.

\paragraph{Leaf node $i$ with $X_i=\{v_0\}$:}
\begin{equation*}
D_i[\kappa,s_0] \coloneqq
\begin{cases}
 1			& \mbox{, if } s_0=0, \kappa=1; \\
 1			& \mbox{, if } s_0>0, \kappa=0; \\
 0			& \mbox{, otherwise.}
\end{cases}
\end{equation*}
Leaf nodes contain only one vertex and there is one $d$-scattered set that includes this vertex for ($s_0=0$, $\kappa=1$) and one $d$-scattered set that does not (for $s_0>0$, $\kappa=0$).

\paragraph{Introduce node $i$ with $X_i=X_{i-1}\cup\{v_{t+1}\}$:}

\begin{equation*}
 D_i[\kappa,s_0,\dots,s_t,s_{t+1}] \coloneqq
 \begin{cases}
 D_{i-1}[\kappa,s_0,\dots,s_t], & \mbox{if } s_{t+1}\in[1,d-1] \mbox{ and} \\
		     & s_{t+1}\le\min_{v_j\in X_{i-1}}(d(v_{t+1},v_j)+s_j)  ;\\
 D_{i-1}[\kappa',s'_0,\dots,s'_t], & \mbox{if } s_{t+1}=0, \kappa'=\kappa-1 \mbox{ and } \forall v_j\in X_{i-1} \mbox{ with } s_j=0,\\
			   & \mbox{it is } d(v_{t+1},v_j)\ge d, \mbox{ and }\\
			   & \forall v_j\in X_{i-1} \mbox{ with } d(v_{t+1},v_j)\le \lfloor d/2\rfloor,\\
			   & \mbox{it is }  s_j\le d(v_{t+1},v_j) \mbox{ and } s'_j= d-s_j,\\
		           & \mbox{with } s'_j=s_j, \mbox{ if } d(v_{t+1},v_j)>\lfloor d/2\rfloor;\\
 0,                  & \mbox{otherwise.}
 \end{cases}
\end{equation*}
When a vertex is introduced, for previously computed partial solutions to be correctly extended, we require that its given state matches the distance/state conditions of the other vertices in the bag, while if the introduced vertex is considered for selection, then the previous entries we examine must ensure this selection is possible.

\paragraph{Forget node $i$ with $X_i=X_{i-1}\setminus\{v_{t+1}\}$:}

\begin{equation*}
  D_i[\kappa,s_0,\dots,s_t]\coloneqq \sum_{s_{t+1}\in[0,d-1]}\{D_{i-1}[\kappa,s_0,\dots,s_t,s_{t+1}]\}.
\end{equation*}
The correct value for each entry is the sum over all states of the forgotten vertex $v_j$, where the size of the $d$-scattered sets is $\kappa$.\footnote{We remark that only in the case of a forget node following a leaf node for some vertex $v_0$, the algorithm does not compute the sum over all states of the forgotten vertex as this would give a value of $d-1$, but only of states $s_0\in\{0,1\}$ for a correct value of 2.}

\paragraph{Join node $i$ with $X_i=X_{i-1}=X_{i-2}$:} Given state-tuple $(s_0,\dots,s_t)$, we assume (without loss of generality) that for some $t'\in[0,t]$ (if any) it is $s_j\in[1,\lfloor d/2\rfloor],\forall j\in[0,t']$ and also $\min_{\forall v_l\in X_i|s_l=0} d(v_j,v_l)>\lfloor d/2\rfloor$, while all other vertices are $v_{t'+1},\dots,v_t$. Now, let $\mathcal{S}_{t'}^{\le}$ denote the set of all possible tuples $S=(s_0^{\le},\dots,s_{t'}^{\le})$, where each state $s_j^{\le}$ is either the same state $s_j$, or its symmetrical (around $d/2$):
\begin{equation*}
 \mathcal{S}_{t'}^{\le}\coloneqq\{(s_0^{\le},\dots,s_{t'}^{\le})|\forall j\in[0,t']:\{(s_j^{\le}=s_j)\vee(s_j^{\le}=d-s_j)\}\},
\end{equation*}
while for some tuple $S\in\mathcal{S}_{t'}^{\le}$, let $\bar{S}$ denote the complementary tuple (where the state of each vertex is likewise reversed) and also let $\kappa''$ denote the number of zero states in $(s_{t'+1},\dots,s_t)$. Then we have:
\begin{equation*}
 D_i[\kappa,s_0,\dots,s_t] \coloneqq \sum_{S\in\mathcal{S}_{t'}^{\le}}\{D_{i-1}[\kappa',S,s_{t'+1},\dots,s_t]\cdot D_{i-2}[\kappa-\kappa'+\kappa'',\bar{S},s_{t'+1},\dots,s_t]\}.
\end{equation*}
For join nodes, the bags of both children contain the same set of vertices, yet the partial solutions characterized by the entries of each table concern distinct terminal subgraphs $G_{i-1}$ and $G_{i-2}$. For state-configurations where some vertices are of low state (that is not justified by the presence of some vertex of zero state within the bag), the closest selection to these vertices (that gives the state) might be in any of the two terminal subgraphs, but not both: if the ``target'' state is $s_j\in[1,\lfloor d/2\rfloor]$, then there might be a selection in $G_{i-1}$ at distance $s_j$ but there must not be another selection in $G_{i-2}$ at distance $\le d-s_j$ (and vice-versa).

\paragraph{State changes:} The computations at a join node as described above would add an additional factor in the complexity of our algorithm if implemented directly, yet this can be avoided by an application of the \emph{state changing} technique (or fast subset convolution, see \cite{BjorklundHKK07,BodlaenderLRV10,RooijBR09} and Chapter 11 from \cite{CyganFKLMPPS15}): since the number of entries involved can be exponential in $\tw$ (due to the size of $\mathcal{S}_{t'}^{\le}$), in order to efficiently compute the table for a join node $i$, we will first transform the tables $D_{i-1},D_{i-2}$ of its children into tables $D_{i-1}^*,D_{i-2}^*$ of a new type that employs a different state representation, for which the join operation can be efficiently performed to produce table $D_i^*$, that we will finally transform back to table $D_i$, thus progressing with our dynamic programming algorithm.

In particular, each entry $D_i^*[\kappa,s_1,\dots,s_t]$ of the new table will be an aggregate of entries $D_i[\kappa,s_0,\dots,s_t]$ of the original table, with its value equal to the sum of the appropriate values of the corresponding entries. For vertex $v_j$, each low state $s_j\in[1,\lfloor d/2\rfloor]$ in the new state signification for table $D_i^*$ that is not justified by the presence of an appropriate selection within the bag (i.e.\ its minimum distance to any zero-state vertex is at least $\lfloor d/2\rfloor+1$) will correspond to both the same low state $s_j$ and its symmetrical high state $d-s_j$ from the original signification. Observe that these correspondences exactly parallel the definition of set $\mathcal{S}_{t'}^{\le}$ used in the original computations.

First, let $D_i^*$ be a copy of table $D_i$. The transformation then works in $t$ steps, vertex-wise: we require that all entries $D_i^*[\kappa,s_0^*,\dots,s^*_t]$ contain the sum of all entries of $D_i$ where for low states $s_j^*$ (that are also not justified by some present selection), it is $s_j^*=s_j$ or $s_j^*=d-s_j$, and all other vertex-states and $\kappa$ are fixed: at step $j$, we add $D_i^*[\kappa,s_0,\dots,s_j,\dots,s_t]=D_i[\kappa,s_0,\dots,s_j,\dots,s_t]+D_i[\kappa,s_0,\dots,d-s_j,\dots,s_t]$ if $s_j\in[1,\lfloor d/2\rfloor]$ and $\min_{\forall v_l\in X_i|s_i=0}d(v_j,v_l)>\lfloor d/2\rfloor$. We then proceed to the next step for $v_{j+1}$ until table $D_i^*$ is computed. Observe that the above procedure is fully reversible:\footnote{This is the reason for \emph{counting} the number of solutions for each $\kappa$: there is no additive inverse operation for the max-sum semiring, yet the sum-product ring is indeed equipped with subtraction.} to invert table $D_i^*$ back to table $D_i$, we again work in $t$ steps, vertex-wise: we first let $D_i$ be a copy of $D_i^*$ and then at step $j$ for all other vertex-states and $\kappa$ fixed, we subtract $D_i[\kappa,s_0,\dots,s_j,\dots,s_t]=D_i^*[\kappa,s_0,\dots,s_j,\dots,s_t]-D_i^*[\kappa,s_0,\dots,d-s_j,\dots,s_t]$ if $s_j\in[1,\lfloor d/2\rfloor]$ and $\min_{\forall v_l\in X_i|s_i=0}d(v_j,v_l)>\lfloor d/2\rfloor$. For both transformations, we perform at most one addition per $k\cdot d^{t+1}/2$ entries for each step $j\in[0,t]$.

Thus we can compute table $D_i^*$ by simply multiplying the values of the two corresponding entries from $D_{i-1}^*,D_{i-2}^*$, as they now contain all required information for this state representation, with the inverse transformation of the result giving table $D_i$:
\begin{equation*}
 D_i^*[\kappa,s_0,\dots,s_t]\coloneqq\sum_{\kappa'=0}^{\kappa'=\kappa}D_{i-1}^*[\kappa',s_0,\dots,s_t]\cdot D_{i-2}^*[\kappa-\kappa'+\kappa'',s_0,\dots,s_t].
\end{equation*}

\begin{theorem}\label{tw_DP_THM}
 Given graph $G$, along with $d\in\mathbb{N^+}$ and nice tree decomposition $(\mathcal{X},T)$ of width $\tw$ for $G$, there exists an algorithm to solve the counting version of the \dS\ problem in $O^*(d^{\tw})$ time.
\end{theorem}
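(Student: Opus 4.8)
The plan is to establish correctness of the dynamic programming recurrences node-by-node, and then bound the running time using the state-changing transformation at join nodes. First I would prove, by induction on the structure of the nice tree decomposition, that for every node $i\in I$ the table entry $D_i[\kappa,s_0,\dots,s_t]$ indeed counts the number of $d$-scattered sets $K\subseteq V_i$ in the terminal subgraph $G_i$ with $|K|=\kappa$ that are \emph{compatible} with the state-configuration $(s_0,\dots,s_t)$ in the following sense: each $v_j$ with $s_j=0$ lies in $K$ and is at distance $\ge d$ in $G_i$ from every other vertex of $K$; each $v_j$ in a low state $s_j\in[1,\lfloor d/2\rfloor]$ has its closest element of $K$ at distance $\ge s_j$ and its second-closest at distance $\ge d-s_j$; and each $v_j$ in a high state has its closest element of $K$ at distance $\ge s_j$. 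The base case is the leaf node, which is immediate. For the introduce step, the key observation (which should be argued carefully) is that introducing $v_{t+1}$ cannot create any new shortest paths between previously present vertices, so the only conditions to check are those relating $v_{t+1}$'s state to $d(v_{t+1},v_j)+s_j$ for $v_j\in X_{i-1}$, and — if $v_{t+1}$ is selected — that selecting it is consistent with the states of its bag-neighbours, which forces the ``reflection'' $s_j\mapsto d-s_j$ for vertices within distance $\lfloor d/2\rfloor$. The forget step is a straightforward summation over the state of the forgotten vertex, with the noted leaf-exception to avoid overcounting.

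The main obstacle, as anticipated in the text, is the join node. Here I would first argue correctness of the \emph{direct} recurrence: at a join node the two terminal subgraphs $G_{i-1}$ and $G_{i-2}$ share exactly the bag $X_i$, and by property~3 of tree decompositions every path between a vertex of $V_{i-1}\setminus X_i$ and a vertex of $V_{i-2}\setminus X_i$ passes through $X_i$; hence for a vertex $v_j$ in a low state not justified by a zero-state bag-vertex, the witnessing closest selection may lie in either subgraph but the constraint ``no selection within distance $d-s_j$'' must then be imposed on the \emph{other} subgraph — this is exactly what the sum over $\mathcal{S}_{t'}^{\le}$ of products $D_{i-1}[\kappa',S,\dots]\cdot D_{i-2}[\kappa-\kappa'+\kappa'',\bar S,\dots]$ encodes, and one must check that no $d$-scattered set of $G_i$ is counted twice (the ``$S$ vs.\ $\bar S$'' dichotomy is mutually exclusive for each such $v_j$) and none is missed, while the $\kappa$-bookkeeping correctly handles the zero-state vertices of the bag, which are counted in both children and so must be added back via $\kappa''$.

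Having established correctness of the direct recurrence, the running-time bound follows from the state-changing transformation. I would argue that the maps $D_i\leftrightarrow D_i^*$ are well-defined mutually inverse linear transformations — this is where \emph{counting} (working over the integers rather than a max-plus semiring) is essential, since the inversion uses subtraction — and that each of the $t+1$ vertex-wise steps costs $O(k\cdot d^{t+1})$ arithmetic operations; hence each transformation costs $O^*(d^{\tw})$. In the transformed representation the join recurrence becomes a pointwise product (with a single convolution over $\kappa\in[1,k]$, costing a factor $k=O^*(1)$), so a join node costs $O^*(d^{\tw})$. Introduce and forget nodes trivially cost $O^*(d^{\tw})$ as well, and since a nice tree decomposition has $O(\tw\cdot n)$ nodes, the total running time is $O^*(d^{\tw})$, as claimed. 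The only remaining point to spell out is that all pairwise distances $d_i(u,v)$ needed in the recurrences can be precomputed in polynomial time (e.g.\ by all-pairs shortest paths in $G$, noting $d_i(u,v)=d(u,v)$ whenever both endpoints lie in $X_i$, since the terminal subgraph contains every shortest path between two bag vertices of $G_i$), so these lookups do not affect the stated bound.
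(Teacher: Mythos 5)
Your proposal follows essentially the same route as the paper's proof: the same state semantics (zero/low/high with the reflection $s_j\mapsto d-s_j$ at introduce nodes), the same inductive correctness argument over leaf, introduce, forget and join nodes, the same $S$ versus $\bar S$ accounting with the $\kappa''$ correction at joins, and the same state-changing (subset-convolution) transformation — whose invertibility via subtraction is indeed the reason the paper works with the counting version — to bring the join cost down to $O^*(d^{\tw})$. The proposal is correct and matches the paper's argument in all essential respects.
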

\begin{proof}
 
Let $U_i(\kappa,s_0,\dots,s_t)=\{K\subseteq V_i|K\cap X_i= \{v_j\in\ X_i|s_j=0\},|K|=\kappa,\forall u,v\in K: d(u,v)\ge d\}$ be the set of all $d$-scattered sets in $G_i$ of size $\kappa$. To show correctness of our algorithm we need to establish that for every node $i\in I$, each table entry $D_i[\kappa,s_0,\dots,s_t]$ contains the size of a partial solution to the problem as restricted to $G_i$, i.e.\ the size of this set $|U_i(\kappa,s_0,\dots,s_t)|$, such that the distance between every pair of vertices in $K$ is at least $d$, while for every vertex $v_j\in X_i$, its state for this entry describes its situation within this partial solution. In particular, we need to show the following:

\begin{gather}
 \forall i\in I, \forall \kappa\in[1,k],\forall (s_0,\dots,s_t)\in[0,d-1]^{t+1},0\le t<\textrm{tw}:\label{tw_cor_1}\\
 D_i[\kappa,s_0,\dots,s_t]=|U_i(\kappa,s_0,\dots,s_t)|: \label{tw_cor_2}\\
 \left\{\forall u,w\in K:d(u,w)\ge d\right\}\wedge\label{tw_cor_3}\\
 \wedge\{\forall v_j\in X_i|s_j\in[1,\lfloor d/2\rfloor],\forall u,w\in K|d(u,v_j)\le d(w,v_j):\label{tw_cor_4}\\
 d(u,v_j)\ge s_j\wedge d(w,v_j)\ge d-s_j \}\wedge \label{tw_cor_5}\\
 \wedge \{\forall v_j\in X_i|s_j\in[\lfloor d/2\rfloor+1,d-1],\forall u\in K: d(u,v_j)\ge s_j\}\label{tw_cor_6}.
\end{gather}

In words, the above states that for every node $i$, $\kappa\in[1,k]$ and all possible state-configurations $(s_0,\dots,s_t)$ (\ref{tw_cor_1}), table entry $D_i[\kappa,s_0,\dots,s_t]$ contains the size of set $U_i(\kappa,s_0,\dots,s_t)$ containing all subsets $K$ of $V_i$ (that include all vertices $v_j\in X_i$ of state $s_j=0$) of size $|K|=\kappa$ (\ref{tw_cor_2}), such that the distance between every pair of vertices $u,w\in K$ is at least $d$ (\ref{tw_cor_3}), for every vertex $v_j\in X_i$ with low state $s_j\in[1,\lfloor d/2\rfloor]$ and a pair of vertices $u,w$ from $K$ with $u$ closer to $v_j$ than $w$ (\ref{tw_cor_4}), its distance to $u$ is at least equal to its state $s_j$, while its distance to $w$ is at least $d-s_j$ (\ref{tw_cor_5}), while for every vertex $v_j\in X_i$ with high state $s_j\in[\lfloor d/2\rfloor+1,d-1]$ and a vertex $u$ from $K$, its state $s_j$ is at most its distance to $u$ (\ref{tw_cor_6}). This is shown by induction on the nodes $i\in I$:
\begin{itemize}
 \item Leaf node $i$ with $X_i=\{v_0\}$: This is the base case of our induction. There is only one $d$-scattered set $K$ in $V_i$ of size $\kappa=1$, for which (\ref{tw_cor_3}-\ref{tw_cor_6}) is true, that includes $v_0$ and only one for $\kappa=0$ that does not. In the following cases, we assume (our inductive hypothesis) that all entries of $D_{i-1}$ (and $D_{i-2}$ for join nodes) contain the correct number of sets $K$.
 \item Introduce node $i$ with $X_i=X_{i-1}\cup\{v_{t+1}\}$: For entries with $s_j\in[1,\lfloor d/2\rfloor]$, validity of (\ref{tw_cor_3},\ref{tw_cor_6}) is not affected, while for (\ref{tw_cor_4}-\ref{tw_cor_5}): it is $s_{t+1}\le d(v_{t+1},v_j)+s_j$ for some vertex $v_j\in X_{i-1}$, for which, by the induction hypothesis we have that $s_j\le d(u,v_j)$ and $d-s_j\le d(w,v_j)$, where $u$ is the closest selection to $v_j$ and $w$ the second closest. To see the same holds for $v_{t+1}$, observe that $s_{t+1}\le d(v_{t+1},v_j)+d(u,v_j)\le d(v_{t+1},u)$ (by substitution) and $d-s_j\le d(w,v_j)\Rightarrow d-d(w,v_j)\le s_j\Rightarrow d-d(w,v_j)+d(v_j,v_{t+1})\le d(v_{t+1},v_j)+s_j\Rightarrow d-d(w,v_{t+1})\le s_{t+1}\Rightarrow d-s_{t+1}\le d(w,v_{t+1})$.
 
 For entries with $s_{t+1}\in[\lfloor d/2\rfloor+1,d-1]$, validity of (\ref{tw_cor_3}-\ref{tw_cor_5}) is not affected, while for (\ref{tw_cor_6}): it is $s_{t+1}\le d(v_{t+1},v_j)+s_j$ for some vertex $v_j\in X_{i-1}$, for which, by the induction hypothesis we have that $s_j\le d(u,v_j)$ and thus $s_{t+1}\le d(v_{t+1},v_j)+d(u,v_j)\le d(v_{t+1},u)$.
 
 For entries with $s_{t+1}=0$, observe that the low states $s_j$ of vertices $v_j\in X_{i-1}$ in the new entry with $d(v_{t+1},v_j)\le\lfloor d/2\rfloor$ (for otherwise their situation has not changed by addition of $v_{t+1}$ with $s_{t+1}=0$) would correspond to a high original state $s'_j=d-s_j$ in the previously computed entry, for which partial solution we know that $d(v_j,u)\ge s'_j,\forall u\in K$, or that the previously closest selection was at distance at least $s'_j$ (\ref{tw_cor_4}-\ref{tw_cor_5}). For high states $s_j$ of vertices $v_j\in X_{i-1}$, the requirement is exactly $d(v_{t+1},v_j)\ge s_j$ (\ref{tw_cor_6}) and finally, for (\ref{tw_cor_3}), if there was some $u\in K$ such that $u\notin X_{i-1}$ and $d(u,v_{t+1})<d$, then there must be some $v_j\in X_{i-1}$  (on the path between $u$ and $v_{t+1}$), for which if $d(v_{t+1},v_j)\le\lfloor d/2\rfloor$ and $s_j$ is low then (\ref{tw_cor_6}) was false (as $s'_j$ must have been high and matching $d(u,v_j)$), while if $d(v_{t+1},v_j)>\lfloor d/2\rfloor$ and $s_j$ is high, then (\ref{tw_cor_4}-\ref{tw_cor_5}) was false (in all other cases it would not be $d(u,v_{t+1})<d$).
 \item Forget node $i$ with $X_i=X_{i-1}\setminus\{v_{t+1}\}$: In a forget node, the only difference for the partial solutions in which the forgotten vertex was of state $s_{t+1}=0$ is that now vertex $v_{t+1}$ is included in set $K$ only and not $X_i$. Thus, due to (\ref{tw_cor_3}), the correct number is indeed the sum over all states for $v_{t+1}$.
 \item Join node $i$ with $X_i=X_{i-1}=X_{i-2}$: Observe that for (\ref{tw_cor_3}), if there was a pair $u,w\in K\cap X_i$ or $u\in K\cap X_i,w\in K\setminus X_i$ at $d(u,w)<d$, then (\ref{tw_cor_3}) was not true for either $i-1$ or $i-2$, while if there was a pair $u\in K\cap V_{i-1}\setminus X_{i-1},w\in K\cap V_{i-2}\setminus X_{i-2}$ with $d(u,w)<d$, then there must be some vertex $v_j\in X_i$ (on the path between the two) for which (\ref{tw_cor_5}) was not true. For (\ref{tw_cor_4}-\ref{tw_cor_6}), observe that for vertices $v_j$ of low state $s_j$, (\ref{tw_cor_4}-\ref{tw_cor_5}) must have been true for either $i-1$ or $i-2$ and (\ref{tw_cor_6}) for the other, while for vertices $v_j$ of high state $s_j$ it suffices that (\ref{tw_cor_6}) must have been true for both.
\end{itemize}

 For the algorithm's complexity, there are $k\cdot d^{\tw}$ entries for each table $D_i$ of any node $i\in I$, with $|I|=O(\tw\cdot|V|)$ for nice tree decomposition $(\mathcal{X},T=(I,F))$, while any entry can be computed in time $O(1)$ for leaf and introduce nodes, $O(d)$ for forget nodes, while the state changes can be computed in $O(k\cdot\tw\cdot d^\tw)$ time, with each entry of the transformed table $D_i^*$ computed in $O(k)$ time.  
\end{proof}

\section{Vertex Cover, Feedback Vertex Set: W[1]-Hardness}\label{sec_vc}

In this section we show that the edge-weighted variant of the \dS\ problem parameterized by $\vc+k$ is W[1]-hard via a reduction from \textsc{$k$-Multicolored independent Set}.

\paragraph{Construction:} Given an instance $[G=(V,E),k]$ of \textsc{$k$-Multicolored independent Set}, we construct an instance $[G'=(V',E'),k']$ of edge-weighted \dS\, where $d=6n$. First, for every color class $V_i\subseteq V$ we create a set $P_i\subseteq V'$ of $n$ vertices $p_l^i,\forall l\in[1,n],\forall i\in[1,k]$ (that directly correspond to the vertices of $V_i$). Next, for each $i\in[1,k]$ we make a pair of vertices $a_i,b_i$, connecting $a_i$ to each vertex $p_l^i$ by an edge of weight $n+l$, while $b_i$ is connected to each vertex $p_l^i$ by an edge of weight $2n-l$. Next, for every non-edge $e\in \bar{E}$ (i.e.\ $\bar{E}$ contains all pairs of vertices from $V$ that are not connected by an edge from $E$) between two vertices from different $V_{i_1},V_{i_2}$ (with $i_1\not=i_2$), we make a vertex $u_e$ that we connect to vertices $a_{i_1},b_{i_1}$ and $a_{i_2},b_{i_2}$. We set the weights of these edges as follows: suppose that $e$ is a non-edge between the $j_1$-th vertex of $V_{i_1}$ and the $j_2$-th vertex of $V_{i_2}$. We then set $w(u_e,a_{i_1})=5n-j_1$, $w(u_e,b_{i_1})=4n+j_1$ and $w(u_e,a_{i_2})=5n-j_2$, $w(u_e,b_{i_2})=4n+j_2$. Next, for every pair of $i_1,i_2$ we make two vertices $g_{i_1,i_2}$, $g'_{i_1,i_2}$. We connect $g_{i_1,i_2}$ to all vertices $u_e$ that correspond to non-edges $e$ between vertices of the same pair $V_{i_1},V_{i_2}$ by edges of weight $(6n-1)/2$ and also $g_{i_1,i_2}$ to $g'_{i_1,i_2}$ by an edge of weight $(6n+1)/2$. In this way, a $k$-multicolored independent set in $G$ corresponds to a $6n$-scattered set in $G'$ of size $k^2$. This concludes the construction of $G'$, with Figure \ref{fig:Whard_VC} providing an illustration.

   \begin{figure}[htbp]
  \centerline{\includegraphics[width=110mm]{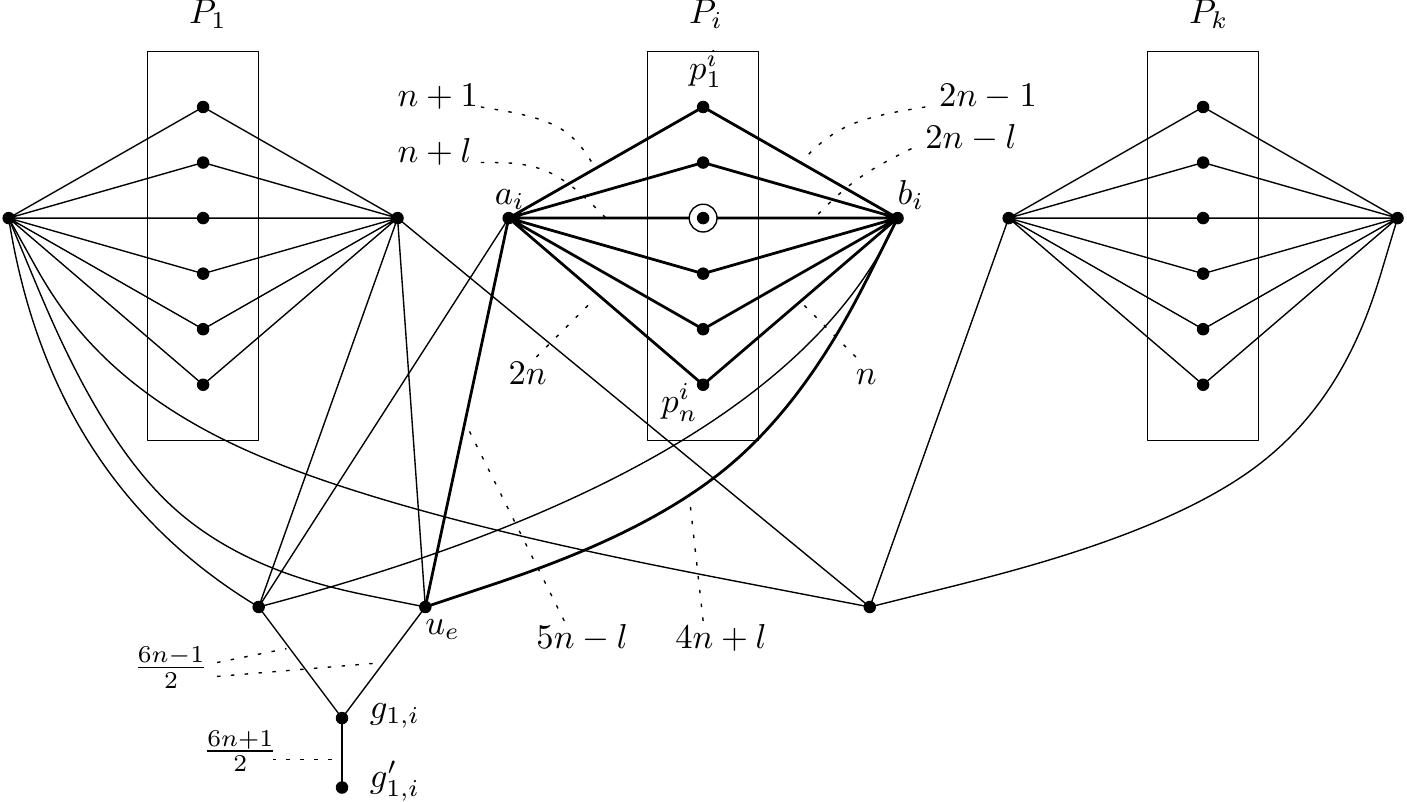}}
  \caption{A general picture of graph $G'$, where the circled vertex is $p_l^i$ and dotted lines match weights to edges.}
  \label{fig:Whard_VC} \end{figure}
  
\begin{lemma}\label{Whard-FWD}
  If $G$ has a $k$-multicolored independent set, then $G'$ has a $6n$-scattered set of size $k+2{k \choose 2}=k+k(k-1)=k^2$.
\end{lemma}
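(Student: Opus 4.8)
The plan is to exhibit the $6n$-scattered set explicitly from a given $k$-multicolored independent set $S=\{v_{j_1}^1,\dots,v_{j_k}^k\}$, where $v_{j_i}^i$ denotes the $j_i$-th vertex of color class $V_i$, and then verify that all pairwise distances in $G'$ are at least $6n$. For each color $i$ I would select the vertex $p_{j_i}^i\in P_i$ corresponding to the chosen vertex of $V_i$; this gives $k$ selections. For each pair $i_1<i_2$ I would select the vertex $g'_{i_1,i_2}$; since there are $\binom{k}{2}$ such pairs, the total is $k+\binom{k}{2}$. Wait — the statement claims size $k+2\binom{k}{2}=k^2$, so in fact I should select \emph{two} vertices per pair. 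The natural second choice is the vertex $u_e$ where $e$ is the non-edge between $v_{j_{i_1}}^{i_1}$ and $v_{j_{i_2}}^{i_2}$ (this is a genuine non-edge precisely because $S$ is independent), together with $g'_{i_1,i_2}$. So the solution set $K$ consists of: the $k$ vertices $p_{j_i}^i$; the $\binom{k}{2}$ vertices $u_{e_{i_1,i_2}}$ for the non-edges picked out by $S$; and the $\binom{k}{2}$ vertices $g'_{i_1,i_2}$, for a total of $k+2\binom{k}{2}=k^2$.

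The core of the proof is then a distance check, which I would organize by the \emph{type} of pair of selected vertices. First, two vertices $p_{j_{i_1}}^{i_1}, p_{j_{i_2}}^{i_2}$ with $i_1\neq i_2$: any path between them must pass through the $a$/$b$ vertices, and the shortest such route uses edges of weight $\ge n+1$ on each side plus a connector, which should come out to at least $6n$; when $i_1=i_2$ they are the same vertex so nothing to check. Second, $p_{j_i}^i$ versus $u_{e_{i_1,i_2}}$: if $i\notin\{i_1,i_2\}$ the path is long because it must traverse two separate ``gadgets''; if $i=i_1$, say, the shortest path goes $p_{j_{i_1}}^{i_1}\to a_{i_1}\to u_e$ of weight $(n+j_{i_1})+(5n-j_{i_1})=6n$ (and symmetrically through $b_{i_1}$ it is $(2n-j_{i_1})+(4n+j_{i_1})=6n$) — this is exactly where the weight assignment is engineered to make the distance precisely $6n$, and crucially only for the \emph{matching} index $j_{i_1}$. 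Third, $p_{j_i}^i$ versus $g'_{i_1,i_2}$: routing through a $u_e$ costs at least $6n$ already before reaching $g'$. Fourth, $u_{e}$ versus $u_{e'}$: if they share a color-pair, the distance via $g_{i_1,i_2}$ is $2\cdot(6n-1)/2=6n-1$ — so I must be careful here; but since for a fixed pair $(i_1,i_2)$ we select only \emph{one} $u_e$, two selected $u$'s never share a color-pair, and across different pairs the distance is larger. Fifth, $u_e$ versus $g'_{i_1,i_2}$: if $e$ belongs to pair $(i_1,i_2)$ the distance is $(6n-1)/2+(6n+1)/2=6n$; otherwise it is larger. Sixth, $g'_{i_1,i_2}$ versus $g'_{i_1',i_2'}$ for distinct pairs: the path must go $g'\to g\to u_e\to\dots$, accumulating well over $6n$.

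The main obstacle — and the part deserving the most care — is the third and fourth items, i.e.\ ruling out ``short'' paths that leave a gadget and come back, and handling the $6n-1$ near-miss between two $u_e$ vertices attached to the same $g_{i_1,i_2}$. The argument for the former is that any path exiting the subgraph around color $i$ and returning must pay for at least two incident gadget edges, each of weight $\ge n+1$ or $\ge 4n$ (for the $u_e$ edges), so such detours only increase distances; I would phrase this as: every vertex $a_i,b_i$ has all incident edges of weight $\ge n+1$, every $u_e$ has all incident edges of weight $\ge 4n+1$, and every $g_{i_1,i_2}$ has all incident edges of weight $\ge (6n-1)/2$, hence any path visiting an ``internal'' vertex not between the two endpoints contributes at least $2(n+1)>0$ extra. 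For the $6n-1$ issue, the resolution is purely combinatorial: our choice selects at most one $u_e$ per color-pair, so the problematic pair never both lie in $K$. I would also remark that the construction implicitly requires $6n$ to be odd (so that $(6n\pm1)/2$ are integers), which is why one takes $d=6n$ with, say, $n$ chosen so this holds — or the paper intends $n$ to be replaced by a suitable value; in any case I would simply carry the weights as written and note integrality where needed. Assembling these case checks establishes that $K$ is a $6n$-scattered set of size $k^2$, completing the proof.
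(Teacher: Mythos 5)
Your proposal is correct and takes essentially the same approach as the paper: you construct the identical set (the $k$ vertices $p_{j_i}^i$, one vertex $u_e$ per color pair corresponding to the non-edge between the selected vertices, and all $\binom{k}{2}$ vertices $g'_{i_1,i_2}$) and then verify pairwise distances. Your case analysis is somewhat more exhaustive than the paper's, which only spells out the tight distance-exactly-$6n$ pairs, but the substance is the same.
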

\begin{proof}
 Let $I\subseteq V$ be a multicolored independent set in $G$ of size $k$ and $v_{l_i}^i$ denote the vertex selected from each $V_i$, or $I\coloneqq\{v_{l_1}^1,\dots,v_{l_k}^k\}$. Let $S\subseteq V'$ include the set of vertices $p_{l_i}^i$ in $G'$ that correspond to each $v_{l_i}^i$. For any pair $i,j\in[1,k]$ of indices with $i\not=j$, let $u_e$ be the vertex corresponding to the non-edge between vertices $v_{l_i}^i,v_{l_j}^j\in I$. All these $u_e$ vertices exist, as $I$ is a $k$-multicolored independent set. We include all these $u_e$ vertices in $S$ and also every $g'_{i,j}$ that is connected to $g_{i,j}$ that each such $u_e$ is connected to. Now $S$ is of size $k+2{k \choose 2}$ and we claim it is a $6n$-scattered set: all selected vertices $p_{l_i}^i$ are at distance $n+l_i+5n-l_i=6n$ via $a_i$ and distance $2n-l_i+4n+l_i=6n$ via $b_i$ from any selected vertex $u_e$ that corresponds to a non-edge ``adjacent'' to their corresponding $v_{l_i}^i$, while every selected vertex $u_e$ corresponding to a non-edge between two vertices of groups $V_{i_1},V_{i_2}$ is at distance $(6n-1)/2+(6n+1)/2=6n$ from every selected vertex $g'_{i_1,i_2}$ that is connected to $g_{i_1,i_2}$ that connects all such $u_e$ vertices between these groups.  
\end{proof}

\begin{lemma}\label{Whard-BWD}
  If $G'$ has a $6n$-scattered set of size $k+2{k\choose 2}=k+k(k-1)=k^2$, then $G$ has a $k$-multicolored independent set.
\end{lemma}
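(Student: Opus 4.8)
The plan is to reverse the construction: given a $6n$-scattered set $S$ in $G'$ with $|S|=k^2$, I will argue that $S$ must have a very rigid structure, from which a multicolored independent set in $G$ can be read off. First I would establish a \emph{counting/packing} argument showing how the $k^2$ selections must distribute among the vertex classes of $G'$. The natural partition is into the $k$ ``pendant'' gadgets (each consisting of $a_i,b_i$ together with $P_i$) and the $\binom{k}{2}$ ``pair'' gadgets (each consisting of the relevant $u_e$ vertices together with $g_{i_1,i_2},g'_{i_1,i_2}$). The key local claims are: (i) from each pendant gadget one can select at most one vertex, and it is never advantageous to pick $a_i$ or $b_i$ themselves (picking some $p_l^i$ dominates); and (ii) from each pair gadget one can select at most two vertices at pairwise distance $\ge 6n$ — and in fact exactly the pattern ``one $u_e$ plus $g'_{i_1,i_2}$'' (or $g_{i_1,i_2}$), since the $u_e$'s attached to a common $g_{i_1,i_2}$ are pairwise at distance $6n-1$, and $g'_{i_1,i_2}$ is the only vertex at distance exactly $6n$ from all of them. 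Since $k + 2\binom{k}{2} = k^2$, to reach size $k^2$ equality must hold everywhere: exactly one $p_{l_i}^i$ per color class and exactly one $u_e$ (plus its $g'$) per pair.

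Next I would verify \emph{consistency} across gadgets, which is where the weights do their work. Suppose $p_{l_i}^i$ is selected from $P_i$ and $u_e$ is selected from the pair gadget for $(i,j)$, where $e$ is the non-edge between the $j_1$-th vertex of $V_i$ and the $j_2$-th vertex of $V_j$. The two $a_i$-paths give $d(p_{l_i}^i,u_e)\le (n+l_i)+(5n-j_1) = 6n + (l_i - j_1)$, and the two $b_i$-paths give $d(p_{l_i}^i,u_e)\le (2n-l_i)+(4n+j_1) = 6n - (l_i - j_1)$; for the scattered-set constraint $d(p_{l_i}^i,u_e)\ge 6n$ to hold, both bounds force $l_i = j_1$. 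By the symmetric argument on the $(j)$-side, $l_j = j_2$. Hence the selected $u_e$ for the pair $(i,j)$ is forced to be the non-edge joining exactly the $l_i$-th vertex of $V_i$ and the $l_j$-th vertex of $V_j$. In particular such a non-edge \emph{exists} in $\bar E$ for every pair $i\ne j$, i.e.\ $\{v_{l_1}^1,\dots,v_{l_k}^k\}$ spans no edge of $G$ — that is a multicolored independent set.

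The main obstacle I expect is step (i)–(ii), i.e.\ pinning down rigidly which vertices can be co-selected, because one must rule out ``mixed'' configurations (e.g.\ selecting $a_i$ or $b_i$ instead of a $p_l^i$, selecting two $u_e$'s from different pair gadgets that happen to be far apart, or selecting a $p_l^i$ together with a $u_e$ from a pair gadget not involving $i$). This requires a careful case analysis of all pairwise distances in $G'$: one checks that $d(a_i,b_i)$, $d(a_i,u_e)$, $d(u_e,u_{e'})$ for $e,e'$ in the same or different pair gadgets, $d(p_l^i,u_e)$ for $e$ not adjacent to $V_i$, and $d(g_{i_1,i_2},g'_{i_3,i_4})$ are all either well below $6n$ (blocking co-selection) or exactly $6n$ only in the ``intended'' cases. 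Since all edge weights lie in an $O(n)$ window around fixed values ($n$--$2n$, $4n$--$5n$, $(6n\pm1)/2$), these distances are all within $O(n)$ of $6n$, so the analysis is finite and mechanical; the delicate point is simply bookkeeping the $\pm l$ and $\pm j$ offsets so that equality $=6n$ is achieved precisely when the indices match. Once the structural rigidity is in hand, the extraction of the independent set and the final cardinality bookkeeping ($k + 2\binom{k}{2}=k^2$) are immediate, and combined with Lemma~\ref{Whard-FWD} this shows the reduction is correct; since $\vc(G')=O(k^2)$ (the set $\{a_i,b_i\}\cup\{g_{i_1,i_2},g'_{i_1,i_2}\}$ plus one vertex per $P_i$ is a vertex cover, or one argues directly) and $k'=k^2$, W[1]-hardness parameterized by $\vc+k'$ follows.
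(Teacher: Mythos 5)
Your proposal is correct and follows essentially the same route as the paper's proof: a counting argument forcing exactly one $u_e$ plus $g'_{i_1,i_2}$ per pair gadget and one $p_{l_i}^i$ per class, followed by the two-sided distance computation via $a_i$ (giving $6n+(l_i-j_1)$) and via $b_i$ (giving $6n-(l_i-j_1)$) that forces the index match $l_i=j_1$. The only minor slip is the parenthetical suggestion that $g_{i_1,i_2}$ could be selected in place of $g'_{i_1,i_2}$ — it cannot, being at distance $(6n-1)/2<6n$ from every $u_e$ — but your own subsequent sentence already identifies $g'_{i_1,i_2}$ as the unique valid second selection, so this does not affect the argument.
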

\begin{proof}
 Let $S\subseteq V'$ be the $6n$-scattered set, with $|S|=k+2{k\choose 2}$. As the distance via $g_{i,j}$ between any two vertices $u_e,u_h$ corresponding to non-edges between vertices of the same groups $V_i,V_j$ is $6n-1$, set $S$ can contain at most one such vertex for every such pair of groups, their number being ${k\choose 2}$. Since the size of $S$ is $k+2{k\choose 2}$, the set must also contain one other vertex per group, the only choices available being vertices $g'_{i,j}$ at distance $(6n-1)/2+(6n+1)/2=6n$ from any such $u_e$, leaving the $k$ choices for at most one vertex from each $P_i$, as the distance between any pair $p_{l_1}^i,p_{l_2}^i$ is $2n+l_1+l_2<6n$ via $a_i$ and $4n-l_1-l_2<6n$ via $b_i$. Now, let $u_e\in S$ be a selected vertex corresponding to a non-edge between vertices $v_{l_i}^i,v_{l_j}^j$ from groups $V_i,V_j$ and $p_{o_i}^i,p_{o_j}^j\in S$ be the vertices selected from $P_i,P_j$. Vertex $u_e$ is at distance $5n-l_i+n+o_i=6n+o_i-l_i$ via $a_i$ and $4n+l_i+2n-o_i=6n+l_i-o_i$ via $b_i$ from $p_{o_i}^i\in P_i$, while at distance $5n-l_j+n+o_j=6n+o_j-l_j$ via $a_j$ and $4n+l_j+2n-o_j=6n+l_j-o_j$ via $b_j$ from $p_{o_j}^j\in P_j$. It is not hard to see that if $o_i\not=l_i$ then $u_e$ and $p_{l_i}^i$ cannot be together in $S$, while also if $o_j\not=l_j$ then $u_e$ and $p_{l_j}^j$ cannot be together in $S$. Thus, there must be no edge between every pair of vertices $v_{l_i}^i,v_{l_j}^j$ that correspond to $p_{l_i}^i,p_{l_j}^j\in S$, meaning the set $I$ that includes all such $v_{l_i}^i$ is a $k$-multicolored independent set.  
\end{proof}

\begin{theorem}\label{vc_Whard_THM}
 The edge-weighted \dS\ problem is W[1]-hard parameterized by $\vc+k$. Furthermore, if there is an algorithm for edge-weighted \dS\ running in time $n^{o(\sqrt{\vc}+\sqrt{k})}$ then the ETH is false.
\end{theorem}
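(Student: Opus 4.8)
The plan is to read the construction preceding Lemma~\ref{Whard-FWD}, together with Lemmas~\ref{Whard-FWD} and~\ref{Whard-BWD}, as a parameterized reduction from \textsc{$k$-Multicolored Independent Set} (which is W[1]-hard parameterized by $k$) to edge-weighted \dS, and then to read off the ETH bound by tracking how the parameters grow under this reduction. Two routine facts remain to be checked for a parameterized reduction: that the map is polynomial-time computable, and that the target parameter is bounded by a function of $k$. For the first, $G'$ has $|V'| = nk + 2k + |\bar{E}| + 2{k \choose 2} = O(n^2 k^2)$ vertices, all edge weights are at most $6n$, and $d = 6n$, so $(G', k')$ with $k' := k^2$ is produced in time polynomial in the size of $(G,k)$; and the equivalence ``$G$ has a $k$-multicolored independent set if and only if $G'$ has a $6n$-scattered set of size $k'=k^2$'' is exactly the content of Lemmas~\ref{Whard-FWD} and~\ref{Whard-BWD}.

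For the parameter bound I would exhibit a small explicit vertex cover of $G'$, namely $C := \{a_i, b_i : i \in [1,k]\} \cup \{g_{i_1,i_2} : 1 \le i_1 < i_2 \le k\}$. Every edge of $G'$ meets $C$: the edges inside each $P_i$-gadget and the four edges from each non-edge vertex $u_e$ to $a_{i_1}, b_{i_1}, a_{i_2}, b_{i_2}$ are covered by some $a_i$ or $b_i$, while the edges $u_e$--$g_{i_1,i_2}$ and $g_{i_1,i_2}$--$g'_{i_1,i_2}$ are covered by $g_{i_1,i_2}$. Hence $\vc(G') \le 2k + {k \choose 2} = O(k^2)$, so $\vc(G') + k' = O(k^2)$ depends only on $k$; combined with the equivalence above, this establishes that edge-weighted \dS\ is W[1]-hard parameterized by $\vc + k$.

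For the ETH statement I would invoke the standard fact that, unless the ETH fails, \textsc{$k$-Clique} --- and hence \textsc{$k$-Multicolored Independent Set} --- admits no algorithm running in time $f(k) \cdot N^{o(k)}$ on $N$-vertex graphs, for any computable $f$. Assume for contradiction that edge-weighted \dS\ can be solved in time $n^{o(\sqrt{\vc} + \sqrt{k})}$ on $n$-vertex instances. Running this algorithm on $G'$ takes time $|V'|^{o(\sqrt{\vc(G')} + \sqrt{k'})}$; since $\vc(G') = O(k^2)$ and $k' = k^2$ we have $\sqrt{\vc(G')} + \sqrt{k'} = \Theta(k)$, and since $|V'| = O(n^2 k^2)$ is polynomial in $N := |V(G)| = nk$, this running time is $N^{O(1) \cdot o(k)} = N^{o(k)}$. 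Adding the polynomial time needed to build $G'$ yields an $f(k) \cdot N^{o(k)}$-time algorithm for \textsc{$k$-Multicolored Independent Set}, contradicting the ETH. The only delicate point is this last bit of bookkeeping: one must confirm that the construction enlarges $n$ only polynomially and the parameter only quadratically in $k$, so that the $o(\cdot)$ in the exponent, evaluated at a quantity that is $\Theta(k)$, collapses back to $o(k)$; everything else follows directly from the two lemmas, which is why I expect no real obstacle beyond this accounting.
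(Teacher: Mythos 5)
Your proposal is correct and follows essentially the same route as the paper: it exhibits the same vertex cover $\{a_i,b_i\}\cup\{g_{i_1,i_2}\}$ to get $\vc(G')\le 2k+{k\choose 2}=O(k^2)$, notes $k'=O(k^2)$, and combines this with Lemmas \ref{Whard-FWD} and \ref{Whard-BWD}. The only difference is that you spell out the ETH bookkeeping (reducing from the $N^{o(k)}$ lower bound for \textsc{$k$-Multicolored Independent Set}) that the paper leaves implicit, which is a welcome addition but not a different argument.
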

\begin{proof}
 Observe that the set $Q\subset V'$ that includes all vertices $a_i,b_i,\forall i\in[1,k]$ and all vertices $g_{i,j},\forall i\not=j\in[1,k]$ is a vertex cover of $G'$, as all edges have exactly one endpoint in $Q$. This means $\vc(G')\le2k+{k\choose 2}=O(k^2)$. In addition, the relationship between the sizes of the solutions of \dS\ and \textsc{$k$-Multicolored Independent Set} is $k'=|S|=k+2{k\choose 2}=O(k^2)$. Thus, the construction along with lemmas \ref{Whard-FWD} and \ref{Whard-BWD}, indeed imply the statement.  
\end{proof}

Using essentially the same reduction (with minor modifications) we also obtain similar hardness results for unweighted \dS\ parameterized by $\fvs$:

\begin{corollary}\label{fvs_cor}
 The unweighted \dS\ problem is W[1]-hard parameterized by $\fvs+k$. Furthermore, if there is an algorithm for unweighted \dS\ running in time $n^{o(\fvs+\sqrt{k})}$ then the ETH is false.
\end{corollary}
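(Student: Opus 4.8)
The plan is to adapt the weighted construction from Theorem~\ref{vc_Whard_THM} so that all edge-weights become $1$, at the cost of turning the feedback-vertex-set-sized skeleton from a vertex cover into merely a feedback vertex set. Concretely, every weighted edge of $G'$ of weight $w$ is to be replaced by a path on $w-1$ internal (subdivision) vertices, so that the distance between its original endpoints is exactly $w$ in the unweighted graph. Since all the distance computations in Lemmas~\ref{Whard-FWD} and \ref{Whard-BWD} only used shortest-path lengths, the same arithmetic (e.g.\ $n+l_i$ along $a_i$, $2n-l_i$ along $b_i$ composing to $6n$, and $(6n-1)/2 + (6n+1)/2 = 6n$ through $g_{i,j}$) goes through verbatim, so again a $k$-multicolored independent set corresponds to a $6n$-scattered set of size $k+2\binom{k}{2}=k^2$; one only needs to double-check that subdivided edges never create a \emph{new} short path between two candidate selection vertices (they do not, since subdividing an edge cannot decrease any distance). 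For the parameter bound, observe that $Q = \{a_i,b_i : i\in[1,k]\}\cup\{g_{i,j}:i\neq j\}$ together with one subdivision vertex per path would be a vertex cover, but for the feedback vertex set it suffices to take the original $Q$ (of size $2k+\binom{k}{2}=O(k^2)$): after deleting $Q$, every remaining component is a subdivided path or a subdivided star centered at a (now deleted) vertex, hence a tree, so $G'-Q$ is a forest and $\fvs(G')\le 2k+\binom{k}{2}=O(k^2)$. Since the size of the produced instance is polynomial in $n$, this is a polynomial-time parameterized reduction from the W[1]-hard problem \textsc{$k$-Multicolored Independent Set} with new parameter $\fvs+k' = O(k^2)$, establishing W[1]-hardness, and the ETH lower bound $n^{o(\fvs+\sqrt{k})}$ follows exactly as in Theorem~\ref{vc_Whard_THM} from the known $n^{o(k)}$ ETH lower bound for \textsc{$k$-Multicolored Independent Set} (note $\sqrt{\fvs}=\Theta(k)=\sqrt{k'}$, but since $\fvs$ is already quadratic we phrase the bound as $n^{o(\fvs+\sqrt{k})}$).

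The main obstacle I anticipate is the "minor modifications" hidden in the parities: the weights $(6n-1)/2$ and $(6n+1)/2$ are only integers when $n$ is odd, and subdivision requires integer path lengths, so either one assumes $n$ odd without loss of generality (pad each color class with an isolated dummy vertex if $n$ is even), or one rescales the whole construction by setting $d = 12n$ and doubling every weight. One must also re-verify the tightness of the "at most one $u_e$ per pair $(i,j)$" and "no selection inside a subdivided edge together with an endpoint selection" arguments in the unweighted setting — these are the places where a carelessly chosen subdivision length could let the scattered set cheat. Given the robustness of the distance bookkeeping, though, I expect these checks to be routine rather than genuinely hard, so the corollary should follow with a short paragraph verifying the forest structure of $G'-Q$ and re-running the two direction lemmas with "weighted distance" read as "unweighted shortest-path distance."
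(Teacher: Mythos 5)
Your overall strategy (subdivide the weighted edges into unweighted paths of the corresponding lengths) is the same as the paper's, and it does deliver the W[1]-hardness claim. However, there are two genuine problems, one of which breaks the ETH part of the statement as you have argued it.

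The serious gap is the feedback vertex set bound. You take $Q=\{a_i,b_i\}\cup\{g_{i,j}\}$ and conclude $\fvs(G')\le 2k+{k\choose 2}=O(k^2)$. But the corollary claims that an $n^{o(\fvs+\sqrt{k})}$ algorithm refutes the ETH, with \emph{no square root on $\fvs$}; contrast this with Theorem \ref{vc_Whard_THM}, where the bound is $n^{o(\sqrt{\vc}+\sqrt{k})}$ precisely because $\vc(G')=\Theta(k^2)$. With $\fvs(G')=\Theta(k^2)$, an algorithm running in time $n^{o(\fvs)}$ only yields an $n^{o(k^2)}$ algorithm for \textsc{$k$-Multicolored Independent Set} (e.g.\ $n^{k^{1.5}}$ is permitted), which does not contradict the $n^{o(k)}$ ETH lower bound; your construction therefore only supports $n^{o(\sqrt{\fvs}+\sqrt{k})}$. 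The missing observation is that the $g_{i,j}$ vertices need not be deleted at all: after removing only the $2k$ vertices $a_i,b_i$, every remaining component is a tree (each $g_{i,j}$ becomes the center of a tree whose branches are the single edges to the $u_e$'s, the dangling subdivided paths from each $u_e$ toward the deleted $a$'s and $b$'s, and the path to $g'_{i,j}$; every cycle of $G'$ passes through some $a_i$ or $b_i$). Hence $\fvs(G')\le 2k=O(k)$, which is what makes the linear-in-$\fvs$ ETH bound go through. Your parenthetical ``since $\fvs$ is already quadratic we phrase the bound as $n^{o(\fvs+\sqrt{k})}$'' gets the direction of the implication backwards.

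A second, smaller error: your claim that $(6n-1)/2$ and $(6n+1)/2$ ``are only integers when $n$ is odd'' is false --- $6n\pm1$ is odd for every integer $n$, so these are never integers and padding a color class does not help. Your fallback of rescaling to $d=12n$ does work, but the paper's fix is cleaner and worth noting: it does not subdivide the $g$-gadget uniformly, instead keeping a \emph{single edge} between $g_{i_1,i_2}$ and each adjacent $u_e$ and making $g_{i_1,i_2}$--$g'_{i_1,i_2}$ a path of length $d-1=6n-1$, so the two lengths still sum to $d$. This choice also keeps the verification you correctly flag as necessary (that no vertex interior to a subdivided edge can be selected in place of the intended $u_e$ and $g'_{i_1,i_2}$) short: any two vertices lying on the paths between some $g_{i_1,i_2}$ and the $a_i,b_i$'s are at distance $<2d$, and any deviation from the intended pair drops some distance below $d$. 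You defer this check rather than carrying it out, so you should add it explicitly in whichever variant you adopt.
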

\begin{proof}
 The modifications to the above construction that we require are the following: each edge $e$ of weight $w(e)$ is substituted by a path of length $w(e)$, apart from the edge between every $g_{i_1,i_2}$ to every $g'_{i_1,i_2}$ that is now a path of length $d-1=6n-1$ and all edges between every $g_{i_1,i_2}$ to all adjacent $u_e$ that correspond to non-edges between vertices of pair $V_{i_1},V_{i_2}$ that are now only a single edge. In this way, Lemma \ref{Whard-FWD} goes through unchanged, while for Lemma \ref{Whard-BWD}, it suffices to observe that no two vertices anywhere on the paths between some $g_{i_1,i_2}$ and some $a_i,b_i$ could be selected instead of the intended selection of $g'_{i_1,i_2}$ and some $u_e$ that matches the selections from $V_{i_1},V_{i_2}$, as the distance between any two vertices between $g_{i_1,i_2}$ and some $a_i,b_i$ is always $<2d=12n$, while if the selected vertices are not exactly some $g'_{i_1,i_2}$ and (the correct) $u_e$, then the minimum distance between these selections and the closest selection from $V_{i_1},V_{i_2}$ will be less than $d$.
 
 It is not hard to see that the set $Q$ containing all $a_i,b_i$, $\forall i\in[1,k]$ is a feedback vertex set of $G'$, as removal of all these vertices results in an acyclic graph, hence $\fvs(G')\le O(k)$.  
\end{proof}

\section{Vertex Cover: FPT Algorithm}\label{sec_vc_algo}

We next show that unweighted \dS\ admits an FPT algorithm parameterized by $\vc$, in contrast to its
weighted version (Theorem \ref{vc_Whard_THM}). Given graph $G$ along with a vertex cover $C$ of $G$ and $d\ge3$, our
algorithm first defines an instance of \textsc{Partial Set Packing}, where elements may be \emph{partially} included in some sets and then solves the problem by dynamic programming. In this variant, any element has a \emph{coefficient} of inclusion in each set and a collection of sets is a solution if there is no pair of sets for which the sum of any element's coefficients is $>1$.

We make a set for each vertex and an element for each vertex of $C$. Our aim is to identify two vertices (sets) as incompatible selections if there is some third ``middle'' vertex from $C$ (elements), whose sum of distances to the other two is $<d$, based on the observation that for any vertex not belonging to the $d$-scattered set, only one selection can be at distance $<d/2$, yet any number of selections can be at distance $\ge d/2$ (consider a star as an example).

These coefficients of inclusion are then used to assign vertices of $C$ to their closest possible selections, with complete inclusion (i.e.\ coefficient equal to 1) implying the distance is $<d/2$ and no inclusion (equal to 0) that it is $>d/2$. For the middle vertices, depending on the parity of $d$ (and causing the difference in running times), we require either one (i.e.\ $1/2$) or two ($1/3$ and $2/3$) extra coefficients to be able to determine the exact position of a possible middle vertex from $C$ (element) on the path between two potential selections (sets). If the sum of coefficients is $\le1$, the vertex from $C$ is either a middle vertex on the path between the two selections or at distance $<d/2$ from only one of them. On the other hand, if the sum of coefficients is $>1$, then the sum of distances from the vertex to the two selections is $<d$ and the incompatibility of the sets implies the corresponding vertices cannot both belong in the $d$-scattered set.

\begin{theorem}\label{vc_Algo_THM}
 Given graph $G$, along with $d>2$ and a vertex cover of size $\vc$ of $G$, there exists an algorithm solving the unweighted \dS\ problem in $O^*(3^{\vc})$ time for even $d$ and $O^*(4^{\vc})$ time for odd $d$.
\end{theorem}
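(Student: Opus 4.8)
The plan is to make precise the reduction to \textsc{Partial Set Packing} sketched above and then give a dynamic-programming routine for the latter. Let $C$ be the given vertex cover, $|C|=\vc$, and let $I=V\setminus C$ be the independent set. First I would observe the key structural fact: for any vertex $x$ and any two distinct candidate selections $u,v$ of a $d$-scattered set with $d(u,v)<d$, any shortest $u$–$v$ path of length $<d$ passes through a vertex of $C$ (since $I$ is independent, no two consecutive internal path-vertices lie in $I$, but more simply any path of length $\ge 2$ meets $C$, and a path of length $1$ would be an edge $uv$ of weight $1<d$ which is trivially forbidden). Hence two vertices $u,v$ are \emph{incompatible} (cannot both be in the solution) if and only if there is some $c\in C$ with $d(u,c)+d(c,v)<d$ (we also directly forbid $u,v$ adjacent). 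This reduces checking the scattered-set condition to a condition ``per middle vertex of $C$.''

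Next I would set up the \textsc{Partial Set Packing} instance: a ground set with one element $e_c$ for each $c\in C$, and one set $S_v$ for each $v\in V$, where the coefficient of $e_c$ in $S_v$ depends only on $d(v,c)$. For even $d$: coefficient $1$ if $d(v,c)<d/2$, coefficient $1/2$ if $d(v,c)=d/2$, coefficient $0$ if $d(v,c)>d/2$. For odd $d$: coefficient $1$ if $d(v,c)\le (d-1)/2-1$... here I would instead use the three values $\{0,\tfrac13,\tfrac23,1\}$ keyed to whether $d(v,c)$ is less than, equal to $\lfloor d/2\rfloor$, equal to $\lceil d/2\rceil$, or greater than $\lceil d/2\rceil$, chosen so that $d(u,c)+d(c,v)<d$ iff the two coefficients of $e_c$ sum to $>1$. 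I would check this equivalence by a short case analysis on the values $d(u,c)$ and $d(c,v)$ (the parity split is exactly why two fractional levels are needed for odd $d$: when both distances equal $\lfloor d/2\rfloor$ we have a sum $d-1<d$ that must be detected, while a sum of $d$ must not be). Then $K\subseteq V$ is a $d$-scattered set iff $\{S_v : v\in K\}$ is a valid partial packing and contains no two vertices joined by an edge; the latter adjacency constraint I would fold in as an extra forbidden-pair check or by adding coefficient-$1$ dummy elements for edges inside $C$ and noting edges incident to $I$ already give $d(u,c)+d(c,v)$ small.

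For the algorithm itself: each set $S_v$ is, restricted to the coefficient-$1$ part, just a subset $T_v\subseteq C$ (the vertices of $C$ within distance $<d/2$ of $v$, or $<\lfloor d/2\rfloor$), together with a ``boundary'' set $B_v\subseteq C$ at the critical distance(s). Two sets conflict either because $T_u\cap T_v\neq\emptyset$, or because a boundary element is shared in a bad configuration. The DP runs over subsets $Z\subseteq C$ representing the union of the $T_v$'s chosen so far, i.e.\ $D[Z]$ = maximum number of compatible vertices selectable using only sets with $T_v\subseteq Z$ and with the boundary constraints respected; I process $C$-subsets in order of inclusion, at each step deciding, for the ``new'' part of $Z$, which vertices $v$ with $T_v$ exactly equal to that part can be added. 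The boundary conditions (the $\tfrac12+\tfrac12$ or $\tfrac23+\tfrac13$ collisions) only constrain pairs of selected vertices, and a selected $v$ with boundary element $c$ at level $\tfrac12$ (resp.\ $\tfrac23$) forbids any later $v'$ with $c$ at level $\ge\tfrac12$ (resp.\ $\ge\tfrac13$); since "$c$ is used at a high level'' is itself a subset-of-$C$ piece of information, for even $d$ this fits into the $3^{\vc}$ bound ($c$ is either unused, used at level $1$, or used at level $\ge\tfrac12$) and for odd $d$ into the $4^{\vc}$ bound (unused, $\ge\tfrac13$, $\ge\tfrac23$, $=1$), by the standard subset-sum convolution / fast superset-sum identity $\sum_{Z}3^{|Z|}=\sum_{Z}\sum_{Y\subseteq Z}2^{|Y|}$ etc. I would phrase this as: iterate over triples/quadruples of states per element of $C$, giving $3^{\vc}$ resp.\ $4^{\vc}$ cells, each updated in polynomial time by examining the (polynomially many) vertices $v$ whose profile matches the current cell.

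The main obstacle I expect is the \textbf{odd-$d$ boundary bookkeeping}: proving that the three fractional levels $\{\tfrac13,\tfrac23,1\}$ exactly capture the ``sum of distances $<d$'' relation and that no finer information is needed — in particular that a middle vertex of $C$ whose two distances are $\lceil d/2\rceil$ and $\lfloor d/2\rfloor$ must be forbidden while one with two distances both $\lceil d/2\rceil$ may be allowed — and then threading this through the DP so that the state space stays $4^{\vc}$ rather than blowing up to $5^{\vc}$ or worse. The even-$d$ case is cleaner because the unique critical distance $d/2$ gives a single fractional level. Everything else (correctness of the reduction, the win–win $3^{|Z|}$/$4^{|Z|}$ counting, polynomial per-cell work) is routine once the coefficient assignment is pinned down.
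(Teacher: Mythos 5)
Your overall architecture is the same as the paper's: elements for vertices of the cover $C$, a set per candidate vertex, coefficients in $\{0,\tfrac12,1\}$ (even $d$) or $\{0,\tfrac13,\tfrac23,1\}$ (odd $d$) keyed to the distance thresholds $\lfloor d/2\rfloor,\lceil d/2\rceil$, and a \textsc{Set Packing}-style DP whose state records, per element, the maximum coefficient used so far, giving $3^{\vc}$ resp.\ $4^{\vc}$ cells. However, two of your stated steps are wrong as written. First, the per-element equivalence ``$d(u,c)+d(c,v)<d$ iff the two coefficients of $e_c$ sum to $>1$'' is false: take even $d$, $d(u,c)=1$ and $d(c,v)=d/2+1$; the distances sum to $d/2+2<d$ but the coefficients are $1$ and $0$. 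Only one direction holds pointwise (coefficient sum $>1$ implies distance sum $<d$). For soundness you need the existential statement: whenever $d(u,v)<d$, \emph{some} $c\in C$ near the middle of a shortest path has coefficient sum $>1$. Establishing this is not automatic, because the exact middle vertex need not lie in $C$; the paper's case analysis uses that on a shortest path every vertex is either in $C$ or has both path-neighbors in $C$, and checks the two adjacent near-middle positions (at distances $d/2-1$ and $d/2$ from the endpoints, resp.\ $\lfloor d/2\rfloor\pm1$ for odd $d$) to find a witness in $C$ in every case. Your proposal announces ``a short case analysis'' but the case you would actually have to verify is this one, not the per-$c$ iff.

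Second, your DP conflict rule is over-restrictive and contradicts your own (correct) acceptance criterion ``pairwise coefficient sum $\le 1$.'' You state that a selected $v$ with $c$ at level $\tfrac12$ forbids any later $v'$ with $c$ at level $\ge\tfrac12$, and that level $\tfrac23$ forbids any later level $\ge\tfrac13$. But $\tfrac12+\tfrac12=1$ and $\tfrac23+\tfrac13=1$ are exactly the allowed boundary cases: they correspond to two selections at distance exactly $d$ through $c$ ($d/2+d/2$, resp.\ $\lfloor d/2\rfloor+\lceil d/2\rceil$), which a correct algorithm must accept. As written your transition would reject valid $d$-scattered sets (e.g.\ two endpoints of a path of length $d$ whose midpoint is in $C$). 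The fix is immediate --- forbid only pairs with sum strictly greater than $1$, i.e.\ $(1,\ge\tfrac12)$ for even $d$ and $(1,\ge\tfrac13)$, $(\tfrac23,\ge\tfrac23)$ for odd $d$ --- and the state space (max coefficient seen so far per element) is unchanged, so the $3^{\vc}$ and $4^{\vc}$ bounds survive. Both errors are repairable without changing the approach, but as stated the proof does not go through.
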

\begin{proof}
Let $C$ be the given vertex cover of $G$ and $I=V\setminus C$ be the remaining independent set. Let also $Y\subseteq I$ be the subset of vertices from $I$ with a unique neighborhood in $C$, i.e.\ for two vertices $u,v\in I$ with $N(u)=N(v)$, set $Y$ only contains one of them. Observe that the size of $Y$ is thus exponentially bounded by the size of $C$: $|Y|\le2^{|C|}$.

For an instance of our \textsc{Partial Set Packing} variant, let $\mathcal{U}=\{u_1,\dots,u_n\}$ be the universe of elements and $\mathcal{S}=\{S_1,\dots,S_m\}$ be the set family. Further, for even $d$, we introduce a weight function $w(u_i,S_j):\mathcal{U}\times\mathcal{S}\mapsto\{0,1/2,1\}$, giving the \emph{coefficient} of element $u_i$ for inclusion in set $S_j$, where 0 implies the element is not included in the set, 1/2 implies \emph{partial} and 1 \emph{complete} inclusion. For odd $d$, the  weight function $w(u_i,S_j):\mathcal{U}\times\mathcal{S}\mapsto\{0,1/3,2/3,1\}$ allows more values for partial inclusion. In our solutions to this variant we will allow any number of sets to partially include any element, yet if any set in the solution completely includes some element, then no other set that includes the same element either partially, or completely, can also be part of the same solution, i.e.\ a collection of subsets $S\subseteq\mathcal{S}$ will be a solution, if for every element $u_i\in\mathcal{U}$, the sum for any two pairs is at most 1: $\max_{S_a,S_b\in S}\{w(u_i,S_a)+w(u_i,S_b)\}\le1$.

We then define our \textsc{Partial Set Packing} instance as follows: we make an element $u_i\in\mathcal{U}$ for every vertex of $C$ and a set $S_j\in\mathcal{S}$ for every vertex of $C\cup Y$. We thus have $|C|$ elements and $|C|+|Y|\le|C|+2^{|C|}$ sets. For even $d$, an element $u_i$ with corresponding vertex $v\in C$ is included in some set $S_j$ with corresponding vertex $z$ completely (or $w(u_i,S_j)=1$) if $d(v,z)<d/2$, while an element $u_i$ with corresponding vertex $v\in C$ is included in some set $S_j$ with corresponding vertex $z$ partially (or $w(u_i,S_j)=1/2$), if $d(v,z)=d/2$. For odd $d$, an element $u_i$ with corresponding vertex $v\in C$ is included in some set $S_j$ with corresponding vertex $z$ completely (or $w(u_i,S_j)=1$) if $d(v,z)<\lfloor d/2\rfloor$, 2/3-partially ($w(u_i,S_j)=2/3$) if $d(v,z)=\lfloor d/2\rfloor$ and 1/3-partially ($w(u_i,S_j)=1/3$) if $d(v,z)=\lceil d/2\rceil$.  

In the classic dynamic programming procedure for \textsc{Set Packing} we store a table $OPT[U,j]$ that contains, for every subset of elements $U\subseteq\mathcal{U}$ and $j\in[1,m]$ the maximum number of subsets that can be selected from $\{S_1,\dots,S_j\}$, such that no element of $U$ is included in any of them. The dynamic programming procedure then first computes for $j=1$: $OPT[U,1]\coloneqq 1$, if $U\cap S_j=\emptyset$ and 0 otherwise, while for $j=2,\dots,m$ it is: $OPT[U,j+1]\coloneqq\max\{OPT[U,j],OPT[U\cup S_{j+1},j]+1\}$ if $S_{j+1}\cap U=\emptyset$ and only $OPT[U,j+1]\coloneqq\ OPT[U,j]$ otherwise.

We will create a similar table $OPT[U,j]$ for every $j\in[1,m]$ and every $U=\{(u_i\in \mathcal{U},w(u_i,U))\}$ (of the possible $3^{|\mathcal{U}|}$ or $4^{|\mathcal{U}|}$), storing the maximum number of sets that can be selected from $\{S_1,\dots,S_j\}$ to form a partial solution $S'\subseteq\{S_1,\dots,S_j\}$, so that for any element $u_i$ it is $\max_{S_l\in S'}\{w(u_i,S_l)+w(u_i,U)\}\le1$. Letting the union operator $A\cup B$ transfer maximum inclusion, i.e.\ $w(u_i,A\cup B)=\max\{w(u_i,A),w(u_i,B)\}$, and substituting the check for $U\cap S_j=\emptyset$ by $\forall u_i\in U\cup S_j: w(u_i,U)+w(u_i,S_j)\le1$ in the above procedure, we can solve the \textsc{Partial Set Packing} instance in $O(mn4^n)$ time (and only $O(mn3^n)$ for even $d$).

Given a solution $S\subseteq\mathcal{S}$ to our \textsc{Partial Set Packing} instance, we will show that it corresponds to a solution for the original instance of \dS. First observe that on any shortest path $v_0,\dots,v_d$ between vertices $v_0,v_d$, we know that any vertex $v_i$ will either be included in $C$, or both its neighbors $v_{i-1},v_{i+1}$ on the path will be included instead, as otherwise both edges adjacent to $v_i$ are not covered by $C$.

Consider first the case where $d$ is even. On the shortest path between two vertices $v_0,v_d$ that are at distance $d$ from each other there will be one (\emph{middle}) vertex $v_{d/2}$ at distance $d/2$ from both and if $v_{d/2}\in C$ then the corresponding element will be partially included in both sets corresponding to $v_0,v_d$, while if $v_{d/2}\notin C$, each of the elements corresponding to its neighbors $v_{d/2-1},v_{d/2+1}$ on the path will be completely included in one set each and thus both sets can be used in solution $S$. For two vertices $v_0,v_{d-1}$ at distance $d-1$ from each other, there will be one vertex $v_{d/2}$ at distance $d/2$ from $v_0$ and $d/2-1$ from $v_{d-1}$ and also one vertex $v_{d/2-1}$ at distance $d/2-1$ from $v_0$ and $d/2$ from $v_{d-1}$. If $v_{d/2}\in C$, then its corresponding element is included partially by 1/2 in the set corresponding to vertex $v_0$ and completely by 1 in the set corresponding to vertex $v_{d-1}$. Otherwise, if $v_{d/2-1}\in C$, then its corresponding element is included completely by 1 in the set corresponding to vertex $v_0$ and partially by 1/2 in the set corresponding to vertex $v_{d-1}$. Thus in both cases these two sets cannot be included together in $S$. The argument also holds if the distance between the two vertices is smaller than $d-1$.

Next, if $d$ is odd, on the shortest path between two vertices $v_0,v_d$ that are at distance $d$ from each other there will be two middle vertices $v_{\lfloor d/2\rfloor},v_{\lceil d/2\rceil}$ at distances $\lfloor d/2\rfloor$ and $\lceil d/2\rceil$ from each. Now, vertex $v_{\lfloor d/2\rfloor}$ will be at distance $\lfloor d/2\rfloor$ from $v_0$ and distance $\lceil d/2\rceil$ from $v_d$ and if $v_{\lfloor d/2\rfloor}\in C$ its element will be included by 2/3 in the set corresponding to $v_0$ and by 1/3 in the set corresponding to $v_d$. Similarly, if $v_{\lceil d/2\rceil}\in C$, its element will be included by 1/3 in the set corresponding to $v_0$ and by $2/3$ in the set corresponding to $v_d$. Thus in both cases the two sets can be included together in $S$. For two vertices $v_0,v_{d-1}$ at distance $d-1$ from each other, if vertex $v_{\lfloor d/2\rfloor}\in C$, then its element will be included by 2/3 in both sets corresponding to $v_0,v_{d-1}$, while if $v_{\lfloor d/2\rfloor}\notin C$, then we have that both its neighbors $v_{\lfloor d/2\rfloor-1}$, $v_{\lfloor d/2\rfloor+1}\in C$. Now, the element corresponding to $v_{\lfloor d/2\rfloor-1}$ will be completely included by 1 in the set corresponding to $v_0$ and partially by 1/3 in the set corresponding to $v_{d-1}$, while the element corresponding to $v_{\lfloor d/2\rfloor+1}$ will likewise be included partially by 1/3 in the set corresponding to $v_0$ and completely by 1 in the set corresponding to $v_{d-1}$. Thus in both cases, these two sets cannot be included together in $S$, while the argument also holds if the distance between the vertices is smaller than $d-1$.

As the number of sets in our \textsc{Partial Set Packing} instance is $m=|C|+|Y|\le|V|$ and the number of elements is $n=|C|=\vc$, the total running time of our algorithm is bounded by $O^*(4^{\vc})$ for odd $d$ and $O^*(3^{\vc})$ for even $d$.  
\end{proof}

\section{Tree-depth: Tight ETH Lower Bound}\label{sec_td}
 
In this section we consider the unweighted version of the \dS\ problem parameterized by $\td$. We first show the existence of an FPT algorithm of running time $O^*(2^{O(\td^2)})$ and then a tight ETH-based lower bound. We begin with a simple upper bound argument, making use of the following fact on tree-depth, while the algorithm then follows from the dynamic programming procedure of Theorem \ref{tw_DP_THM} and the relationship between $d,\td$ and $\tw$:

\begin{lemma}\label{lem:td-diam} For any graph $G=(V,E)$ we have $D(G)\le 2^{\td+1}-2$, where $D(G)$ denotes the graph's diameter.
\end{lemma}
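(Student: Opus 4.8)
The plan is to prove the bound $D(G) \le 2^{\td+1} - 2$ by induction on $\td = \td(G)$, exploiting the recursive structure of the elimination forest that witnesses the tree-depth. First I would recall the recursive characterization of tree-depth: a connected graph $G$ has $\td(G) \le h$ if and only if either $|V(G)| = 1$ (base case, $h \ge 1$), or there is a vertex $r \in V(G)$ such that every connected component of $G - r$ has tree-depth at most $h - 1$. (For a disconnected graph the tree-depth is the maximum over its components, and the diameter bound is usually stated per component, or one takes the diameter to be that of the components since distances between components are infinite; I would note this convention explicitly to avoid ambiguity.)

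For the base case $\td(G) = 1$: $G$ is a single vertex (or an edgeless graph), so $D(G) = 0 = 2^{2} - 2$, and the bound holds. For the inductive step, assume the claim holds for all graphs of tree-depth at most $h-1$, and let $G$ be connected with $\td(G) = h \ge 2$. Pick the root $r$ of an optimal elimination tree, so that each component $H$ of $G - r$ satisfies $\td(H) \le h - 1$, hence by the inductive hypothesis $D(H) \le 2^{h} - 2$. Now take any two vertices $u, v \in V(G)$. If both lie in the same component $H$ of $G-r$, then a shortest $u$--$v$ path \emph{inside $H$} has length at most $2^h - 2 \le 2^{h+1} - 2$, and this is an upper bound on $d_G(u,v)$. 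If $u, v$ lie in different components $H_1, H_2$ (or if one of them is $r$), then since $G$ is connected, $r$ is adjacent to at least one vertex in each component; let $u$ reach some neighbor $u'$ of $r$ within $H_1$ and $v$ reach some neighbor $v'$ of $r$ within $H_2$. Then $d_G(u,v) \le d_{H_1}(u,u') + 1 + 1 + d_{H_2}(v',v) \le (2^h - 2) + (2^h - 2) + 2 = 2^{h+1} - 2$, which is exactly the claimed bound. The case where one endpoint is $r$ itself is even easier: $d_G(r, v) \le 1 + d_{H}(v', v) \le 1 + (2^h - 2) = 2^h - 1 \le 2^{h+1} - 2$.

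The main technical point to get right — really the only subtlety — is the passage from distances \emph{within a component $H$ of $G-r$} to distances in $G$: we use $d_G(u,v) \le d_H(u,v)$ (adding vertices and edges can only decrease distances), together with the fact that $r$ has a neighbor in each component of $G-r$ when $G$ is connected, so that a path through $r$ costs only $+2$ on top of two intra-component subpaths. One should also be a little careful that the inductive hypothesis is applied to each component individually (each of which may have strictly smaller tree-depth, which is fine since the bound $2^{\td+1}-2$ is monotone in $\td$), and that the disconnected case is handled by the stated convention. I expect no serious obstacle here; the bound $2^{\td+1}-2$ is tailored precisely so that the $+2$ for routing through $r$ and the doubling of the component diameter fit together, i.e. $2(2^h-2) + 2 = 2^{h+1} - 2$.
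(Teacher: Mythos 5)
Your proof is correct and follows essentially the same inductive argument as the paper's: remove the root vertex of an optimal elimination tree, apply the inductive hypothesis to each component of the remainder, and route between components through the root at an extra cost of $+2$, which is exactly what the bound $2(2^{h}-2)+2=2^{h+1}-2$ absorbs. The only blemish is the base-case line ``$D(G)=0=2^{2}-2$'', which should read $0\le 2^{1+1}-2=2$ under your convention $\td(K_1)=1$ (the paper instead sets $\td(K_1)=0$, making the bound $2^{0+1}-2=0$ tight there), and your explicit treatment of the case where one endpoint is the root is a small point the paper glosses over.
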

\begin{proof}
 We use the following equivalent inductive definition of tree-depth:
$\td(K_1)=0$ while for any other graph $G=(V,E)$ we set $\td(G) = 1+ \min_{u\in
 V} \td(G\setminus u)$ if $G$ is connected, and $\td(G) = \max_C \td(G[C])$ if
 $G$ is disconnected, where the maximum ranges over all connected components of
 $G$.
 
 We prove the claim by induction. The inequality is true for $K_1$, whose
 diameter is $0$. For the inductive step, the interesting case is when $G=(V,E)$
 is connected, since otherwise we can assume that the claim has been shown for
 each connected component and we are done. Let $u\in V$ be such that $\td(G) =
 1+\td(G\setminus u)$. Consider two vertices $v_1,v_2\in V\setminus\{u\}$ which
 are at maximum distance in $G$. If $v_1,v_2$ are in the same connected
 component of $G':=G\setminus u$, then $d_{G}(v_1,v_2) \le d_{G'}(v_1,v_2) \le
 D(G') \le 2^{\td(G')+1}-2 \le 2^{\td(G)+1} -2$, where we have used the
 inductive hypothesis on $G'$. So, suppose that $v_1,v_2$ are in different
 connected components of $G'$. It must be the case that $u$ has a neighbor in
 the component of $v_1$ (call it $v_1'$) and in the component of $v_2$ (call it
 $v_2'$), because $G$ is connected. We have $d_G(v_1,v_2) \le d_G(v_1,v_1') + 2
 + d_G(v_2,v_2') \le d_{G'}(v_1,v_1') + 2 + d_{G'}(v_2,v_2') \le 2D(G') +2 \le
 2\cdot 2^{\td(G')+1} - 2 = 2^{\td(G)+1}-2$.  
 \end{proof}

\begin{theorem}\label{thm:td-alg}
 Unweighted \dS\ can be solved in time $O^*(2^{O(\td^2)})$.
 \end{theorem}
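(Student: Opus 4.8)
The plan is to combine the diameter bound of Lemma~\ref{lem:td-diam} with the treewidth dynamic program of Theorem~\ref{tw_DP_THM}, distinguishing two cases according to the size of $d$. The first thing I would observe is that Lemma~\ref{lem:td-diam} applies to each connected component separately: from the inductive definition $\td(G)=\max_C\td(G[C])$ over the connected components $C$ of $G$, every component $C$ satisfies $\td(G[C])\le\td(G)$ and hence has diameter at most $2^{\td+1}-2$.

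\emph{Case $d>2^{\td+1}-2$.} In this regime no two vertices lying in the same connected component can be at distance $\ge d$, so every $d$-scattered set contains at most one vertex per component; conversely, choosing one arbitrary vertex from each component is a valid $d$-scattered set, since vertices in distinct components are at distance $+\infty\ge d$. Hence the instance is a ``yes''-instance if and only if $k$ does not exceed the number of connected components of $G$, which is decided in polynomial time.

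\emph{Case $d\le 2^{\td+1}-2$.} Here $d<2^{\td+1}$, so $d=2^{O(\td)}$. I would first produce a tree decomposition of $G$ of width $O(\td)$, using $\tw(G)\le\td(G)$ from Lemma~\ref{lem:relations}: for instance an optimal tree-depth elimination forest (computable in time $2^{O(\td^2)}\cdot n$) directly yields a decomposition of width less than $\td$, which we then convert into a nice tree decomposition. Feeding this decomposition into the counting algorithm of Theorem~\ref{tw_DP_THM} gives running time $O^*(d^{O(\td)})\le O^*\big((2^{\td+1})^{O(\td)}\big)=O^*(2^{O(\td^2)})$.

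Taking the appropriate branch of this (exhaustive) case distinction proves the theorem. There is essentially no hard step: the only points requiring care are that the diameter estimate must be phrased componentwise, so that disconnected inputs cause no trouble, and that a tree decomposition of width $O(\td)$ must actually be constructed within the $2^{O(\td^2)}$ budget, since Theorem~\ref{tw_DP_THM} receives the decomposition as part of its input; both are routine given the tools already developed above.
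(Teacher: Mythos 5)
Your proof is correct and follows essentially the same route as the paper's: bound $d$ by the diameter via Lemma~\ref{lem:td-diam}, then invoke the $O^*(d^{\tw})$ algorithm of Theorem~\ref{tw_DP_THM} together with $\tw\le\td$. Your additional care about disconnected inputs and about actually constructing the width-$O(\td)$ decomposition only makes explicit what the paper's one-line argument leaves implicit.
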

 \begin{proof} The main observation is that we can assume that $d\le D(G)$,
 because otherwise the problem is trivial. Hence, by Lemma \ref{lem:td-diam} we
 have $d\le 2^{\td+1}$. We can now rely on Lemma \ref{lem:relations} to get
 $\tw\le\td$, and the algorithm of Theorem \ref{tw_DP_THM} which runs in time
 $O^*(d^\tw)$ gives the desired running time.  
 \end{proof}
 
 Next we show a lower bound matching Theorem \ref{thm:td-alg}, based on the ETH, using a reduction from \textsc{3-SAT} and a construction similar to the one used in Section \ref{sec_vc}.
 
 \paragraph{Construction:}  Given an instance $\phi$ of \textsc{3-SAT} on $n$ variables and $m$ clauses, where we can assume that $m=O(n)$ (by the Sparsification Lemma, see \cite{ImpagliazzoPZ01}), we will create an instance $[G=(V,E)]$ of the unweighted \dS\ problem where $d=6\cdot c^{\sqrt{n}}$ for an appropriate constant $c$ (to simplify notation, we assume without loss of generality that $\sqrt{n}$ is an integer).  We first group the clauses of $\phi$ into $\sqrt{n}$ equal-sized groups $F_1,\dots,F_{\sqrt{n}}$ and as a result, each group involves $O(\sqrt{n})$ variables, with $2^{O(\sqrt{n})}$ possible assignments to the variables of each group. We select $c$ appropriately so that each group $F_i$ has at most $c^{\sqrt{n}}$ possible partial assignments $\phi_j^i$ for the variables of clauses in $F_i$.
 
 We then create for each $i\in\{1,\dots,\sqrt{n}\}$, a set $P_i$ of at most $c^{\sqrt{n}}$ vertices $p_1^i,\dots,p_{c^{\sqrt{n}}}$, such that each vertex of $P_i$ represents a partial assignment to the variables of $F_i$ that satisfies all clauses of $F_i$. We then create for each $i\in\{1,\dots,\sqrt{n}\}$ a pair of vertices $a_i,b_i$ and we connect $a_i$ to each vertex $p_l^i$ by a path of length $c^{\sqrt{n}}+l$, while $b_i$ is connected to each vertex $p_l^i$ by a path of length $2\cdot c^{\sqrt{n}}-l$. Now each $P_i$ contains all $a_i,b_i$ and $p_l^i,i\in\{1,\dots,c^{\sqrt{n}}\}$.
 
 Finally, for every two \emph{non-conflicting} partial assignments $\phi_l^i,\phi_o^j$, with $l,o\in[1,c^{\sqrt{n}}]$ and $i,j\in[1,\sqrt{n}]$, i.e.\ two partial assignments that do not assign conflicting values to any variable, we create a vertex $u_{l,o}^{i,j}$ that we connect to vertices $a_i,b_i$ and $a_j,b_j$: if $p_l^i\in P_i$ is the vertex corresponding to $\phi_l^i$ and $p_o^j\in P_j$ is the vertex corresponding to $\phi_o^j$, then vertex $u_{l,o}^{i,j}$ is connected to $a_i$ by a path of length $5\cdot c^{\sqrt{n}}-l$ and to $b_i$ by a path of length $4\cdot c^{\sqrt{n}}+l$, as well as to $a_j$ by a path of length $5\cdot c^{\sqrt{n}}-o$ and to $b_j$ by a path of length $4\cdot c^{\sqrt{n}}+o$. Next, for every pair $i,j$ we make two vertices $g_{i,j},g'_{i,j}$. We connect $g_{i,j}$ to all vertices $u_{l,o}^{i,j}$ (for any $l,o$) by a single edge and also $g_{i,j}$ to $g'_{i,j}$ by a path of length $6\cdot c^{\sqrt{n}}-1$. This concludes our construction and Figure \ref{fig:Whard_td2} provides an illustration.
 
\begin{figure}[htbp]
\centerline{\includegraphics[width=120mm]{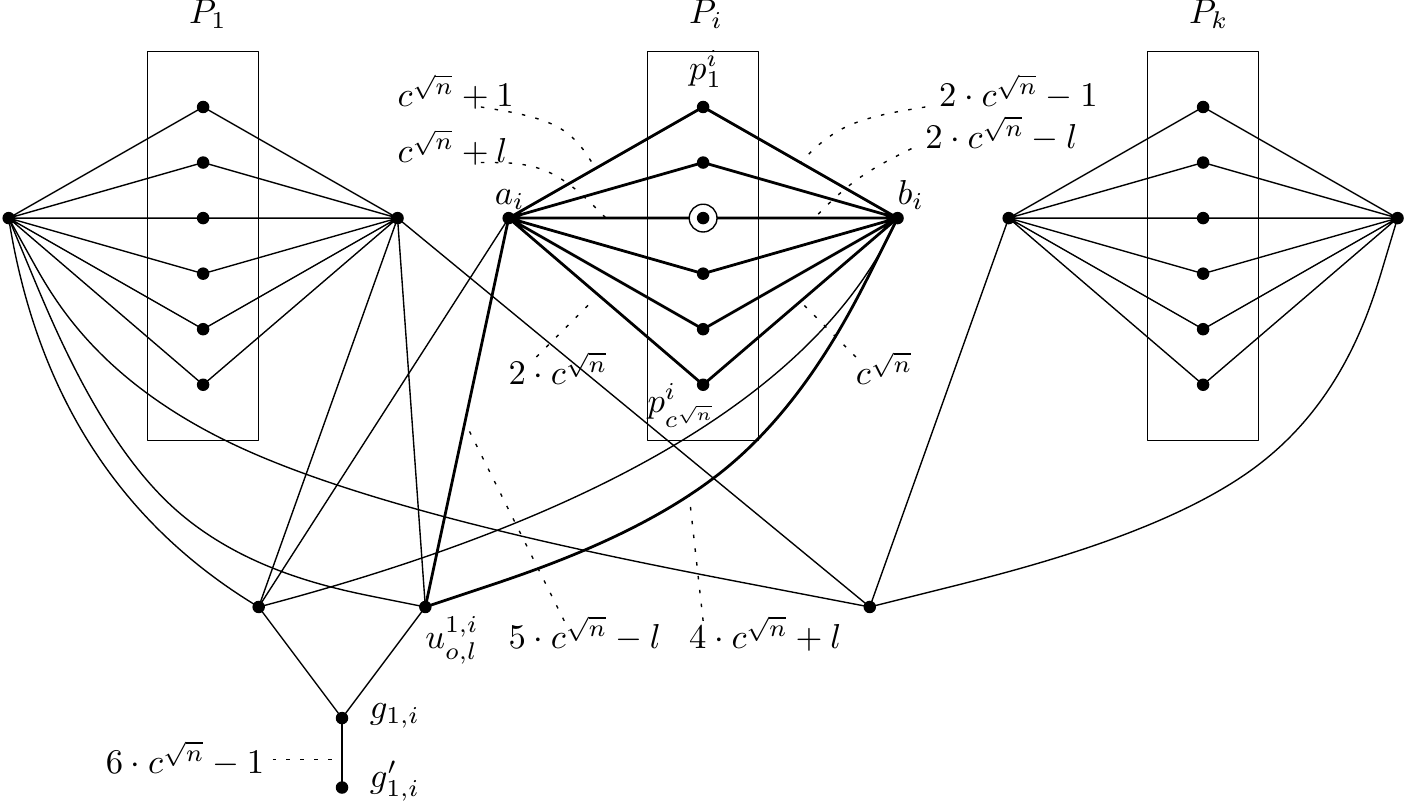}}
\caption{A general picture of graph $G$, where straight lines imply paths of length equal to the number indicated by dotted lines, while the circled vertex is $p_l^i$.}
\label{fig:Whard_td2} \end{figure}

\begin{lemma}\label{td_ETH_FWD}
   If $\phi$ has a satisfying assignment, then there exists a $6\cdot c^{\sqrt{n}}$-scattered set in $G$ of size $\sqrt{n}+2{\sqrt{n} \choose 2}=n$.
  \end{lemma}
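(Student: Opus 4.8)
The plan is to mirror the forward direction of the earlier reduction (Lemma~\ref{Whard-FWD}) almost verbatim, with the $\sqrt{n}$ clause-groups playing the role of the $k$ color classes. Fix a satisfying assignment $\sigma$ of $\phi$. For each $i\in\{1,\dots,\sqrt{n}\}$, the restriction of $\sigma$ to the variables occurring in $F_i$ satisfies every clause of $F_i$, hence it is one of the partial assignments $\phi_{l_i}^i$ represented in $P_i$; let $p_{l_i}^i$ be the corresponding vertex. I would then take
\[ S \;=\; \{\,p_{l_i}^i : 1\le i\le \sqrt{n}\,\}\ \cup\ \{\,u_{l_i,l_j}^{i,j} : 1\le i<j\le\sqrt{n}\,\}\ \cup\ \{\,g'_{i,j} : 1\le i<j\le\sqrt{n}\,\}. \]

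Two preliminary points must be settled. First, each $u_{l_i,l_j}^{i,j}$ really exists: since $\phi_{l_i}^i$ and $\phi_{l_j}^j$ are both restrictions of the single assignment $\sigma$, they agree on every common variable, so they are non-conflicting and the vertex $u_{l_i,l_j}^{i,j}$ was created in the construction. Second, $S$ has the claimed size, namely $|S| = \sqrt{n} + 2\binom{\sqrt{n}}{2} = \sqrt{n} + \sqrt{n}(\sqrt{n}-1) = n$.

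The bulk of the proof is then showing that every two vertices of $S$ are at distance at least $d=6\cdot c^{\sqrt{n}}$. The ``incident'' pairs attain distance exactly $d$: for $p_{l_i}^i$ and $u_{l_i,l_j}^{i,j}$ the two obvious routes give $(c^{\sqrt{n}}+l_i)+(5c^{\sqrt{n}}-l_i)=6c^{\sqrt{n}}$ through $a_i$ and $(2c^{\sqrt{n}}-l_i)+(4c^{\sqrt{n}}+l_i)=6c^{\sqrt{n}}$ through $b_i$; for $u_{l_i,l_j}^{i,j}$ and $g'_{i,j}$ the route through $g_{i,j}$ has length equal to the $g_{i,j}$--$g'_{i,j}$ path length plus the single $u$--$g_{i,j}$ edge, which is arranged to be exactly $6c^{\sqrt{n}}$. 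For all remaining pairs I would argue the distance is strictly larger, organising the case analysis by which ``hubs'' any connecting path is forced to use: two distinct selected $p$-vertices, a selected $p$-vertex together with a non-incident $u$- or $g'$-vertex, two selected $u$-vertices sharing a group index (or sharing none), two selected $g'$-vertices, and so on. In each case a path must cross at least one $a$- or $b$-hub — costing at least $c^{\sqrt{n}}$ on the side of any $p$-vertex and at least $4c^{\sqrt{n}}$ on the side of any $u$-vertex — or else traverse a full $g$--$g'$ connecting path, and in every case this already pushes the total past $6c^{\sqrt{n}}$.

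The one place that needs genuine care — and which I expect to be the main obstacle — is ruling out indirect shortcuts: paths that leave a gadget via one hub, pass through some other $u$-vertex and the shared vertex $g_{i,j}$, and then re-enter; one has to check, using the specific lengths $c^{\sqrt{n}}+l,\ 2c^{\sqrt{n}}-l,\ 5c^{\sqrt{n}}-l,\ 4c^{\sqrt{n}}+l$ assigned in the construction, that no such detour beats the direct distance of $d$ between any two vertices of $S$. Once this bookkeeping is complete, $S$ is a valid $6\cdot c^{\sqrt{n}}$-scattered set of size $n$, which proves the lemma.
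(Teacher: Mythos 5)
Your proposal is correct and follows essentially the same route as the paper's proof: the same set $S$ (the $p_{l_i}^i$ from the satisfying assignment, the matching $u_{l_i,l_j}^{i,j}$, and the $g'_{i,j}$), the same observation that the $u$-vertices exist because restrictions of one global assignment cannot conflict, and the same distance computations $c^{\sqrt{n}}+l_i+5c^{\sqrt{n}}-l_i=2c^{\sqrt{n}}-l_i+4c^{\sqrt{n}}+l_i=6c^{\sqrt{n}}$ for the incident pairs. The residual bookkeeping you flag (non-incident pairs and indirect shortcuts) is likewise left implicit in the paper's own proof, so you are not missing anything the authors supply.
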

 \begin{proof}
  Consider the satisfying assignment for $\phi$ and let $\phi_{l_i}^i$, with $l_i\in[1,c^{\sqrt{n}}]$ and $i\in[1,\sqrt{n}]$, be the restriction of that assignment for all variables appearing in clauses of group $F_i$. We claim the set $K$, consisting of all vertices $p_{l_i}^i$ corresponding to $\phi_{l_i}^i$, all vertices $g'_{i,j}$ and all $u_{l,o}^{i,j}$ vertices for which we have selected $p_{l}^i$ and $p_{o}^j$ (all these vertices exist, as the corresponding partial assignments are non-conflicting), is a $d$-scattered set for $G$ of size $|K|=\sqrt{n}+2{\sqrt{n} \choose 2}=n$: all selected vertices $p_{l_i}^i$ are at distance $c^{\sqrt{n}}+l_i+5\cdot c^{\sqrt{n}}-l_i=6\cdot c^{\sqrt{n}}$ via $a_i$ and distance $2\cdot c^{\sqrt{n}}-l_i+4\cdot c^{\sqrt{n}}+l_i=6\cdot c^{\sqrt{n}}$ via $b_i$ from any selected vertex $u_{l_i,l_j}^{i,j}$, while every selected $u_{l_i,l_j}^{i,j}$ is at distance $6\cdot c^{\sqrt{n}}-1+1=6\cdot c^{\sqrt{n}}$ from every selected $g'_{i,j}$.  
 \end{proof}
 
 \begin{lemma}\label{td_ETH_BWD}
   If there exists a $6\cdot c^{\sqrt{n}}$-scattered set in $G$ of size $\sqrt{n}+2{\sqrt{n} \choose 2}=n$, then $\phi$ has a satisfying assignment.
  \end{lemma}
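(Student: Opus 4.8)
The plan is to prove this backward direction by mirroring the proof of Lemma~\ref{Whard-BWD}, with the weighted distances there replaced by shortest-path distances (as in the proof of Corollary~\ref{fvs_cor}), and by pinning down the solution size through a counting argument over a decomposition of $V(G)$. Write $M:=c^{\sqrt n}$, so $d=6M$, and let $K$ be the given $6M$-scattered set with $|K|=n=\sqrt n+2\binom{\sqrt n}{2}$. First I would decompose $V(G)$ into $\sqrt n$ \emph{group zones} $Z_i$ — each containing $a_i$, $b_i$, every $p^i_l$, and the interiors of all paths $a_i\!-\!p^i_l$ and $p^i_l\!-\!b_i$ — and $\binom{\sqrt n}{2}$ \emph{pair zones} $Z_{ij}$ — each containing $g_{i,j}$, $g'_{i,j}$, the interior of the $g_{i,j}\!-\!g'_{i,j}$ path, every $u^{i,j}_{l,o}$, and the interiors of all paths joining each $u^{i,j}_{l,o}$ to $a_i,b_i,a_j,b_j$. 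These zones partition $V(G)$.

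Step~1 (zone bounds). The goal here is to show $|K\cap Z_i|\le 1$ for every group zone and $|K\cap Z_{ij}|\le 2$ for every pair zone, with the latter holding only when $K\cap Z_{ij}=\{u^{i,j}_{l,o},g'_{i,j}\}$ for a single $u$-vertex. The group bound is immediate: any two vertices of $Z_i$ lie on a common $a_i\!-\!b_i$ path of length $(M+l)+(2M-l)=3M=d/2<d$, so their distance is below $d$. For the pair bound: the $g_{i,j}\!-\!g'_{i,j}$ path has only $d$ vertices (pairwise distance $<d$), all $u^{i,j}_{l,o}$ are pairwise at distance $2$ via $g_{i,j}$, and any vertex on a $u^{i,j}_{l,o}\!-\!(\text{hub})$ path is at distance $<d$ from every $u^{i,j}_{l',o'}$; combining these with the shortest-path case analysis sketched in Corollary~\ref{fvs_cor} (any two vertices interior to the $g_{i,j}\!-\!(\text{hub})$ paths are at distance $<2d$, and any ``mismatched'' choice of selections falls short of distance $d$ either from the other pair-zone selection or from the selected $p$-vertices) both caps $|K\cap Z_{ij}|$ at $2$ and forces the pair of selections to be $g'_{i,j}$ together with a single genuine $u$-vertex. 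Summing, $|K|\le\sqrt n\cdot 1+\binom{\sqrt n}{2}\cdot 2=n$, so equality holds in every zone: $K$ contains exactly one vertex $p^i_{l_i}$ from each $Z_i$ and exactly $\{u^{i,j}_{l(i,j),o(i,j)},g'_{i,j}\}$ from each $Z_{ij}$.

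Step~2 (matching and conclusion), copied from Lemma~\ref{Whard-BWD}. Fix a pair $(i,j)$ and let $p^i_{l'}$ be the $p$-vertex of group $i$ that lies in $K$. The selected vertex $u^{i,j}_{l,o}$ is at distance $(5M-l)+(M+l')=6M+(l'-l)$ from $p^i_{l'}$ via $a_i$ and $(4M+l)+(2M-l')=6M+(l-l')$ via $b_i$, and both are $\ge 6M$ only if $l'=l$; likewise the $p$-vertex of group $j$ must be $p^j_o$. Since the vertex $u^{i,j}_{l,o}$ exists, the partial assignments $\phi^i_l$ and $\phi^j_o$ are non-conflicting. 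Over all pairs, the $\sqrt n$ selected partial assignments $\phi^i_{l_i}$ are pairwise non-conflicting, hence extend to a single assignment $\sigma$ of all variables of $\phi$; since each $p^i_{l_i}$ represents an assignment already satisfying all clauses of $F_i$, the assignment $\sigma$ satisfies every clause of $\phi$.

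The step I expect to be the main obstacle is the pair-zone bound in Step~1: one must rule out that $K$ gains size by selecting interior vertices of the long paths joining the $u$-vertices to the hubs $a_i,b_i,a_j,b_j$ instead of cleanly choosing $\{u^{i,j}_{l,o},g'_{i,j}\}$. Such an interior vertex is simultaneously close to one of the four hubs, so the argument has to combine the counting with the distance condition toward the selected $p$-vertices (exactly the point abbreviated in the proof of Corollary~\ref{fvs_cor}), and it is here that the particular path lengths $5M-l,\,4M+l,\dots$ and the value $d=6M$ are essential; the remaining steps are routine distance arithmetic.
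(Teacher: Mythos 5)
Your plan follows the paper's own proof of Lemma~\ref{td_ETH_BWD} almost step for step: the paper likewise bounds the number of selections in each ``pair region'' (the paths between $g_{i,j}$, respectively $g'_{i,j}$, and the four hubs) by $1$, respectively $2$, concludes by counting that exactly one vertex is taken from each $P_i$ and exactly $\{u^{i,j}_{l,o},g'_{i,j}\}$ from each pair region, and then runs exactly your Step~2 arithmetic ($6M+(l'-l)$ via $a_i$, $6M+(l-l')$ via $b_i$, forcing $l'=l$). So Step~2 and the group-zone bound are fine and coincide with the paper's argument.

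The genuine gap is the one you flagged yourself, and it is worse than you suggest: writing $M=c^{\sqrt{n}}$, the bound $|K\cap Z_{ij}|\le 2$ (let alone the forced form $\{u^{i,j}_{l,o},g'_{i,j}\}$) is not provable from within the zone at all. Take $x=g'_{i,j}$, let $y$ be the vertex adjacent to $a_i$ on the path from some $u^{i,j}_{l_1,o_1}$ to $a_i$, and let $z$ be the vertex adjacent to $a_j$ on the path from some $u^{i,j}_{l_2,o_2}$ to $a_j$. Every route out of $y$ passes through $a_i$ or $u^{i,j}_{l_1,o_1}$, which gives $d(y,g_{i,j})\ge 4M$ and hence $d(x,y)\ge (6M-1)+4M$, similarly $d(x,z)\ge 10M-1$, and a short case analysis gives $d(y,z)\ge 8M-2$; all three pairwise distances exceed $d=6M$, so the pair zone admits three pairwise-scattered vertices. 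Your proposed repair --- invoking the distance condition toward the selected $p$-vertices --- is circular in the order you set things up, because the existence of a selected $p$-vertex in each $P_i$ is itself derived from the counting, which rests on the very zone bound being repaired. (For what it is worth, the paper's own justification at this point, that the sum of pairwise distances among any three vertices of the $g'$-region is $<3d$, is violated by the same triple, so you have put your finger on the soft spot of the published argument rather than overlooked something it supplies.) A complete proof needs a joint accounting over $Z_{ij}\cup Z_i\cup Z_j$: for instance, any selection lying close to a hub $a_i$ or $b_i$ is within distance $<6M$ of all other such selections near $a_i,b_i$ and of essentially all of $Z_i$, so each ``extra'' selection in a pair zone empties a group zone and the total still cannot reach $\sqrt{n}+2\binom{\sqrt{n}}{2}$ unless the solution has the intended form; only then can Step~2 be applied.
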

 \begin{proof}
  Let $S\subseteq V$ be the $6\cdot c^{\sqrt{n}}$-scattered set in $G$, with $|S|=n$. For every pair $i,j\in[1,\sqrt{n}]$, set $S$ cannot contain more than one vertex from the paths between $g_{i,j}$ and $a_i,b_i,a_j,b_j$, as the distance between any pair of such vertices is always $<2\cdot 6\cdot c^{\sqrt{n}}$ (due to the single edges between $g_{i,j}$ and any $u_{l_i,l_j}^{i,j}$). Likewise, set $S$ cannot contain more than two vertices from the paths between $g'_{i,j}$ and $a_i,b_i,a_j,b_j$, as the maximum sum of distances between any three such vertices is $<3\cdot 6\cdot c^{\sqrt{n}}$. Since $|S|=\sqrt{n}+2{\sqrt{n} \choose 2}$, set $S$ must also contain $\sqrt{n}$ other vertices and due to the distance between any pair of vertices $p_{l}^i,p_o^i$ from the same group $P_i$ being $<4\cdot \sqrt{n}$, there must be one selection from each group $P_i$. Furthermore, for two such selections $p_{l_i}^i,p_{l_j}^j$, the only option for the other two selections (for this pair of groups $i,j$) is to select vertices $g'_{i,j}$ and $u_{l_i,l_j}^{i,j}$, since the distances from $p_{l_i}^i,p_{l_j}^j$ to $u_{l_i,l_j}^{i,j}$ (through $a_i,b_i,a_j,b_j$) will only be equal to $6\cdot c^{\sqrt{n}}$ if these selections (and indices) match, with the only remaining option at distance $6\cdot c^{\sqrt{n}}$ (for any choice of $u_{l_i,l_j}^{i,j}$) being vertex $g'_{i,j}$.  
 \end{proof}
 
 \begin{lemma}\label{td_ETH_bound}
   The tree-depth of $G$ is $2\sqrt{n}+\lceil\log(6\cdot c^{\sqrt{n}})\rceil+1=O(\sqrt{n})$.
  \end{lemma}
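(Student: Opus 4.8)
The plan is to give an elimination forest — a rooted forest whose transitive closure contains $G$ as a subgraph — of the claimed height; equivalently, to use the standard inequality $\td(G)\le|S|+\td(G-S)$ with the modulator $S=\{a_i,b_i:i\in[1,\sqrt{n}]\}$, which has size exactly $2\sqrt{n}$ and which we place on the topmost levels of the forest.

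The first and main step is to check that $G-S$ is a \emph{forest} and to read off its connected components. In $G$ the only vertices of degree at least $3$ other than the $a_i,b_i$ are the vertices $u_{l,o}^{i,j}$ — which, apart from the four paths joining them to $a_i,b_i,a_j,b_j$, carry only the single edge to $g_{i,j}$ — and the vertices $g_{i,j}$ — which carry only edges to various $u_{l,o}^{i,j}$ together with the pendant path towards $g'_{i,j}$. Hence, after suppressing degree-$2$ vertices, $G-S$ is a disjoint union of (subdivided) stars and is therefore acyclic; put differently, every cycle of $G$ meets $S$. Spelling out the components of $G-S$: (i) for each $i$ and each $l$ we obtain the $(a_i,p^i_l)$-path and the $(b_i,p^i_l)$-path glued at $p^i_l$ with the endpoints $a_i,b_i$ deleted, i.e.\ a path on $3\cdot c^{\sqrt{n}}-1$ vertices; and (ii) for each pair $i,j$ we obtain a ``broom'': the centre $g_{i,j}$, a pendant path of $6\cdot c^{\sqrt{n}}-1$ vertices leading to $g'_{i,j}$, and, attached to $g_{i,j}$ by single edges, the vertices $u_{l,o}^{i,j}$, each of which in turn carries four pendant paths — the truncations of its paths to $a_i,b_i,a_j,b_j$ — of at most $5\cdot c^{\sqrt{n}}$ vertices each.

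The second step estimates the tree-depth of these components, using the fact that a path on $N$ vertices has tree-depth at most $\lceil\log_2(N+1)\rceil$. A type-(i) component then has tree-depth at most $\lceil\log_2(3\cdot c^{\sqrt{n}})\rceil$. For a type-(ii) component, root its elimination tree at $g_{i,j}$; below it sit, in parallel, a balanced elimination tree of the pendant $g'_{i,j}$-path, of height at most $\lceil\log_2(6\cdot c^{\sqrt{n}})\rceil$, and, for each $u_{l,o}^{i,j}$, the subdivided star whose elimination tree we in turn root at $u_{l,o}^{i,j}$: deleting that vertex leaves four paths on at most $5\cdot c^{\sqrt{n}}$ vertices, so this star has height at most $1+\lceil\log_2(5\cdot c^{\sqrt{n}}+1)\rceil$. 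Taking the worst case over all components and adding the single level for $g_{i,j}$ gives $\td(G-S)\le\lceil\log(6\cdot c^{\sqrt{n}})\rceil+O(1)$, whence $\td(G)\le 2\sqrt{n}+\lceil\log(6\cdot c^{\sqrt{n}})\rceil+O(1)=O(\sqrt{n})$, since $c$ is a fixed constant; bringing the additive constant down to exactly the value stated is a matter of bookkeeping the precise path lengths ($c^{\sqrt{n}}+l$, $2\cdot c^{\sqrt{n}}-l$, $5\cdot c^{\sqrt{n}}-l$, $4\cdot c^{\sqrt{n}}+l$, $6\cdot c^{\sqrt{n}}-1$) that occur in the construction.

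I expect the only genuine obstacle to be the first step — verifying carefully that deleting $\{a_i,b_i:i\in[1,\sqrt{n}]\}$ destroys every cycle, i.e.\ that the vertices $u_{l,o}^{i,j}$, $g_{i,j}$, $g'_{i,j}$ together with the subdividing path vertices induce nothing but disjoint subdivided stars — together with the low-level path-length bookkeeping needed for the exact constant. Once the forest structure is established the tree-depth estimates are routine, and the only property actually used afterwards, namely $\td(G)=O(\sqrt{n})$ (so that $2^{O(\td^2)}=2^{O(n)}$ and an ETH-tight lower bound follows), is immediate.
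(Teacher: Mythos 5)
Your proof follows essentially the same route as the paper's: delete the $2\sqrt{n}$ vertices $\{a_i,b_i\}$, observe that the remainder is a forest of paths and trees hanging off the $g_{i,j}$, and bound each component's tree-depth logarithmically via the $\lceil\log_2(N+1)\rceil$ bound for paths. If anything, your explicit elimination tree for the ``broom'' components (rooting at $g_{i,j}$, then at each $u_{l,o}^{i,j}$) is more careful than the paper's one-line appeal to the root-to-leaf distance, and it correctly isolates that only the $O(\sqrt{n})$ bound is ever used downstream.
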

 \begin{proof}
  We again employ the alternative definition of tree-depth used in Lemma \ref{lem:td-diam}. Consider graph $G$ after removal of all vertices $a_i,b_i,\forall i\in[1,\sqrt{n}]$. The graph now consists of $\sqrt{n}\cdot c^{\sqrt{n}}$ paths of length $<3\cdot c^{\sqrt{n}}$ through each vertex in $P_i$ and ${\sqrt{n}\choose 2}$ trees, considered rooted at each vertex $g_{i,j}$. The maximum distance in each such tree between a leaf and its root is $6\cdot c^{\sqrt{n}}-1$ (for vertex $g'_{i,j}$) and the claim then follows, as paths of length $n$ have tree-depth exactly $\lceil\log(n+1)\rceil$ (this can be shown by repeatedly removing the middle vertex of each path). By the definition of tree-depth, after removal of $2\sqrt{n}$ vertices from $G$, the maximum tree-depth of each resulting disconnected component is $\lceil\log(6\cdot c^{\sqrt{n}})\rceil=\lceil\sqrt{n}\cdot\log(c)+\log(6)\rceil$.  
 \end{proof}

 \begin{theorem}\label{td_ETH_thm}
  If unweighted \dS\ can be solved in $2^{o(\td^2)}\cdot n^{O(1)}$ time, then \textsc{3-SAT} can be solved in $2^{o(n)}$ time.
 \end{theorem}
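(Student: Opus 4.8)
The plan is to argue by contraposition, feeding the hypothetical algorithm into the reduction just described. Given a \textsc{3-SAT} instance $\phi$ on $n$ variables, the Sparsification Lemma lets us assume $m=O(n)$, and the construction produces, in time polynomial in its output size, a graph $G$ together with $d=6\cdot c^{\sqrt n}$ such that, by Lemmas \ref{td_ETH_FWD} and \ref{td_ETH_BWD}, $G$ has a $d$-scattered set of size $n$ if and only if $\phi$ is satisfiable, and, by Lemma \ref{td_ETH_bound}, $\td(G)=O(\sqrt n)$. Thus any algorithm for unweighted \dS\ decides satisfiability of $\phi$.

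The first real step is to bound the size of $G$. Each $P_i$ contributes at most $c^{\sqrt n}$ vertices over $\sqrt n$ groups, there are at most $\binom{\sqrt n}{2}c^{2\sqrt n}$ connector vertices $u_{l,o}^{i,j}$, and every edge of the ``weighted skeleton'' is subdivided into a path of length $O(c^{\sqrt n})$; hence $|V(G)|=2^{O(\sqrt n)}$, and moreover $d=6\cdot c^{\sqrt n}$ is written down using only $O(\sqrt n)$ bits. Consequently the factor hidden in $n^{O(1)}$ in the assumed running time, being polynomial in the size of the produced \dS\ instance, is itself $2^{O(\sqrt n)}=2^{o(n)}$.

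Combining the bounds, running the assumed $2^{o(\td^2)}\cdot n^{O(1)}$-time algorithm on $(G,n,d)$ takes time
\[
2^{o(\td(G)^2)}\cdot |V(G)|^{O(1)} \;=\; 2^{o((\sqrt{n})^2)}\cdot 2^{O(\sqrt n)} \;=\; 2^{o(n)},
\]
since $\td(G)^2=O(n)$. Adding the polynomial time spent constructing $G$, we solve every sparse \textsc{3-SAT} instance in $2^{o(n)}$ time, and by the Sparsification Lemma this gives a $2^{o(n)}$-time algorithm for \textsc{3-SAT} in general, which is the claimed statement (equivalently, it contradicts the ETH).

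The main obstacle is bookkeeping rather than a new idea: one must pick the grouping into $\sqrt n$ blocks and the constant $c$ so that $\td(G)=\Theta(\sqrt n)$ precisely (Lemma \ref{td_ETH_bound}), and then check that the unavoidable exponential blow-up of the construction remains $2^{O(\sqrt n)}$, so that raising it to the $O(1)$ power coming from the ``$n^{O(1)}$'' factor does not swamp the $2^{o(n)}$ savings obtained from $2^{o(\td^2)}$. In other words, the delicate point is that the reduction is genuinely \emph{quadratic} in the parameter (we have $\td\sim\sqrt n$ while the instance itself has size $2^{\Theta(\sqrt n)}$), which is exactly why the $\td^2$ in the exponent of Theorem \ref{thm:td-alg} is tight.
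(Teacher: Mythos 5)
Your proposal is correct and follows essentially the same route as the paper: apply the construction, invoke Lemmas \ref{td_ETH_FWD}, \ref{td_ETH_BWD} and \ref{td_ETH_bound}, and observe that the total time is $2^{o((\sqrt{n})^2)}$ times factors that are $2^{O(\sqrt{n})}$. Your explicit bookkeeping that the constructed instance has size $2^{O(\sqrt{n})}$, so the $n^{O(1)}$ factor (polynomial in the \emph{instance} size) stays $2^{o(n)}$, is a point the paper's proof leaves implicit, but the argument is the same.
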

 \begin{proof}
  Suppose there is an algorithm for \dS\ with running time $2^{o(\td^2)}$. Given an instance $\phi$ of \textsc{3-SAT}, we use the above construction to create an instance $[G=(V,E)]$ of \dS with $d=6\cdot c^{\sqrt{n}}$, in time $O(\sqrt{n}\cdot c^{\sqrt{n}}+c^{2\sqrt{n}})$. As, by Lemma \ref{td_ETH_bound}, we have $\td(G)\le O(\sqrt{n})$, using the supposed algorithm for \dS\ we can decide whether $\phi$ has a satisfying assignment in time $2^{o(\td^2)}\cdot n^{O(1)}=2^{o(n)}$.  
 \end{proof}

\section{Treewidth Revisited: FPT-AS}\label{sec_tw_approx} Here we
present an FPT approximation \emph{scheme} (FPT-AS) for \dS\ parameterized by
$\tw$. Given as input an edge-weighted graph $G=(V,E)$, $k\in\mathbb{N}^+$,
$d\ge2$ and an arbitrarily small error parameter $\epsilon>0$, our algorithm is
able to return a set $K$, such that any $v,u\in K$ are
at distance $d(v,u)\ge\frac{d}{1+\epsilon}$, in time
$O^*((\tw/\epsilon)^{O(\tw)})$, if $G$ has a $d$-scattered set of size $|K|$.

Our algorithm makes use of a technique introduced in \cite{Lampis14} (see also
\cite{AngelBEL16,KatsikarelisLP17}) for approximating problems that
are W-hard by treewidth. If the hardness of the problem arises from the need of the dynamic programming table to store $\tw$ large numbers (in our
case, the distances of the vertices in the bag from the closest selection), we
can significantly speed up the algorithm by replacing all values by the closest
integer power of $(1+\delta)$, for some appropriately chosen $\delta$, thus
reducing the table size from $d^\tw$ to $(\log_{(1+\delta)}d)^\tw$.
Of course, the calculations may result in values that are not integer powers of
$(1+\delta)$ that will thus have to be ``rounded'' to maintain the table size.
This might introduce the accumulation of rounding errors, yet we are able to show
that the error on any rounded value can be bounded by a function of the height
of its corresponding bag and then make use of a theorem from \cite{BodlaenderH98} stating
that any tree decomposition can be balanced so that its width remains almost
unchanged and its total height becomes $O(\log n)$.

The rounding technique as applied in \cite{Lampis14} employs randomization and
an extensive analysis to procure the bounds on the propagation of error, while
we only require a deterministic adaptation of the rounding process without
making use of the advanced machinery there introduced, as for our particular
case, the bound on the rounding error can be straightforwardly
obtained.  The main tool we require is the following definition of an
addition-rounding operation, denoted by $\oplus$:
for two non-negative numbers $x_1,x_2$, we
define $x_1\oplus x_2\coloneqq0$, if $x_1=x_2=0$. Otherwise, we set $x_1\oplus
x_2\coloneqq(1+\delta)^{\lfloor\log_{(1+\delta)}(x_1+x_2)\rfloor}$.

The integers we would like to approximately store are the states $s_j\in[1,d-1]$, representing the distance of a vertex $v_j$ in bag $X_i$ of the tree decomposition to the closest selection in the $d$-scattered set $K$, during computation of the dynamic programming algorithm.
Let $\Sigma_{\delta}\coloneqq\{0\}\cup\{(1+\delta)^l|l\in\mathbb{N}\}$. Intuitively, $\Sigma_\delta$ is the set of rounded states that our modified algorithm may use. Of course, $\Sigma_\delta$ as defined is infinite, but we will only consider the set of values that are at most $d$, denoted by $\Sigma_\delta^d$. In this way,
the size of $\Sigma_\delta^d$ is reduced to $\log_{(1+\delta)}(d)$, that
for $\delta=\frac{\epsilon}{O(\log n)}$, gives $|\Sigma_\delta^d|=O(\log(d)\log(n)/\epsilon)$ and we then rely on the well-known win-win parameterized argument given in Section \ref{sec_prelim} (Lemma~\ref{lem:fpt-logn}) to get a running time of $O^*((\tw/\epsilon)^{O(\tw)})$.

\paragraph{Modifications:} Our approximation algorithm will be a modification of the exact dynamic programming for \dS, given in Section \ref{sec_tw_dp}. For the approximation algorithm, we will make use of an adaptation of this algorithm of Theorem \ref{tw_DP_THM}, that works for the maximization version of the problem instead of the counting version (albeit not optimally). We first describe the necessary modifications to the counting version and then the subsequent changes for use of our rounded values.

The algorithm for the maximization version needs the following changes: for a leaf node $i$ we set $D_i[s_0]\coloneqq1$, if $s_0=0$, and 0 otherwise. For an introduce node $i$, we also add a $+1$ to the values of previously computed entries if $s_{t+1}=0$ and the same conditions hold as in the counting version, while a value of 0 for invalid state-representations is substituted by an arbitrarily large negative value $-\infty$. For forget nodes $i$ we now compare all previous partial solutions to retain the maximum over all states of the forgotten vertex, instead of computing their sum, while for join nodes, we also substitute taking the sum by taking the maximum, with multiplication also substituted by addition of entries from the previous tables (i.e.\ we move our computations from the sum-product ring to the max-sum semiring), as well as subtracting from each such computation the number of vertices of zero state for the given entry (that would be counted twice).

We next explain the necessary modifications to the exact algorithm for use of the rounded states $\sigma\in\Sigma_\delta^d$. Consider a node $i$ introducing vertex $v_{t+1}$: for a new entry to describe a proper extension to some previously computed partial solution, if the new vertex is of state $s_{t+1}\in[1,d-1]$ in the new entry, then there must be some vertex $v_j\in X_i$, such that $s_{t+1}\le d(v_{t+1},v_j)+s_j$ (the one for which this sum is minimized), i.e.\ we require that the new state of the introduced vertex matches its distance to some other vertex in the bag plus the state of that vertex (being the one responsible for connecting $v_{t+1}$ to the partial solution). The rounded state $\sigma_{t+1}$ for $v_{t+1}$ must then satisfy: $\sigma_{t+1}\le d(v_{t+1},v_j)\oplus \sigma_j$.

Further, states are now considered low if $0<\sigma\le\frac{\lfloor d/2\rfloor}{(1+\epsilon)}$, while, from a set of already computed states $\sigma'$, the symmetrical (around $d/2$) state $\bar{\sigma}$ for a given low state $\sigma$ is defined as the minimum state $\sigma'$ for which $\sigma+\sigma'\ge\frac{d}{(1+\epsilon)}$. Thus, for a node introducing vertex $v_{t+1}$ with state $\sigma_{t+1}=0$, we require that $\forall v_j\in X_{i-1}$ with $\sigma_j=0$, it is $d(v_{t+1},v_j)\ge\frac{d}{(1+\epsilon)}$, and $\forall v_j\in X_{i-1}$ with $d(v_{t+1},v_j)\le\frac{\lfloor d/2\rfloor}{(1+\epsilon)}$, it is $\sigma_j\le d(v_{t+1},v_j)$ and $\sigma'=\bar{\sigma}$ for $D_i[\sigma_0,\dots,\sigma_{t+1}]\coloneqq D_{i-1}[\sigma'_0,\dots,\sigma_t']+1$. Finally, for join nodes, we arbitrarily choose the computed states for the table of one of the children nodes to represent the new entries and again use $\bar{\sigma}$ to identify the symmetrical of each low state (from the other node's table).


Moreover, we require that the tree decompositions on which our algorithm
is to be applied are rooted and of maximum depth $O(\log n)$. In \cite{BodlaenderH98} (Lemma 1), it is shown that
any tree decomposition of width $\tw$ can be converted to a rooted and binary
tree decomposition of depth $O(\log n)$ and width at most $3\tw+2$
in $O(\log n)$ time and $O(n)$ space. The following lemma employs the
transformation to bound the error of any value calculated in this way, based on
an appropriate choice of $\delta$ and therefore set $\Sigma_\delta^d$ of
available values, by relating the rounded states $\sigma$ computed at any node to the states $s$ that the exact algorithm would use at the same node instead.

\begin{lemma}\label{approx_bound_lem}
 Given $\epsilon$ and a tree decomposition $(\mathcal{X},T)$ with $T=(I,F),\mathcal{X}=\{X_i|i\in I\}$, where $T$ is rooted, binary and of depth $O(\log n)$, there exists a constant $C$, such that for all rounded states $\sigma_j\in\Sigma_\delta^d$ it is $\sigma_j\ge\frac{s_j}{(1+\epsilon)},\forall v_j\in X_i,\forall i\in I$, where $\delta=\frac{\epsilon}{C\log n}$.
\end{lemma}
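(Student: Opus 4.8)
The plan is to run the rounded dynamic program defined above ``in parallel'' with the exact (non-rounded) maximization algorithm obtained from Theorem~\ref{tw_DP_THM}, on the \emph{same} tree decomposition --- which, by the lemma of \cite{BodlaenderH98} invoked just above, we may assume rooted, binary, and of depth at most $c_0\log n$ for an absolute constant $c_0$ --- forcing the two executions to index into corresponding entries at every node. I would then prove, by induction on the height $h(i)$ of a node $i$ in $T$, that for every $v_j\in X_i$ the rounded state $\sigma_j$ and the exact state $s_j$ used at that node satisfy $\sigma_j\ge s_j/(1+\delta)^{c'\,h(i)}$, where $c'$ is an absolute constant. Applied at the root this gives $\sigma_j\ge s_j/(1+\delta)^{c'c_0\log n}$; choosing $C$ a sufficiently large constant (for instance $C=2c'c_0$ when $\epsilon\le 1$, which we may assume since a larger $\epsilon$ only weakens the guarantee) we have $\delta=\epsilon/(C\log n)$ and hence $(1+\delta)^{c'c_0\log n}\le e^{c'c_0\epsilon/C}=e^{\epsilon/2}\le 1+\epsilon$, the last step because $\ln(1+\epsilon)\ge \epsilon/2$ on $(0,1]$. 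This yields $\sigma_j\ge s_j/(1+\epsilon)$ at every node, which is the statement.

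The one quantitative ingredient is a per-operation distortion bound. From the definition of the addition-rounding, $\tfrac{x_1+x_2}{1+\delta}<x_1\oplus x_2\le x_1+x_2$, so each use of $\oplus$ costs at most one factor of $(1+\delta)$; similarly the symmetrical state $\bar\sigma$, being $\tfrac{d}{1+\epsilon}-\sigma$ rounded \emph{up} to the next element of $\Sigma_\delta^d$, exceeds the true complementary value by at most a factor $(1+\delta)$, so replacing the exact complement by $\bar\sigma$ only relaxes a consistency check. Hence every node applies at most $c'=O(1)$ roundings on top of the states it inherits from its child (or children).

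For the induction, the base case is a leaf node, where the only relevant state is $s_0=\sigma_0=0$ and there is nothing to show. A forget node alters no surviving state, so the hypothesis is inherited verbatim. For an introduce node adding $v_{t+1}$ with $s_{t+1}\ne 0$, the exact algorithm uses the state $s_{t+1}=\min_{v_l\in X_{i-1}}(d(v_{t+1},v_l)+s_l)$ and the rounded one $\sigma_{t+1}=\min_{v_l}(d(v_{t+1},v_l)\oplus\sigma_l)$; picking the vertex $v_l$ attaining the latter minimum and combining the per-operation bound with the inductive hypothesis $\sigma_l\ge s_l/(1+\delta)^{c'h(i-1)}$ yields
\begin{equation*}
\sigma_{t+1}>\frac{d(v_{t+1},v_l)+\sigma_l}{1+\delta}\ge\frac{d(v_{t+1},v_l)+s_l}{(1+\delta)^{c'h(i-1)+1}}\ge\frac{s_{t+1}}{(1+\delta)^{c'h(i)}},
\end{equation*}
while the states of all other bag vertices are unchanged; the case $s_{t+1}=0$ is analogous, the vertices newly dominated by $v_{t+1}$ acquiring exact state $d(v_{t+1},v_j)$ and rounded state the largest element of $\Sigma_\delta^d$ below it (again one rounding). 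For a join node the resulting state vector is, by construction, that of one child, so those states inherit that child's bound directly, and since the join's height strictly exceeds both children's heights the exponent $c'h(i)$ still dominates; the consistency requirement against the other child is enforced through the $\bar\sigma$'s (which is where the upward rounding is used, for feasibility rather than for the error bound).

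The step I expect to be the crux is the join node: one has to arrange the parallel run so that the exact ``complementary distance'' driving the set $\mathcal{S}_{t'}^{\le}$ in the exact join corresponds to the rounded-up $\bar\sigma$ of the approximate algorithm, verify that this substitution preserves both the entry-to-entry correspondence and the $d/(1+\epsilon)$-feasibility of the recovered solution, and --- the decisive point --- observe that when two subtrees are merged the accumulated multiplicative error is governed by the \emph{larger} of their heights rather than their sum, which is exactly why the $O(\log n)$-depth rebalancing of \cite{BodlaenderH98} is indispensable. Everything else is routine propagation of the rounding error through the four node types.
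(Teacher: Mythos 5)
Your proposal is correct and follows essentially the same route as the paper's proof: a per-$\oplus$ distortion bound of one factor $(1+\delta)$, an induction showing the accumulated multiplicative error after $h$ roundings is at most $(1+\delta)^h$, the observation that $h$ is bounded by the $O(\log n)$ depth of the rebalanced decomposition (the paper inducts on the depth of the chain of $\oplus$-computations and then bounds that by the tree depth, whereas you induct directly on node height --- an equivalent organization), and the choice $\delta=\epsilon/(C\log n)$ to conclude. Your explicit handling of the minimizing vertex at introduce nodes and of $\bar\sigma$ at join nodes is slightly more detailed than the paper's, but adds nothing that changes the argument.
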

\begin{proof}
 First, observe that for any rounded state $\sigma$ calculated using the $\oplus$ operator we have $\sigma\le s$, where $s$ is the state the exact algorithm would use instead. Let $h$ be the maximum depth of the recursive computations of any state $\sigma$ we may require. We now want to show by induction on $h$ that it is always $\log_{1+\delta}(\frac{s}{\sigma})\le h$. For $h=1$ and only one addition $\sigma=0\oplus d_1$, for some distance $d_1$ with $s=0+d_1$, we want $\log_{(1+\delta)}(\frac{s}{\sigma})\le1$. It is indeed $\log_{(1+\delta)}(\frac{s}{\sigma})=\log_{(1+\delta)}(d_1)-\lfloor\log_{(1+\delta)}(d_1)\rfloor\le1$.
 
 For the inductive step, let $\sigma_3=\sigma_2\oplus d_2$ and $s_3=s_2+d_2$ be the final rounded and exact values (at depth $h$), for some distance $d_2$ and previous values $\sigma_2,s_2$ (for $h-1$). It is $\log_{(1+\delta)}(\frac{s_3}{\sigma_3})=\log_{(1+\delta)}(\frac{s_2+d_2}{(1+\delta)^{\lfloor\log_{(1+\delta)}(\sigma_2+d_2)\rfloor}})=\log_{(1+\delta)}(s_2)+\log_{(1+\delta)}(1+\frac{d_2}{s_2})-\lfloor\log_{(1+\delta)}(\sigma_2)+\log_{(1+\delta)}(1+\frac{d_2}{\sigma_2})\rfloor$. This, after removal of the floor function, is $\le \log_{(1+\delta)}(s_2)+\log_{(1+\delta)}(1+\frac{d_2}{s_2})-(\log_{(1+\delta)}(\sigma_2)+\log_{(1+\delta)}(1+\frac{d_2}{\sigma_2}))+1=\log_{(1+\delta)}(s_2)-\log_{(1+\delta)}(\sigma_2)+\log_{(1+\delta)}(1+\frac{d_2}{s_2})-\log_{(1+\delta)}(1+\frac{d_2}{\sigma_2})+1$. The claim then follows, because $\log_{(1+\delta)}(s_2)-\log_{(1+\delta)}(\sigma_2)=\log_{(1+\delta)}(\frac{s_2}{\sigma_2})\le h-1$ by the inductive hypothesis, while also $\log_{(1+\delta)}(1+\frac{d_2}{s_2})-\log_{(1+\delta)}(1+\frac{d_2}{\sigma_2})\le0$, as $\sigma_2\le s_2$.
 
 Thus we have $\log_{(1+\delta)}(\frac{s}{\sigma})\le h$, from which we get $\frac{s}{\sigma}\le(1+\delta)^h$. For $\sigma\ge\frac{s}{(1+\epsilon)}$, we require that $(1+\delta)^h\le(1+\epsilon)$, or $h\le\log_{(1+\delta)}(1+\epsilon)=\frac{\log_2(1+\epsilon)}{\log_2(1+\delta)}$, that gives $h\le\frac{\epsilon}{\delta}$, for $\epsilon,\delta\approx0$, or $\delta\le\frac{\epsilon}{h}$. Next, observe that during the computations of the algorithm, the maximum depth $h$ of any computation can only increase by one if some vertex is introduced in the tree decomposition, as paths to and from it become available. This means no inductive computation we require can be of depth larger than the depth of the tree decomposition $T$, giving $h=C\log n$ for some constant $C$.  
\end{proof}

\begin{theorem}\label{tw_fptas}
 There is an algorithm which, given an edge-weighted instance of \dS\ $[G,k,d]$, a tree decomposition of $G$ of width $\tw$ and a parameter $\epsilon>0$, runs in time $O^*((\tw/\epsilon)^{O(\tw)})$ and finds a $d/(1+\epsilon)$-scattered set of size $k$, if a $d$-scattered set of the same size exists in $G$.
\end{theorem}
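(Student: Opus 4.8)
The plan is to instantiate and analyse the modified rounded dynamic program sketched above. First I would preprocess the input tree decomposition with Lemma~1 of \cite{BodlaenderH98}, obtaining a rooted binary decomposition of width $O(\tw)$ and depth $C\log n$, and then set $\delta=\frac{\epsilon}{C\log n}$, where $C$ is the constant from Lemma~\ref{approx_bound_lem}; this fixes the set $\Sigma_\delta^d$ of admissible rounded states. I would also assume $d=n^{O(1)}$, which is without loss of generality: if $d$ exceeds the diameter of $G$ the problem is trivial, and otherwise one may rescale all edge weights to lie in $\mathrm{poly}(n/\epsilon)$ at the cost of at most one further $(1+\epsilon)$ factor in the pairwise distances. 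Then I run the maximization-plus-rounding variant of the algorithm of Theorem~\ref{tw_DP_THM} bottom-up and recover a set by the usual traceback (this variant need not be exact for the maximization problem: it only ever reports the size of an actual configuration, and we only need it to report at least $k$ when a genuine $d$-scattered set of size $k$ exists).

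For the running time, every table $D_i$ is indexed by a value in $[1,k]$ together with a tuple in $(\Sigma_\delta^d)^{O(\tw)}$, hence has $O^*(|\Sigma_\delta^d|^{O(\tw)})$ entries; the costliest step is a join node, where — since the max-sum semiring admits no subtraction and hence no state-changing shortcut — the rounded analogue of the sum over $\mathcal{S}_{t'}^{\le}$ contributes an extra $2^{O(\tw)}\le|\Sigma_\delta^d|^{O(\tw)}$ factor. As the balanced decomposition has $O(n)$ nodes, the total time is $O^*(|\Sigma_\delta^d|^{O(\tw)})$, and since $|\Sigma_\delta^d|=O(\log(d)\log(n)/\epsilon)=(\log n/\epsilon)^{O(1)}$ this is $O^*((\log n/\epsilon)^{O(\tw)})$; Lemma~\ref{lem:fpt-logn} then upgrades it to $O^*((\tw/\epsilon)^{O(\tw)})$.

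Correctness splits into soundness and completeness, each established by an induction on the nodes that parallels the invariant (\ref{tw_cor_1})--(\ref{tw_cor_6}) in the proof of Theorem~\ref{tw_DP_THM}. For \emph{soundness}, if the root table reports a value at least $k$, the traced-back set $K'$ is a $d/(1+\epsilon)$-scattered set of size at least $k$: the key point is that every rounded state $\sigma$ is obtained by summing true edge lengths and rounding \emph{down} via $\oplus$, so one shows inductively — using, exactly as in the exact algorithm, that an introduced vertex has all its neighbours in the current bag — that $\sigma_j\le d_i(v_j,z)$ for the nearest selection $z\in K'$ in $G_i$, and the relaxed checks of the modified algorithm therefore enforce the same lower bounds on the \emph{true} distances, with $d/(1+\epsilon)$ replacing $d$. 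Since a vertex's distance in its terminal subgraph already equals its distance in $G$ by the time it is forgotten, tracing a hypothetical pair $u,w\in K'$ with $d(u,w)<d/(1+\epsilon)$ along a shortest path to the bag that separates them contradicts the low/high state conditions exactly as in (\ref{tw_cor_3})--(\ref{tw_cor_6}). For \emph{completeness}, if $G$ has a $d$-scattered set $K$ of size $k$, the ``rounded trace'' of $K$ remains a valid partial solution at every node, so the root reports at least $k$: here the true capped distance $s_j$ of each bag vertex to $K$ dominates the rounded state $\sigma_j$ the algorithm would assign, so every check of the form $\sigma_j\le d(v_{t+1},v_j)$ is easier than in the exact run, while Lemma~\ref{approx_bound_lem} gives $\sigma_j\ge s_j/(1+\epsilon)$, which is precisely what makes a vertex genuinely at distance at least $d$ (resp. at least $\lfloor d/2\rfloor$) from $K$ still clear the relaxed threshold $d/(1+\epsilon)$ (resp. $\lfloor d/2\rfloor/(1+\epsilon)$), and makes the symmetric rounded states $\bar\sigma$ used at join nodes line up with the exact symmetric states $d-s$. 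Finally we shrink the recovered set to exactly $k$ vertices.

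The step I expect to be the main obstacle is this completeness argument, and within it the bookkeeping needed to confirm that the two independent sources of slack — the multiplicative rounding error, controlled only up to a single overall $(1+\epsilon)$ factor by Lemma~\ref{approx_bound_lem}, and the deliberately relaxed thresholds $d/(1+\epsilon)$ and $\lfloor d/2\rfloor/(1+\epsilon)$ — never combine to reject a legitimate configuration of a genuine $d$-scattered set; the delicate case is a bag vertex lying at distance $\approx d/2$ from two selections on opposite sides, where a low state and its symmetric partner must still sum to at least $d/(1+\epsilon)$ after rounding. Verifying that one $(1+\epsilon)$ factor of threshold slack absorbs the entire error accumulated over an $O(\log n)$-depth decomposition is exactly what forces the choice $\delta=\epsilon/(C\log n)$ and the preliminary balancing, and controlling the borderline low/high reclassification is the one place that needs genuine care.
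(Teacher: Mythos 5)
Your proposal follows essentially the same route as the paper's proof: balance the decomposition via \cite{BodlaenderH98}, set $\delta=\epsilon/(C\log n)$, run the rounded maximization DP over $\Sigma_\delta^d$, and establish correctness by combining the fact that $\oplus$ rounds down (soundness: rounded states lower-bound true distances) with Lemma \ref{approx_bound_lem} giving $\sigma_j\ge s_j/(1+\epsilon)$ (completeness: genuine $d$-scattered sets survive the relaxed thresholds, i.e.\ the paper's containment $U_i(k,s_0,\dots,s_t)\subseteq U_i(k,\sigma_0,\dots,\sigma_t)\subseteq U_i(k,\frac{s}{1+\epsilon})$), before invoking Lemma \ref{lem:fpt-logn}. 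The extra details you supply --- rescaling so that $d=n^{O(1)}$ and noting that the join node costs an additional $2^{O(\tw)}$ absorbed by the $(\tw/\epsilon)^{O(\tw)}$ bound since subset convolution is unavailable over the max-sum semiring --- are harmless refinements rather than a different approach.
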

\begin{proof}
Naturally, our modified algorithm making use of these rounded values to represent the states will not perform the same computations as the exact version given in Section \ref{sec_tw_dp}. The new statement of correctness, taking into account the approximate values now computed (and the switch to the maximization version), is the following:

\begin{gather}
 \forall i\in I,\forall (\sigma_0,\dots,\sigma_t)\in(\Sigma_{\delta}^d)^{t+1},0\le t<\textrm{tw}:\label{tw_approx_cor_1}\\
 \big\{D_i[\sigma_0,\dots,\sigma_t]=|K|:K\subseteq V_i\setminus X_i\cup\{v_l\in X_i|\sigma_l=0\}: \label{tw_approx_cor_2}\\
 \{\forall u,w\in K:d(u,w)\ge \frac{d}{(1+\epsilon)}\}\wedge\label{tw_approx_cor_3}\\
 \wedge\{\forall v_j\in X_i|0<\sigma_j\le\frac{\lfloor d/2\rfloor}{(1+\epsilon)},\forall u,w\in K|d(u,v_j)\le d(w,v_j):\label{tw_approx_cor_4}\\
 d(u,v_j)\ge \sigma_j\wedge d(w,v_j)\ge \frac{d}{(1+\epsilon)}-\sigma_j \}\wedge \label{tw_approx_cor_5}\\
 \wedge \{\forall v_j\in X_i|\sigma_j>\frac{\lfloor d/2\rfloor}{(1+\epsilon)},\forall u\in K: d(u,v_j)\ge \sigma_j\}\label{tw_approx_cor_6}\vee\\
 \vee D_i[\sigma_0,\dots,\sigma_t]=-\infty\big\}.\label{tw_approx_cor_7}
\end{gather}

In words, the above states that for every node $i$ and all possible state-configurations $(\sigma_0,\dots,\sigma_t)\in(\Sigma_{\delta}^d)^{t+1}$ (\ref{tw_approx_cor_1}), table entry $D_i[\sigma_0,\dots,\sigma_t]$ contains the size of a subset $K$ of $V_i$ (that includes vertices $v_l\in X_i$ of state $s_l=0$) (\ref{tw_approx_cor_2}), such that the distance between every pair of vertices $u,w$ in $K$ is at least $d/(1+\epsilon)$ (\ref{tw_approx_cor_3}), for every vertex $v_j\in X_i$ of low state $\sigma_j\le\lfloor d/2\rfloor/(1+\epsilon)$ and a pair of vertices $u,w$ from $K$ with $u$ closer to $v_j$ than $w$ (\ref{tw_approx_cor_4}), its distance to $u$ is at least equal to its state $\sigma_j$ and its distance to $w$ is at least $d/(1+\epsilon)-\sigma_j$ (\ref{tw_approx_cor_5}), while for every vertex $v_j\in X_i$ of high state $\sigma_j>\lfloor d/2\rfloor/(1+\epsilon)$ and a vertex $u$ from $K$, its state $\sigma_j$ is at most its distance from any vertex $u$ (\ref{tw_approx_cor_6}), or if there is no such $K$, we have $D_i[\sigma_0,\dots,\sigma_t]=-\infty$ for this entry (\ref{tw_approx_cor_7}). This is shown by induction on the nodes $i\in I$:
\begin{itemize}
 \item Leaf node $i$ with $X_i=\{v_0\}$: This is the base case of our induction and the initializing values of 1 for $\sigma_0=0$ and 0 for $\sigma_0>0$ are indeed the correct sizes for $K$.
 \item Introduce node $i$ with $X_i=X_{i-1}\cup\{v_{t+1}\}$: For entries with $0<\sigma_j\le\lfloor d/2\rfloor/(1+\epsilon)$, validity of (\ref{tw_approx_cor_3},\ref{tw_approx_cor_6}) is not affected, while for (\ref{tw_approx_cor_4}-\ref{tw_approx_cor_5}): it is $\sigma_{t+1}=\sigma_j\oplus d(v_{t+1},v_j)$ for some vertex $v_j\in X_{i-1}$, for which, by the induction hypothesis we have that $\sigma_j\le d(u,v_j)$ and $d(w,v_j)\ge d/(1+\epsilon)-\sigma_j$, where $u$ is the closest selection to $v_j$ and $w$ the second closest. To see the same holds for $v_{t+1}$, observe that $\sigma_{t+1}\le d(v_{t+1},v_j)+d(u,v_j)=d(u,v_{t+1})$ and $d(w,v_j)\ge d/(1+\epsilon)-\sigma_j\Rightarrow \sigma_j+d(v_{t+1},v_j)\ge d/(1+\epsilon)-d(w,v_j)+d(v_{t+1},v_j)\Rightarrow \sigma_j+d(v_{t+1},v_j)\ge d/(1+\epsilon)-d(w,v_{t+1})\Rightarrow \sigma_{t+1}\ge d/(1+\epsilon)-d(w,v_{t+1})$.
 
 For entries with $\sigma_{t+1}>\lfloor d/2\rfloor/(1+\epsilon)$, validity of (\ref{tw_approx_cor_3}-\ref{tw_approx_cor_5}) is not affected, while for (\ref{tw_approx_cor_6}): it is $\sigma_{t+1}\le d(v_{t+1},v_j)+\sigma_j$ for some $v_j\in X_{i-1}$, for which we have $\sigma_j\le d(u,v_j)$ and thus also $\sigma_{t+1}\le d(v_{t+1},v_j)+d(v_j,u)=d(v_{t+1},u)$.
 
 For entries with $\sigma_{t+1}=0$, observe that the low states $\sigma_j$ of vertices $v_j\in X_{i-1}$ in the new entry with $d(v_{t+1},v_j)\le\lfloor d/2\rfloor/(1+\epsilon)$ would need to be $\sigma_j\le d(v_{t+1},v_j)$ and also correspond to the minimum high original state $\sigma_j'$ such that $\sigma_j+\sigma'_j\ge d/(1+\epsilon)$, for which partial solution it is $d(v_j,u)\ge \sigma_j',\forall u\in K$ and thus $d(v_j,u)\ge d/(1+\epsilon)-\sigma_j$ (\ref{tw_approx_cor_4}-\ref{tw_approx_cor_5}). For high states $\sigma_j$ of vertices $v_j\in X_{i-1}$, it is $d(v_{t+1},v_j)\ge \sigma_j$ (\ref{tw_approx_cor_6}) and finally, for (\ref{tw_approx_cor_3}), if there was some $u\in K$ such that $u\notin X_{i-}$ and $d(u,v_{t+1}<d/(1+\epsilon)$, then there must be some $v_j\in X_{i-1}$ of new state $\sigma_j$ and previous state $\sigma'_j$ (on the path between $u$ and $v_{t+1}$) for which $\sigma_j+\sigma'_j<d/(1+\epsilon)$, contradicting the requirement for introduction of $v_{t+1}$ with $\sigma_{t+1}=0$: it is $d(u,v_{t+1})<d/(1+\epsilon)\Rightarrow d(u,v_j)+d(v_{t+1},v_j)<d/(1+\epsilon)\Rightarrow \sigma_j+\sigma'_j<d/(1+\epsilon)$, as it must be $\sigma'_j\le d(u,v_j)$ and also $\sigma_j\le d(v_{t+1},v_j)$. 
 \item Forget node $i$ with $X_i=X_{i-1}\setminus\{v_{t+1}\}$: No modification to the exact dynamic programming affects the correctness of (\ref{tw_approx_cor_1}-\ref{tw_approx_cor_6}), as, the right number is indeed the maximum over all states for $v_{t+1}$.
 \item Join node $i$ with $X_i=X_{i-1}=X_{i-2}$: For (\ref{tw_approx_cor_3}), if there was a pair $u\in K\cap V_{i-1}\setminus X_{i-1}$, $w\in K\cap V_{i-2}\setminus X_{i-2}$ with $d(u,w)<d/(1+\epsilon)$, then there must be some vertex $v_j\in X_i$ (on the path between the two) for which $\sigma_j+\bar{\sigma_j}<d/(1+\epsilon)$ (as above). For (\ref{tw_approx_cor_4}-\ref{tw_approx_cor_6}), observe that for vertices of low state $\sigma_j$, lines (\ref{tw_approx_cor_4}-\ref{tw_approx_cor_5}) must have been true for either $i-1$ or $i-2$ and (\ref{tw_approx_cor_6}) for the other, while for vertices $v_j$ of high state $\sigma_j$, it again suffices that (\ref{tw_approx_cor_6}) must have been true for both.
\end{itemize}


For a node $i\in I$, let $U_i(k,s_0,\dots,s_t)=\{K\subseteq V_i|K\cap X_i= \{v_j\in\ X_i|s_j=0\}\}$ be the set of all $d$-scattered sets in $G_i$ of size $k$ for this state-configuration $(s_0,\dots,s_t)$ (as in the proof of Theorem \ref{tw_DP_THM}), $U_i(k,\sigma_0,\dots,\sigma_t)=\{K\subseteq V_i|K\cap X_i= \{v_j\in\ X_i|\sigma_j=0\}$ be the set of all subsets $K$ of $V_i$ of size $k$ for the rounded state-configuration $(\sigma_0,\dots,\sigma_t)$ (computed by our approximation algorithm) and $U_i(k,\frac{s}{1+\epsilon})$ be the set of all $d/(1+\epsilon)$-scattered sets of size $k$ in $G_i$. Consider a set $K\in U_i(k,s_0,\dots,s_t)$ and let $(\sigma_0,\dots,\sigma_t)$ be the state-configuration resulting from rounding each $s_j$ down to its closest integer power of $(1+\delta)$, or $\sigma_j=(1+\delta)^{\lfloor\log_{(1+\delta)}(s_j)\rfloor},\forall j\in[0,t]$. As $|K|=k$ and for any pair $u,w\in K$, we have $d(u,w)\ge d>\frac{d}{(1+\epsilon)}$, we want to show that the requirements of $(\sigma_0,\dots,\sigma_t)$ also hold for $K$.
By Lemma \ref{approx_bound_lem}, we know that $s_j\ge\sigma_j\ge\frac{s_j}{(1+\epsilon)}$ for all $j\in[0,t]$. Now, for each $\sigma_j\in (\sigma_0,\dots,\sigma_t)$, $\sigma_j\le s_j$ gives $\sigma_j\le d(u,v_j)$ for the closest $u\in K$ to $v_j$, while if also $\sigma_j\le\frac{\lfloor d/2\rfloor}{(1+\epsilon)}$, then $\frac{s_j}{(1+\epsilon)}\le\frac{\lfloor d/2\rfloor}{(1+\epsilon)}$ and $s_j\le\lfloor d/2\rfloor$ (i.e.\ $s_j$ is also low) and we have that $d-s_j\le d(w,v_j)$ for $w\in K$ being the second closest to $v_j$, from which we get $\frac{d}{(1+\epsilon)}-\frac{s_j}{(1+\epsilon)}\le \frac{d(w,v_j)}{(1+\epsilon)}\Rightarrow\frac{d}{(1+\epsilon)}-\sigma_j\le \frac{d(w,v_j)}{(1+\epsilon)}\le d(w,v_j)$, i.e.\ state-configuration $(\sigma_0,\dots,\sigma_t)$ also holds for set $K$. This means $K\in U_i(k,\sigma_0,\dots,\sigma_t)$ and thus we have $U_i(k,s_0,\dots,s_t)\subseteq U_i(k,\sigma_0,\dots,\sigma_t)$. Further, since for any $K$ our approximation algorithm will compute, it is $d(u,w)\ge d/(1+\epsilon), \forall u,w\in K$, we also have $U_i(\sigma_0,\dots,\sigma_t)\subseteq U_i(k,\frac{s}{(1+\epsilon)})$. Due to these considerations, if a $d$-scattered set of size $k$ exists in $G$, our algorithm will be able to return a set $K$ with $|K|=k$, that will be a $d/(1+\epsilon)$-scattered set of $G$. 

The algorithm is then the following: first, according to the statement of Lemma \ref{approx_bound_lem}, we select $\delta=\frac{\epsilon}{C\log n}$, that we use to define set $\Sigma_{\delta}^d$ and then we use the algorithm of Theorem \ref{tw_DP_THM}, modified as described above, on the bounded-height transformation of nice tree decomposition $(\mathcal{X},T)$. Correctness of the algorithm and justification of the approximation bound are given above, while the running time crucially depends on the size of $\Sigma_{\delta}^d$ being $|\Sigma_{\delta}^d|=O(\log_{(1+\delta)}d)=O(\frac{\log d}{\log(1+\delta)})=O(\frac{\log d}{\delta})$, where we used the approximation $\log(1+\delta)\approx\delta$ for sufficiently small $\delta$ (i.e.\ sufficiently large $n$). This gives $O(\log n/\epsilon)^{O(\tw)}$ and the statement is then implied by Lemma \ref{lem:fpt-logn}.
 
As a final note, observe that due to the use of the $\lfloor\cdot\rfloor$ function in the definition of our $\oplus$ operator, all our values will be rounded \emph{down}, in contrast to the original version of the technique (from \cite{Lampis14}), where depending on a randomly chosen number $\rho$, the values could be rounded either down or up. This means there will be some value $x$, such that $x\oplus1=x$, or $(1+\delta)^x=(1+\delta)^{\lfloor\log_{(1+\delta)}(x+1)\rfloor}$ (we would have $x\approx1/\delta$). One may be tempted to conceive of a pathological instance consisting of a long path on $n$ vertices and $d>>x$, along with a simple path decomposition for it (that is essentially of the same structure), where the computations for each rounded state $\sigma$ would ``get stuck'' at this value $x$. In fact, the transformation of \cite{BodlaenderH98} would give a tree decomposition of height $O(\log n)$ for this instance, whose structure would be the following: the leaf nodes would correspond to one vertex of the path each, while at (roughly) each height level $i$, sub-paths of length $2^i$ would be joined together. Thus each join node $t$ that corresponds to some sub-path of length $2^i$ (let $X_t=\{a,b,c,d\}$) would have two child branches, consisting of two forget nodes, two introduce nodes and a previous join node on each side (let these be $t-1,t-2$), computing sub-paths of length $2^{i-1}$ (with $X_{t-1}=\{a,a',b',b\}$ and $X_{t-2}=\{c,c',d',d\}$). The vertices forgotten at each branch would be the middle vertices of the sub-path of length $2^{i-1}$ already computed at the previous join node of this branch (i.e.\ $a',b'$ for the $t-1$ side and $c',d'$ for the other), while the introduced vertices would be the endpoints of the sub-path of length $2^{i-1}$ computed at the other branch attached to this join node (i.e.\ $c,d$ for the $t-1$ side and $a,b$ for the other). In this way, in each branch (and partial solution) there will be one vertex ($c$ or $b$) for which the rounded state would need to be $\le1\oplus$ the rounded state $\sigma$ of some neighbor ($b$ or $c$) and one vertex ($d$ or $a$) for which the $\oplus$ operator would be applied between the state $\sigma$ of some non-adjacent vertex ($a,b$ or $c$ for the $t-1$ side and $c,d$ or $b$ for the other) and their distance (e.g.\ $d(b,d)$ and $d(c,a)$), these being at least $2^{i-1}$. In this way, the algorithm will not have to compute any series of rounded states sequentially by $\oplus1$ and as, by Lemma \ref{approx_bound_lem}, we have that for all nodes $i\in I$ and vertices $v_j\in X_i$, it is $\sigma_j\ge\frac{s_j}{(1+\epsilon)}$, for all $\sigma_j\in\Sigma_{\delta}^d$, the rounded states used by the algorithm for these introduce/join nodes will never be more than a factor of $(1+\epsilon)$ from the ones used by the exact algorithm on the same tree decomposition.  
\end{proof}

\section{Conclusion}
In this paper we considered the \dS\ problem, a distance-based generalization of \textsc{Independent Set}. We focus on structural parameterization, due to the problem's well-investigated hardness and inapproximability. In particular, we give tight fine-grained bounds on the complexity of \dS\ with respect to the well-known graph parameters treewidth $\tw$, tree-depth $\td$, vertex cover $\vc$ and feedback vertex set $\fvs$:
\begin{itemize}
 \item A Dynamic Programming algorithm of running time $O^*(d^{\tw})$ and a matching lower bound based on the SETH, that generalize known results for \textsc{Independent Set}.
 \item W[1]-hardness for parameterization by $\vc+k$ for edge-weighted graphs, as well as by $\fvs+k$ for unweighted graphs, while these are complemented by an FPT-time algorithm for $\vc$ and the unweighted case.
 \item An algorithm solving the problem for unweighted graphs in time $O^*(2^{O(\td)^2})$ and a matching ETH-based lower bound.
 \item An algorithm computing for any $\epsilon>0$ a $d/(1+\epsilon)$-scattered set in time $O^*((\tw/\epsilon)^{O(\tw)})$, if a $d$-scattered set exists in the graph, assuming a tree decomposition of width $\tw$ is provided along with the input.
\end{itemize}
Remaining open questions on the structurally parameterized complexity of the problem concern the identification of similarly tight upper and lower (SETH-based) bounds for \dS\ parameterized by the related parameter clique-width, as well as the sharpening of our ETH-based lower bounds for $\vc$ and $\fvs$, that are not believed to be tight due to the quadratic blow-up in parameter size in our reductions.

\bibliography{d_S_citations}

\end{document}